\newtheoremstyle{proofs}{}{}{}{}{\scshape}{.}{ }{}
\theoremstyle{proofs}
\newtheorem{theorem}{\hspace{1em}Theorem}
\newtheorem{lemma}{\hspace{1em}Lemma}
\newenvironment{proofsketch}{\begin{proof}}{\end{proof}}
\theoremstyle{definition}
\newtheorem*{problem}{Problem statement}
\newtheorem*{detail}{Implementation details}
\theoremstyle{remark}
\let\savedbaselinestretch\baselinestretch
\setlist[itemize]{leftmargin=2em}
\setlist[enumerate]{leftmargin=2em}
\begin{document}
	
	\author{Donghang Cui}
	\affiliation{%
		\institution{Beijing Institute of Technology}
		\country{}
	}
	\email{cuidonghang@bit.edu.cn}
	
	\author{Ronghua Li}
	\affiliation{%
		\institution{Beijing Institute of Technology}
		\country{}
	}
	\email{lironghuabit@126.com}
	
	\author{Qiangqiang Dai}
	\affiliation{%
		\institution{Beijing Institute of Technology}
		\country{}
	}
	\email{qiangd66@gmail.com}
	
	\author{Hongchao Qin}
	\affiliation{%
		\institution{Beijing Institute of Technology}
		\country{}
	}
	\email{qhc.neu@gmail.com}
	
	\author{Guoren Wang}
	\affiliation{%
		\institution{Beijing Institute of Technology}
		\country{}
	}
	\email{wanggrbit@gmail.com}
	
	\title{On the Efficient Discovery of Maximum $k$-Defective Biclique}
	
	\begin{abstract}
		
		The problem of identifying the maximum edge biclique in bipartite graphs has attracted considerable attention in bipartite graph analysis, with numerous real-world applications such as fraud detection, community detection, and online recommendation systems. However, real-world graphs may contain noise or incomplete information, leading to overly restrictive conditions when employing the biclique model. To mitigate this, we focus on a new relaxed subgraph model, called the $k$-defective biclique, which allows for up to $k$ missing edges compared to the biclique model. We investigate the problem of finding the maximum edge $k$-defective biclique in a bipartite graph, and prove that the problem is NP-hard. To tackle this computation challenge, we propose a novel algorithm based on a new branch-and-bound framework, which achieves a worst-case time complexity of $O(m\alpha_k^n)$, where $\alpha_k < 2$. We further enhance this framework by incorporating a novel pivoting technique, reducing the worst-case time complexity to $O(m\beta_k^n)$, where $\beta_k < \alpha_k$. To improve the efficiency, we develop a series of optimization techniques, including graph reduction methods, novel upper bounds, and a heuristic approach. Extensive experiments on 10 large real-world datasets validate the efficiency and effectiveness of the proposed approaches. The results indicate that our algorithms consistently outperform state-of-the-art algorithms, offering up to $1000\times$ speedups across various parameter settings.

	\end{abstract}

	\maketitle

	\section{Introduction}
	
	Bipartite graph is a commonly encountered graph structure consisting of two disjoint vertex sets and an edge set, where each edge connects vertices from the two different sets. In recent decades, bipartite graphs have been utilized as a fundamental tool for modeling various binary relationships, such as user-product interactions in e-commerce \cite{RN50}, authorship connections between authors and publications \cite{RN34}, and associations between genes and phenotypes in bioinformatics \cite{RN39,RN38}. Mining cohesive subgraphs from bipartite graphs has proven to be critically important across diverse domains \cite{RN66,RN67}. For instance, identifying cohesive subgraphs in reviewer-product graphs facilitates the identification of fraudulent reviews or camouflage attacks \cite{RN65}.
	
	
	The biclique, which connects all vertices on both sides, serves as a fundamental cohesive subgraph structure \cite{RN80,RN81,RN15} due to its wide applications \cite{RN68,RN58,RN57,RN69}. 
	Extensive methods for maximum biclique search have been developed in recent years \cite{RN47,RN58,RN66,RN14,RN22,RN13}. 
	However, real-world graphs often contain noisy or incomplete information \cite{RN70}, making the biclique model overly restrictive for capturing cohesive subgraphs on such a kind of bipartite graphs. 	
	More relaxed biclique models offer a promising alternative to address this challenge. The $k$-biplex model, which allows each vertex to have at most $k$ non-neighbors, has recently attracted significant attention \cite{RN66,RN71,RN72}. However, due to its excessive structural relaxation, a slightly increase in $k$ can lead to: (1) a sharp rise in quantity, potentially reducing the mining efficiency; and (2) a significant decrease in density, undermining the capability to represent cohesive communities.
	
	
	\begin{figure*}[!t]

		\setlength{\belowcaptionskip}{-0.3cm} 
		\subfloat[\centering{Example graph showcasing a biclique and two $k$-defective cliques}]{
			\includegraphics[width=.3\linewidth]{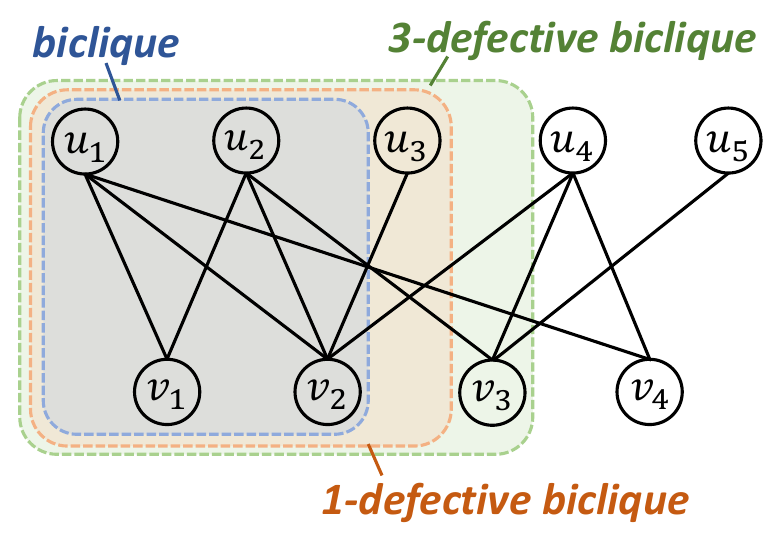}
			\label{fig:eg1a}
		}\hspace{5mm}
		\subfloat[\centering{Example of incorrect output from $k$-defective clique algorithms}]{
			\includegraphics[width=.28\linewidth]{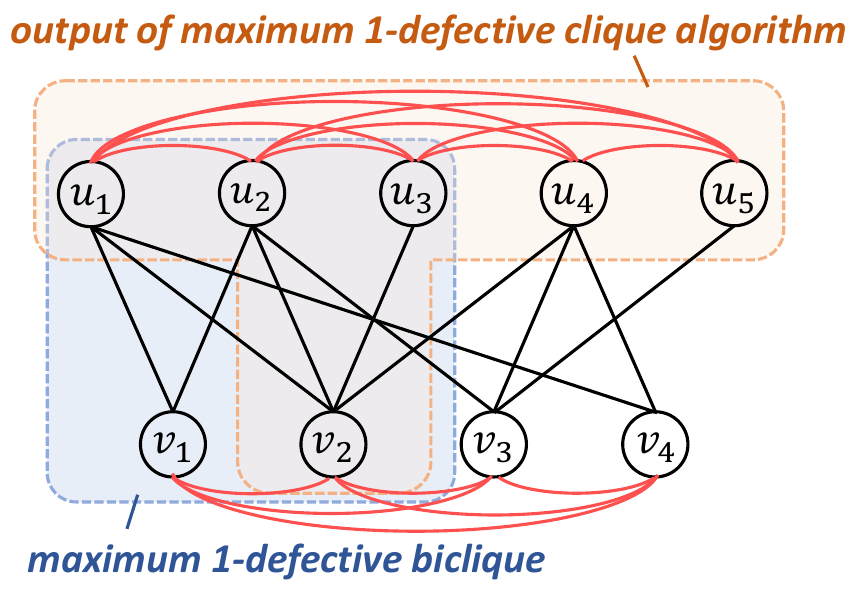}
			\label{fig:eg1b}
		}\hspace{5mm}
		\subfloat[\centering{Comparison of quantity and density between $k$-biplex and $k$-defective biclique on \textit{Amazon} dataset with side size $\ge14$}]{
			\includegraphics[width=.32\linewidth, height=3.5cm]{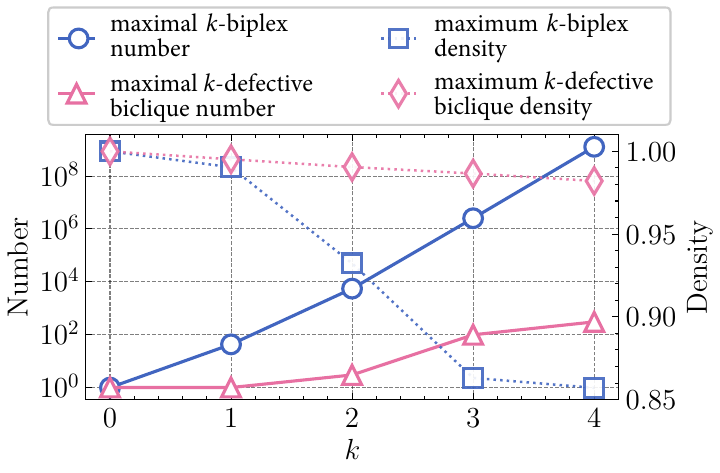}
			\label{fig:eg1c}
		}
		\caption{\centering {Illustrations of $k$-defective biclique: examples and comparisons}}
		\label{fig:eg1}
	\end{figure*}

	Inspired by the widely studied $k$-defective clique in traditional graphs \cite{RN8,RN6,RN2}, we introduce a novel cohesive subgraph model for bipartite graphs, named $k$-defective biclique. Specifically, a $k$-defective biclique is a subgraph of a bipartite graph containing at most $k$ missing edges. Fig. \ref{fig:eg1a} illustrates an example graph showcasing two $k$-defective bicliques, where the $1$-defective biclique has one missing edge $(u_3,v_1)$, and the $3$-defective biclique has three missing edges $(u_1,v_3)$, $(u_3,v_1)$ and $(u_3,v_3)$.
	
	The $k$-defective biclique offers greater flexibility over biclique, as its density can be controlled by adjusting $k$ to suit specific requirements. When $k=0$, the $k$-defective biclique degenerates to the biclique, thereby illustrating its generalization capability. 
	In comparison to the $k$-biplex, $k$-defective biclique offers a more fine-grained relaxation of the biclique due to its stricter retention of edges. As illustrated in Fig. \ref{fig:eg1c}, the $k$-defective biclique exhibits much closer quantity and density to the biclique compared to $k$-biplex, especially for larger $k$. Here, we refer to a $k$-defective biclique as "maximal" if it cannot be contained within any other $k$-defective biclique (or $k$-biplex), and as "maximum" when it is the maximal $k$-defective biclique (or $k$-biplex) with the largest number of edges. 
	In this paper, we focus mainly on the problem of identifying a maximum $k$-defective biclique. We prove that the maximum $k$-defective biclique search problem is NP-hard, indicating that no polynomial-time algorithms exist for solving this problem unless NP$=$P. 

	Notably, adapting existing maximum/maximal $k$-defective clique algorithms \cite{RN4,RN2} and maximum biclique algorithms \cite{RN13,RN14} to our problem poses significant challenges for the following reasons: (1) Adapting the maximum $k$-defective clique algorithm involves fully connecting vertices on each side of the bipartite graph (as shown in Fig. \ref{fig:eg1b}). However, applying such algorithms to these graphs may yield incorrect results. For instance, in Fig. \ref{fig:eg1b}, the maximum $1$-defective biclique has 5 edges whereas the output of maximum $1$-defective clique algorithms yields only 4 edges; (2) Applying maximal $k$-defective clique algorithms on such each-side-fully-connected graphs will generate a large number of redundant branches and is computationally expensive (as evidenced in Table \ref{tab:rt}); and (3) Adapting maximum biclique algorithms to our problem involves a preprocess of adding $k$ edges on the original graph, which generates in $\binom{\overline m}{k}$ possible input graphs where $\overline m$ denotes the number of missing edges. Searching for the maximum biclique across all such input graphs remains highly inefficient, making this approach impractical for large-scale graphs.

	\noindent\textbf{Contributions.} To efficiently address the maximum $k$-defective biclique search problem, we conduct an in-depth study of its properties and develop two novel algorithms with the worst-case time complexity less than $O^*(2^n)$. We also present a series of optimization techniques to further improve the efficiency of our algorithms. Finally, we evaluate the effectiveness and efficiency of our methods through comprehensive experiments. The main contributions of this paper are summarized as follows:

	\noindent\underline{\emph{A new cohesive subgraph model.}} We propose a new cohesive subgraph model for bipartite graphs named the $k$-defective biclique, which is defined as a subgraph with at most $k$ missing edges. We then demonstrate that the problem of finding the maximum $k$-defective biclique in bipartite graphs is NP-hard.
	
	
	\noindent\underline{\emph{Two Novel Search Algorithms.}} We propose two novel maximum $k$-defective biclique search algorithms with nontrivial worst-case time complexity guarantees. The first algorithm utilizes a branch-and-bound technique combined with a newly-developed binary branching strategy, which has a time complexity of $O(m\alpha_k^n)$, where $n$ and $m$ represent the number of vertices and edges of the bipartite graph respectively, and $\alpha_k$ is a positive real number strictly less than 2 (e.g., for $k=1, 2, 3$, $\alpha_k = 1.911, 1.979,$ and $1.995$, respectively). The second algorithm employs an innovative pivoting-based branching strategy that effectively integrates with our binary branching strategy, which achieves a lower time complexity of $O(m\beta_k^n)$, where $\beta_k$ is a positive real number strictly less than $\alpha_k$. (e.g., when $k=1, 2, 3$, $\beta_k = 1.717, 1.856,$ and $1.931$, respectively).

	\noindent\underline{\emph{Nontrivial optimization techniques.}} We also develop a series of nontrivial optimization techniques to improve the efficiency of our proposed algorithms, including graph reduction techniques, novel upper-bound techniques, and a heuristic approach. Specifically, the graph reduction techniques focus on reducing the size of the input bipartite graph based on constraints, such as the core and the number of common neighbors. We devise tight vertex and edge upper bounds to minimize unnecessary branching, and then design an efficient algorithm to compute the proposed upper bounds with linear time complexity. Additionally, we propose a heuristic approach by incrementally expanding subsets in a specific order to obtain an initial large $k$-defective biclique, which helps filter out smaller $k$-defective bicliques from further computation.
	
	
	\noindent\underline{\emph{Comprehensive experimental studies.}} We conduct extensive experiments to evaluate the efficiency, effectiveness, and scalability of our algorithms on 11 large real-world graphs. The experimental results show that our algorithms consistently outperform the state-of-the-art baseline method \cite{RN64} across all parameter settings. In particular, our algorithm achieves improvements of up to three orders of magnitude on most datasets. For instance, on the \textit{Twitter} (9M vertices, 10M edges) dataset with $k=2$, our algorithm takes 10 seconds while the baseline method cannot finish within 3 hours. Additionally, we perform a case study on a fraud detection task to show the high effectiveness of our solutions in real-world applications.
	

	\section{Preliminaries} \label{sec:pre}
	Let $G=(U,V,E)$ be an undirected bipartite graph, where $U$ and $V$ are two disjoint sets of vertices, and $E\subseteq U\times V$ is the set of edges. Denote by $n=|U|+|V|$ and $m=|E|$ the total number of vertices and edges in $G$, respectively. Denote by $S=(U_S,V_S)$ a vertex subset of $G$, where $U_S\subseteq U$ and $V_S \subseteq V$. Let $v\in S$ indicate that $v\in U_S\cup V_S$. Let $G(S)=(U_S,V_S,E_S)$ be the subgraph of $G$ induced by $S$, where $E_S=\left\{(u,v)\mid u\in U_S\land v\in V_S\land (u,v)\in E\right\}$. For $u\in U$, denote by $N(u)=\left\{v\in V\mid (u,v)\in E\right\}$ and $\overline N(u)=\left\{v\in V\mid (u,v)\not\in E\right\}$ the set of neighbors and non-neighbors of $u$ in $G$, respectively. Similar definitions apply for the vertices in $U$. Let $d(u)=\lvert N(u)\rvert$ and $\overline d(u)=\lvert \overline N(u)\rvert$ be the degree and non-degree of $u$ in $G$, respectively. For a vertex subset (or a subgraph) $S$ of $G$, let $N_S(u)=N(u)\cap(U_S\cup V_S)$ and $\overline N_S(u)=\left\{v\in S\setminus N(u)\mid u\ \mathrm{and}\ v\ \mathrm{are\ on\ opposite\ sides}\right\}$ be the set of $u$'s neighbors and non-neighbors in $S$. respectively. Let $d_S(u)=\lvert N_S(u)\rvert$ and $\overline d_S(u)=\lvert\overline N_S(u)\rvert$ be the degree and non-degree of $u$ in $S$, respectively. For a bipartite graph $G=(U,V,E)$, we refer to $(u,v)\in U\times V$ as a non-edge of $G$ if $(u,v)\not \in E$. We denote $\overline E_S$ the set of non-edges in $G(S)$, i.e., $\overline E_S=U_S\times V_S\setminus E_S$. 
	The formal definition of $k$-defective biclique is given below.

	\begin{definition}[$k$-defective biclique]
		Given a bipartite graph $G=(U,V,E)$, a $k$-defective biclique $D=(U_{D},V_{D},E_{D})$ of $G$ is a subgraph of $G$ that has at most $k$ non-edges, i.e., $\lvert U_{D}\times V_{D}\setminus E_{D}\rvert\le k$.
	\end{definition}
	
	The $k$-defective biclique corresponds to the biclique if $k=0$, which indicates that $k$-defective biclique represents a more general structure. 
	If a $k$-defective biclique of $G$ contains the most number of edges among all $k$-defective bicliques in $G$, we refer to it as a maximum $k$-defective biclique in $G$. To facilitate the illustration, we denote the maximum $k$-defective biclique as $k$-MDB. 
	
	In this paper, we focus on the problem of searching for a $k$-MDB in a bipartite graph. However, we observe that a $k$-MDBs in many real-world graphs may exhibit a skewed structure, similar to that of maximum bicliques, where one side has a large number of vertices while the other side may contain only a few (or even one) vertices. Such $k$-MDBs often hold limited practical values in network analysis. To address this issue, we define a size threshold $\theta$, representing the minimum number of vertices required on both sides of the $k$-MDB. Furthermore, we impose the condition $\theta>k$ to ensure that the $k$-MDB is connected, as demonstrated in the following lemma.
	
	\begin{lemma}
		\label{lem:con}
		Given a bipartite graph $G=(U,V,E)$ and a $k$-defective biclique $D=(U_{D},V_{D},E_{D})$ of $G$, if $\lvert U_{D}\rvert>k$ and $\lvert V_{D}\rvert>k$, we establish that $D$ is a connected subgraph in $G$.
	\end{lemma}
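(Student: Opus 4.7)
The plan is to show that the scarcity of non-edges ($\leq k$) together with the side-size constraints ($|U_D|, |V_D| > k$) forces every pair of vertices in $D$ to be linked by a short path. I will establish connectivity by fixing a root vertex and exhibiting a path of length at most $3$ from any other vertex to it.

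First, I would prove the following auxiliary claim: for any two distinct $u_1, u_2 \in U_D$, they share a common neighbor in $V_D$. The argument is a counting one: every non-edge of $D$ is of the form $(u,v)$ with $u \in U_D$, and since $u_1 \neq u_2$ a single non-edge contributes to at most one of $\overline d_D(u_1), \overline d_D(u_2)$. Hence
\[
\overline d_D(u_1) + \overline d_D(u_2) \;\leq\; |\overline E_D| \;\leq\; k \;<\; |V_D|,
\]
so the union of their non-neighborhoods in $V_D$ cannot cover $V_D$, and a common neighbor $v \in V_D$ must exist. The symmetric claim for pairs in $V_D$ follows in exactly the same way using $|U_D| > k$.

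Next I would use this to derive connectivity. Fix an arbitrary root $u^* \in U_D$. For any other $u \in U_D$, the claim above yields a $v \in V_D$ with $u - v - u^*$ a path in $D$. For any $v \in V_D$, since $\overline d_D(v) \leq k < |U_D|$, the vertex $v$ has at least one neighbor $u' \in U_D$; then concatenating the edge $v-u'$ with the length-$2$ path already constructed between $u'$ and $u^*$ gives a path from $v$ to $u^*$ of length at most $3$. Therefore every vertex of $D$ lies in the same connected component as $u^*$, and $D$ is connected.

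The argument is essentially a double-counting step followed by a routing step, and there is no real obstacle; the only point requiring mild care is to note that a non-edge is charged to exactly one endpoint on each side, so that the bound $\overline d_D(u_1) + \overline d_D(u_2) \leq |\overline E_D|$ holds without double-counting. The strictness of the hypotheses $|U_D| > k$ and $|V_D| > k$ (as opposed to $\geq k$) is exactly what is needed to guarantee the existence of a common neighbor in each of the two applications.
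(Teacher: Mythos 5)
Your proof is correct, but it follows a genuinely different route from the paper's. The paper argues by contradiction: it supposes $D$ splits into two nonempty pieces $D_1$ and $D_2$ with no edges between them, counts the forced non-edges across the cut as $|U_{D_1}|\cdot|V_{D_2}|+|U_{D_2}|\cdot|V_{D_1}|$, and observes that this quantity is at least $|U_D|$ (or $|V_D|$) and hence exceeds $k$. You instead argue directly: any two vertices of $U_D$ have a common neighbor in $V_D$ because their non-degrees sum to at most $|\overline E_D|\le k<|V_D|$ (the charging of each non-edge to at most one of the two vertices is the right observation), and every $v\in V_D$ has at least one neighbor in $U_D$, so everything routes to a fixed root within three hops. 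Both are elementary counting arguments of comparable length, but yours buys something extra: it establishes that the diameter of $D$ is at most $3$, which the paper has to state and use separately as Lemma~\ref{lem:dis} for its ordering-based reduction. Conversely, the paper's cut-counting argument is marginally more self-contained in that it never needs to handle the routing step or pick a root. Either proof is acceptable; yours is the more informative of the two.
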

	
	\begin{proofsketch}
		Assume that a $k$-defective biclique $D$ is not connected when $\lvert U_{D}\rvert>k$ and $\lvert V_{D}\rvert>k$. In this case, $D$ can be decomposed into two nonempty subgraphs $D_1=(U_{D_1},V_{D_1},E_{D_1})$ and $D_2=(U_{D_2},V_{D_2},E_{D_2})$, with the results of $E_{D_1} \cup E_{D_2} = E_D$.
		The number of non-edges between $D_1$ and $D_2$ is given by $|U_{D_1}|\cdot|V_{D_2}|+|U_{D_2}|\cdot|V_{D_1}|$, which exceeds $|U_D|$ (or $ |V_D|$) and is thus larger than $k$. This contradicts the assumption that $D$ is a $k$-defective biclique. Therefore, the proof is complete.
	\end{proofsketch}
	
	Based on Lemma \ref{lem:con}, we formally define the problem addressed in this paper as follows.
	
	\begin{problem}[MDB search problem]
		Given a bipartite graph $G=(U,V,E)$ and two integers $k$ and $\theta$ with $\theta>k$, the MDB search problem aims to find a connected $k$-MDB $D=(U_{D},V_{D},E_{D})$ of $G$ satisfying $\lvert U_{D}\rvert\ge\theta$ and $\lvert V_{D}\rvert \ge \theta$.
	\end{problem}
	
	\begin{theorem} \label{the:nph}
		The MDB search problem is NP-hard.
	\end{theorem}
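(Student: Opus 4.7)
The plan is to prove NP-hardness by a polynomial-time reduction from the Maximum Edge Biclique (MEB) problem, which is well known to be NP-hard on bipartite graphs; its decision version asks, given a bipartite graph $G$ and an integer $t$, whether $G$ contains a biclique with at least $t$ edges. The key conceptual step is the observation that the defining constraint $|U_D\times V_D\setminus E_D|\le k$ degenerates, for $k=0$, into the requirement that $(U_D,V_D)$ induce a complete bipartite subgraph. Hence a $0$-defective biclique is exactly a biclique, and the MDB search problem restricted to $k=0$ essentially recovers MEB with the side-size constraint $|U_D|,|V_D|\ge\theta$.

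Concretely, given an instance $(G,t)$ of MEB with $t\ge 1$, I would build the MDB decision instance $(G,k=0,\theta=1,t)$, which satisfies $\theta>k$, and argue equivalence by two easy observations. Any biclique of $G$ with at least one edge automatically has $|U|,|V|\ge 1=\theta$ and is connected by Lemma~\ref{lem:con} (whose hypothesis $|U_D|,|V_D|>k=0$ is ensured by $\theta=1$), so every feasible MEB solution is a feasible MDB solution with the same edge count; conversely, any feasible MDB solution is a biclique of $G$ with at least $t$ edges. Hence the two decision instances accept together, the reduction is polynomial, and NP-hardness of the MDB search problem follows from that of MEB.

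The only point that needs care is that neither the side-size threshold nor the connectivity requirement should silently rule out the target biclique in the reduction. Setting $\theta=1$ trivializes the size constraint for any non-empty biclique, and Lemma~\ref{lem:con} supplies connectivity for free, so no additional gadgetry is required. I would therefore expect the main work to lie in pinning down a standard reference for the NP-hardness of Maximum Edge Biclique, since the mathematical content of the reduction is a direct specialization of the MDB model to $k=0$.
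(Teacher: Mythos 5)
Your reduction is correct as a proof of NP-hardness for the problem as formally stated (where $k$ is part of the input), but it takes a genuinely different route from the paper. The paper's proof (deferred to the extended version) reduces from the classical maximum clique problem, whereas you observe that the $k=0$ specialization of MDB is exactly the Maximum Edge Biclique problem with a trivialized size threshold, and then invoke Peeters' NP-hardness result for MEB (which the paper itself cites as \cite{RN17}). Your argument is essentially a restriction argument and is airtight: with $k=0$ and $\theta=1$ the constraint $\theta>k$ holds, any biclique with $t\ge 1$ edges automatically meets the side-size and connectivity requirements, and the two decision problems coincide. What you lose relative to a direct reduction from maximum clique is informativeness about the regime the paper actually cares about: your proof says nothing about hardness for any fixed $k\ge 1$, which is the setting motivating the entire algorithmic development, and it outsources the combinatorial content to Peeters' theorem rather than exhibiting it. A reduction from maximum clique, by contrast, can typically be arranged to establish hardness for every fixed value of $k$ (or at least for $k\ge 1$), which is the stronger statement one would want here. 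If you intend your proof to stand in for the paper's, you should at minimum flag this limitation, and ideally strengthen the construction (e.g., by padding the MEB instance so that the optimal $k$-defective biclique for a given $k\ge 1$ is forced to use exactly $k$ non-edges in a controlled gadget) so that hardness holds for each fixed $k$.
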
 
	\begin{proof}
	We establish the NP-hardness of the MDB search problem through a polynomial-time reduction from the classical NP-complete maximum clique search problem. Due to space limitation, we present the complete version of this proof and all the omitted proofs of this paper in \cite{RN0}.
	\end{proof}


		\section{Our Proposed Algorithms} \label{sec:alg}
	
	In this section, we propose two novel algorithms to solve the $k$-MDB search problem. Specifically, we first propose a new branch-and-bound algorithm that utilizes a newly-designed binary branching strategy, achieving the worst-case time complexity of $O(m\alpha_k^n)$, where $\alpha_k < 2$ (e.g. $k=1,2$, and $3$, we have $\alpha_k = 1.911, 1.979,$ and $1.995$, respectively). To further improve the efficiency, we propose a novel pivoting-based algorithm that incorporates a well-designed pivoting-based branching strategy with our proposed binary branching strategy. We prove that the time complexity of the improved algorithm is bounded by $O(m\beta_k^n)$, where $\beta_k < \alpha_k$ (e.g., when $k=1, 2, 3$, $\beta_k = 1.717, 1.856,$ and $1.931$, respectively). To the best of our knowledge, the algorithms we proposed in this section are the first two $k$-MDB search algorithms that surpass the trivial time complexity of $O^*(2^n)$.

	\subsection{A New Branch-and-bound Algorithm} \label{sec:mdbb}
	
	We firstly introduce a branch-and-bound algorithm for the $k$-MDB problem. It employs a straightforward branch-and-bound approach by selecting a vertex $v$ from the graph and dividing the $k$-MDB search problem into two subproblems: one seeks a $k$-MDB that includes $v$, while the other seeks a $k$-MDB that excludes $v$.
	Specifically, given a bipartite graph $G$, denote by $S=(U_S, V_S)$ a partial set and $C=(U_C, V_C)$ a candidate set, where $G(S)$ is a $k$-defective biclique of $G$, and for any vertex $u\in C$, $G(S\cup\{u\})$ forms a larger $k$-defective biclique of $G$. $(S, C)$ is referred to as an instance, which aims to identify a $k$-MDB consisting of $S$ and a subset of $C$. In particular, the instance $((\emptyset,\emptyset), (U,V))$ aims to find a $k$-MDB of $G$, while the instance $(S^*, (\emptyset,\emptyset))$ corresponds to a possible solution $G(S^*)$. To clarify, we refer to a vertex $u\in C$ as a \textit{branching vertex} when a new sub-instance is generated by expanding $S$ with $u$.
	
	Given an instance $(S,C)$, our branch-and-bound algorithm selects a branching vertex $u\in C$ and generates two new sub-instances: $(S\cup\{u\}, C\setminus\{u\})$ and $(S, C\setminus\{u\})$. The algorithm then recursively processes each sub-instance in the same manner until the candidate set becomes empty, where the partial set yields a possible solution. Finally, the $k$-MDB of $G$ is obtained as the yielded solution with most edges.
	The simplest approach to obtain a branching vertex is to randomly select one from $C$. However, this strategy may lead to the selection of all vertices in $C$, leading to a worst-case time complexity of $O^*(2^n)$ and too many unnecessary computations. To improve the efficiency and reduce the time complexity, we propose a new binary branching strategy. 
	
	Intuitively, if $k$ vertices from $C$ with non-neighbors in $S$ are successively added to $S$, then $S$ will contain at least $k$ non-edges. Consequently, this implies that all vertices in $C$ must be the neighbors of each vertex in $S$. Additionally, if no vertex in $C$ that has non-neighbors in $S$, adding a vertex $v$ from $C$ to $S$ will result in exactly $\overline d_C(v)$ vertices in $C$ with non-neighbors in $S$. Based on these observations, we give our new binary branching strategy as follows.
	
	\noindent\textbf{New binary branching strategy.} Given an instance $(S, C)$, let $u=\mathrm{\arg\max}_{u\in C}\overline d_S(u)$. The sub-instances are generated in the following way:
	\begin{itemize}
		\item If $\overline{d}_S(u)>0$, select $u$ as the branching vertex and generate two sub-instances: $(S\cup\{u\},C\setminus\{u\})$, $(S, C\setminus\{u\})$.
		\item Otherwise, select $v\in C$ with the maximum $\overline d_C(v)$ as the branching vertex, and generate two sub-instances: $(S\cup\{v\},C\setminus\{v\})$, $(S, C\setminus\{v\})$.
	\end{itemize}

	\begin{figure}[!t]
		\setlength{\belowcaptionskip}{-0.3cm} 
		\includegraphics[width=\columnwidth]{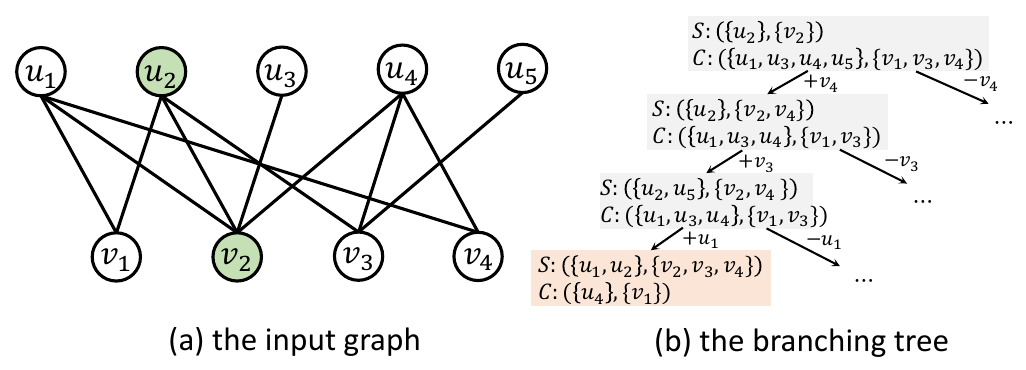}
		\caption{An example of the new binary branching strategy, where $k=2$ and current instance is $S=(\{u_2\},\{v_2\})$ (green vertices) and $C=(\{u_1,u_3,u_4,u_5\},\{v_1,v_3,v_4\})$ (uncolored vertices).}
		\label{fig:eg2}
	\end{figure}

	\begin{example}
	Fig. \ref{fig:eg2} shows an example of our new binary branching strategy for finding a $2$-MDB. The input graph is depicted in Fig. \ref{fig:eg2}a, where the current instance is $S=(\{u_2\},\{v_2\})$ (green vertices) and $C=(\{u_1,u_3,u_4,u_5\},\{v_1,v_3,v_4\})$ (uncolored vertices). Fig. \ref{fig:eg2}(b) illustrates the branching tree obtained by applying our binary branching strategy. Specifically, in the top-level branch, $v_4$ is selected as the branching vertex due to its maximum number of non-neighbors in $S$. When $v_4$ is added to $S$, it brings a non-edge in $S\cup\{v_4\}$. Then, we observe that there is no vertex in the second-level branch with non-neighbors in $S\cup\{v_4\}$. Therefore, $v_3$ is selected as the second-level branching vertex as it has the most non-neighbors in $C\setminus\{v_4\}$. In the third-level branch, $u_1$ is selected as the branching vertex as it has the most non-neighbors in $S\cup\{v_4,v_3\}$. In the forth-level, $S\cup\{u_1,v_3,v_4\}$ contains 2 non-edges, leading to the deletion of $u_3$ from $C$ since it still has non-neighbors in $S\cup\{u_1,v_3,v_4\}$, which is the key to reducing the unnecessary computations.
	\end{example}

	\begin{detail}
		Algorithm \ref{alg:mdbb} outlines the pseudocode of our proposed branch-and-bound algorithm. It takes a bipartite graph $G$ and two integers $k$ and $\theta$ as input parameters, and returns a $k$-MDB $D^*$ of $G$. It first initializes $D^*$ as an empty graph, and then starts to search $k$-MDB by invoking the procedure \brb where all vertices in $G$ serve as the candidate set (line 2). 
		\brb holds an instance $(S, C)$ as its parameter, where $S$ denotes the partial set and $C$ denotes the candidate set. Within \brb, the algorithm first determines whether a larger $k$-defective biclique has been found  (lines 5-8). After that, it selects a branching vertex $u$ from $C$ using the new binary branching strategy (lines 9-10) and generates two sub-instance $(S',C'\cup\{u\})$ and $(S, C\setminus\{u\})$ (lines 11-12). Here, $S'$ and $C'$ are the new partial set and new candidate set derived by invoking \upd (line 11), ensuring the candidate set keeps all valid branching vertex (lines 14-15). We next analyze the time complexity of Algorithm \ref{alg:mdbb}.
		
	\end{detail}
	
	\begin{algorithm}[!t]
		\caption{The branch-and-bound algorithm}
		\scriptsize
		\label{alg:mdbb}
		\KwIn{Bipartite graph $G=(U,V,E)$, integers $k$ and $\theta$}
		\KwOut{A $k$-MDB of $G$}
		
		$D^*\leftarrow$ an empty graph\;
		\brb{$(\emptyset, \emptyset)$, $(U, V)$}\;
		\Return {$D^*$}\;
		
		\Proc{\brb{$S=(U_S,V_S)$, $C=(U_C,V_C)$}} {
			\If{$C=\emptyset$} {
				\If{$\lvert E_{G(S)}\rvert >\lvert E_{D^*}\rvert$ \And $\lvert U_S\rvert\ge \theta$ \And $\lvert V_S\rvert\ge \theta$} {$D^*\leftarrow G(S)$\;}
				\Return{}\;
			}
			

			$u\leftarrow \mathrm{\arg\max}_{v\in C}\overline{d}_S(v)$\;
			
			\lIf{$\overline d_S(u)=0$} {
				$u\leftarrow \mathrm{\arg\max}_{v\in C}\overline{d}_C(v)$
			}
			
			$(S', C')\leftarrow$ \upd{$S$, $C$, $u$}\;
			\brb{$S'\cup \{u\}$, $C'$};\ 
			\brb{$S$, $C\setminus \{u\}$}\;
			
		}
		
		\Fn{\upd{$S$, $C$, $u$}} {
			$C'\leftarrow\{v\in C\setminus\{u\}\mid\overline{d}_{S\cup\{u\}}(v)\le k-\overline{d}(S)\}$\;
			$C'_0\leftarrow\{v\in C'\mid\overline{d}_{S\cup\{u\}\cup C'}(v)=0\}$\;
			\Return{$(S\cup C'_0, C'\setminus C'_0)$}\;
		}
		
	\end{algorithm}
	
	

	\begin{theorem}
		Given a bipartite graph $G=(U,V,E)$ and an integer $k$, the time complexity Algorithm \ref{alg:mdbb} is given by $O(m\alpha_k^n)$, where $\alpha_k$ is the largest real root of equation $x^{2k+5}-2x^{2k+4}+x^3-x^2+1=0$. Specifically, when $k=1,2$ and $3$, we have $\alpha_k=1.911,1.979$ and $1.995$, respectively.
	\end{theorem}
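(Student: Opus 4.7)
The plan is to show that \textsc{BranchB} makes $O(\alpha_k^n)$ recursive calls, with each call doing $O(m)$ non-recursive work, so that $T(n) = O(m\alpha_k^n)$ follows by summation over the search tree.

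First I would verify the polynomial overhead per node: scanning $C$ to compute $\overline d_S(v)$ and $\overline d_C(v)$, choosing the branching vertex, and executing \textsc{Update} are all linear in the edges touching $S \cup C$, hence $O(m)$ each. It therefore suffices to bound the number of recursive calls by $\alpha_k^{|S|+|C|}$ up to constants; at the root $|S|+|C|=n$ and the theorem follows.

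The structural observation driving the recurrence is that $|\overline E_S|$ is non-decreasing along any include-chain in the search tree and strictly increases by $\overline d_S(u) \geq 1$ whenever the branching vertex $u$ satisfies the first case of the binary branching rule ($\overline d_S(u) > 0$, which I will call a \emph{type-1} step). Hence every root-to-leaf path contains at most $k$ type-1 include-steps, and as soon as $|\overline E_S|$ reaches $k$, \textsc{Update} must delete every remaining candidate having any non-neighbour in $S$. In the second, \emph{type-2}, case ($\overline d_S(v)=0$ for every $v \in C$) the branching vertex $v$ maximises $\overline d_C(v)$: if $\overline d_C(v)=0$ then $C$ already forms a biclique with $S$ and the $C'_0$ line of \textsc{Update} absorbs all of $C$ into $S$, collapsing the call into a leaf; if $\overline d_C(v) \geq 1$ then at least one candidate acquires a non-neighbour in $S \cup \{v\}$, so the very next include-step is forced back into type-1.

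Combining these ingredients yields a branching recurrence in which the exclude child always removes exactly one vertex from $C$ (contributing $T(|S|+|C|-1)$) and the include child removes the branching vertex together with every candidate pruned by \textsc{Update}. By unrolling up to $2k+5$ consecutive include-levels and combining the vertex losses forced by type-1 saturation with those forced by the type-2 collapse, I expect to derive a recurrence whose characteristic polynomial is exactly $x^{2k+5} - 2x^{2k+4} + x^3 - x^2 + 1$. Standard theory of linear recurrences then gives $T(n) = O(m\alpha_k^n)$, where $\alpha_k$ is the largest real root, and the concrete values $\alpha_1 \approx 1.911$, $\alpha_2 \approx 1.979$, $\alpha_3 \approx 1.995$ follow by numerical root-finding for $k=1,2,3$.

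The main obstacle is the bookkeeping in the unrolling step: each of the exponents $1, 2k+2, 2k+3, 2k+5$ and the associated signed coefficients $-2, +1, -1, +1$ must be attributed to a concrete combination of branching deletions. The negative terms cannot be read off any single branching rule but arise as inclusion--exclusion corrections between nested sub-cases, so the proof must enumerate exactly which patterns of alternating type-1 and type-2 steps can occur within a window of $2k+5$ levels and sum their contributions carefully. Once this accounting is complete, the claimed bound drops out.
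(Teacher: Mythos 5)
Your structural observations are the right ones and essentially the same as the paper's: each recursive call costs $O(m)$; a branching vertex with a non-neighbour in $S$ (your type-1 step) strictly increases $\lvert\overline E_S\rvert$, so at most $k$ such include-steps occur on any path; a type-2 vertex $v$ with $\overline d_C(v)\ge 1$ forces the next include-step back to type-1; hence after at most $2k+1$ include-steps the budget saturates and \upd deletes an extra candidate; and once $\lvert\overline E_S\rvert=k$ the search degenerates to biclique search in which every branching removes at least two candidates. But the proof stops exactly where the work is: you never actually write down the recurrence, and the plan you sketch for obtaining it rests on a misreading of where the polynomial comes from. The paper's recurrence, on the measure $n=\lvert C\rvert$, is simply $T(n)\le \sum_{i=1}^{2k+1}T(n-i)+T(n-2k-3)+T(n-2k-4)$: the terms $T(n-i)$ are the exclude-children spun off at each of the $2k+1$ levels of the include-chain, and the last two terms arise from one further binary split of the saturated branch $T(n-2k-2)$, which is now a biclique search. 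Every coefficient is $+1$.

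Consequently, the polynomial $x^{2k+5}-2x^{2k+4}+x^3-x^2+1$ is not a characteristic polynomial whose signed coefficients must each be "attributed to a concrete combination of branching deletions". It is obtained by substituting $T(n)=x^n$ into the all-positive recurrence above and multiplying through by $(x-1)$ to telescope the geometric block $\sum_{i=1}^{2k+1}x^{n-i}$; the negative terms are an artifact of clearing that denominator, not inclusion--exclusion corrections between nested sub-cases. The enumeration of alternating-step patterns "within a window of $2k+5$ levels" that you propose therefore has no counterpart in a correct argument and would not converge to the stated polynomial. A smaller but real issue: your measure $\lvert S\rvert+\lvert C\rvert$ does not decrease on an include-step, since the branching vertex merely migrates from $C$ to $S$; the induction must be run on $\lvert C\rvert$ alone, as the paper does.
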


	\begin{proof}
		Let $T(n)$ denote the maximum number of leave instances generated by \brb (i.e. instances with an empty $C$), where $n=\lvert C\rvert=\lvert U_C\rvert+\lvert V_C\rvert$. Since each recursive call of \brb takes $O(m)$ time, the total time complexity of \brb is $O(m\cdot T(n))$. 
		
		Then, we analyze the size of $T(n)$ based on the binary branching strategy. 
		In each branch of \brb, either a vertex with at least one non-neighbor in $S$ or a vertex with the most non-neighbors in $C$ is selected as the branching vertex. If the vertex $u$ with at least one non-neighbor in $S$ is selected, it results in at least one additional non-edge in $S\cup \{u\}$. Alternatively, if a vertex $v$ with the most non-neighbors in $C$ is selected as the branching vertex, the next-level branch will contain at least $\overline{d}_C(v)$ vertices in $C$ that have non-neighbors in $S\cup \{v\}$, where $\overline{d}_C(v) > 0$. Consequently, we can conclude that in the worst case, adding two vertices from $C$ to $S$ will result in at least one additional non-edge in $S$. After adding at most $2k+1$ vertices from $C$ to $S$, we have $\lvert \overline E_S\rvert =k$ and there exists a vertex $v\in C$ such that $\overline d_S(v)>0$. Based on this, the recurrence is given by $T(n)\le \sum_{i=1}^{2k+1} T(n-i)+T(n-2k-2)$.

			
		Note that the $(2k+2)$-th level of branching (the term $T(n-2k-2)$) corresponds to a branch that searches for the maximum biclique. In this case, it is evident that any vertex in $C$ selected as the branching vertex will result in at least one additional vertex being removed from $C$. Thus, we yield a recurrence of $T(n-2k-2)=T(n-2k-3)+T(n-2k-4)$ for such a branch. Based on above discussion, we can present the following tighter recurrence.
		\begin{align}
			T(n)\le \sum_{i=1}^{2k+1} T(n-i)+T(n-2k-3)+T(n-2k-4).
		\end{align}
		According to the theory proposed in \cite{RN76}, $T(n)$ can be expressed in the form of $\alpha_k^n$, where the value of $\alpha_k$ derives from the largest real root of equation $x^n=x^{n-1}+x^{n-2}+\cdots+x^{n-2k-1}+x^{n-2k-3}+x^{n-2k-4}$, which can be simplified to $x^{2k+5}-2x^{2k+4}+x^3-x^2+1=0$. Therefore, the theorem is proven.
	\end{proof}
	
	\begin{lemma}
		Given a bipartite graph $G=(U,V,E)$, the time complexity of Algorithm \ref{alg:mdbb} for finding a $0$-MDB of $G$ is $O(m\times 1.618^n)$.
	\end{lemma}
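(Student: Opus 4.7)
The plan is to specialize the branching analysis in the preceding theorem to the case $k=0$, where the $k$-defective biclique coincides with a biclique, and to show that the recurrence collapses to the Fibonacci recurrence whose characteristic root is the golden ratio $\phi = (1+\sqrt{5})/2 \approx 1.618$.

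First I would examine the branching rule more carefully when $k=0$. Inspect \upd: the update step retains in $C'$ only vertices $v$ with $\overline{d}_{S\cup\{u\}}(v) \le k - \overline{d}(S) = 0$, that is, vertices having every element of $S\cup\{u\}$ as a neighbor. Consequently every vertex remaining in the candidate set $C$ at any recursive call satisfies $\overline{d}_S(v) = 0$. Therefore in line 9 of Algorithm \ref{alg:mdbb} we always have $\overline{d}_S(u) = 0$, so the algorithm always falls through to the second branching rule and selects a vertex $v \in C$ maximizing $\overline{d}_C(v)$.

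Next I would derive the recurrence. Let $T(n)$ be the maximum number of leaf instances generated by \brb when $|C| = n$. If the chosen $v$ satisfies $\overline{d}_C(v) = 0$, then by maximality every vertex in $C$ has no non-neighbor in $C$, so $G(S \cup C)$ is already a biclique; in this case one more call terminates without further branching, contributing $O(1)$. Otherwise $\overline{d}_C(v) \ge 1$, and the two sub-instances are analyzed as follows. In the include branch, $v$ and all its non-neighbors in $C$ are removed from the candidate set (the non-neighbors are excluded by \upd since they would violate $k=0$), so at least $2$ vertices are removed, yielding a sub-instance of size at most $n-2$. In the exclude branch, only $v$ is removed, yielding a sub-instance of size $n-1$. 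Hence
\begin{equation*}
T(n) \le T(n-1) + T(n-2).
\end{equation*}

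Finally I would solve this recurrence. By the standard theory cited in the previous proof (e.g., \cite{RN76}), $T(n) = O(\phi^n)$, where $\phi$ is the largest real root of $x^2 - x - 1 = 0$, namely $\phi \approx 1.618$. Since each recursive call of \brb performs $O(m)$ work to compute degrees, update $C$, and compare against $D^*$, the total running time is $O(m \cdot T(n)) = O(m \cdot 1.618^n)$, which proves the lemma. I do not anticipate a serious obstacle; the only subtlety worth double-checking is the base case where $\overline{d}_C(v) = 0$, to make sure it does not spawn a branch of the recurrence worse than $T(n-1) + T(n-2)$, and the observation above shows it contributes only a single trailing call.
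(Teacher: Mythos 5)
Your proposal is correct and follows essentially the same route as the paper's own proof sketch: when $k=0$ the algorithm always branches on the vertex $v$ maximizing $\overline d_C(v)$, the include-branch drops $v$ together with its non-neighbors in $C$, and the exclude-branch drops $v$ alone, giving $T(n)\le T(n-1)+T(n-2)$ and hence the golden-ratio bound. Your additional observations (the invariant $\overline d_S(v)=0$ for all $v\in C$ forced by \upd, and the degenerate case $\overline d_C(v)=0$) merely make explicit what the paper leaves implicit.
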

	\begin{proofsketch}
		When $k=0$, the vertex $v\in C$ with maximum $\overline d_C(v)$ is selected as the branching vertex. Moving $v$ from $C$ to $S$ results in the deletion of $v$'s non-neighbors in $C$, leading to the recurrence $T(n)\le T(n-1)+T(n-2)$, which yields $T(n)=1.618^n$.
	\end{proofsketch}
	
	\subsection{A Novel Pivoting-based Algorithm} \label{sec:mdbp}
	
	We observe that Algorithm \ref{alg:mdbb} may still generate numerous unnecessary sub-instances. Consider an instance $(S, C)$ and a branching vertex $v\in C$ with $\overline d_S(v)=0$. In Algorithm \ref{alg:mdbb}, if $v$ and all non-neighbors of $v$ have been excluded from the current instance, the remaining vertices cannot form any $k$-MDB. This is because any $k$-defective biclique derived from these remaining vertices can always be expanded by including $v$. Consecutively, Algorithm \ref{alg:mdbb} may produce many $k$-defective bicliques that are not maximum. To address the problem and further improve the efficiency, we propose a novel pivoting-based algorithm in this section. We formalize above insights into the following lemma.
	
	\begin{lemma}
		\label{lem:piv}
		Given an instance $(S,C)$ and any vertex $v\in C$ with $\overline d_S(v)=0$, the $k$-MDB of instance $(S,C)$ must contain either $v$ or a vertex in $\overline N_C(v)$.
	\end{lemma}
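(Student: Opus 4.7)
The plan is a contradiction argument. Let $D = G(S \cup T)$ with $T \subseteq C$ be any $k$-MDB of the instance $(S,C)$, and suppose toward contradiction that $v \notin T$ and $T \cap \overline N_C(v) = \emptyset$. The goal is to show that the extension $D' := G(S \cup T \cup \{v\})$ is itself a $k$-defective biclique having strictly more edges than $D$, contradicting the maximality of $D$ and thus forcing one of the two stated membership conditions.

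First I would track the non-edges introduced when $v$ is appended. The hypothesis $\overline d_S(v) = 0$ says that $v$ is adjacent to every opposite-side vertex of $S$. The assumption $T \cap \overline N_C(v) = \emptyset$ says that $T$ contains none of $v$'s non-neighbors drawn from $C$, so $v$ is also adjacent to every opposite-side vertex of $T$. Combining these, $v$ creates no new non-edge, and the non-edge set of $D'$ coincides with that of $D$; in particular $|\overline E_{D'}| = |\overline E_D| \le k$, so $D'$ is a $k$-defective biclique. Second I would count the newly gained edges: inserting $v$ adds exactly one edge to every opposite-side vertex in $S \cup T$. Since the MDB search problem enforces $|U_D|, |V_D| \ge \theta$ with $\theta > k \ge 0$, the opposite side of $v$ in $D$ is non-empty, so $D'$ strictly exceeds $D$ in edge count. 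This contradicts the assumption that $D$ is a $k$-MDB of $(S,C)$, completing the argument.

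The only subtlety, and the point I expect to be the main obstacle, is guaranteeing that the edge gain is \emph{strictly} positive; this reduces to showing the opposite side of $v$ in $D$ is non-empty, which is exactly where the threshold $\theta > 0$ is invoked. In the absence of the size threshold, the statement still holds in an existential form: $D'$ would then tie $D$ in edge count while already containing $v$, so at least one $k$-MDB satisfies the conclusion and the pruning logic based on this lemma remains correct. Apart from handling this corner case, the proof is a direct structural contradiction that requires no further combinatorial machinery.
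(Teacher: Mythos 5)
Your proof is correct and follows essentially the same route as the paper, which justifies this lemma by observing that any $k$-defective biclique avoiding both $v$ and $\overline N_C(v)$ can be expanded by adding $v$ without creating new non-edges, contradicting maximality. Your explicit handling of the strictness of the edge gain via the size threshold $\theta > k \ge 0$ is a careful touch on the one corner case the paper's informal argument glosses over.
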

	

	Based on Lemma \ref{lem:piv}, we develop a pivoting strategy that provides a new way to generate sub-instances. The details are presented as follows. 
	
	\noindent\textbf{Pivoting strategy.} Consider an instance $(S, C)$ and a vertex $u\in C$ with $\overline d_S(u)=0$. Let $u$ be the \textit{pivot vertex}. Assume $\overline N_C(u)=\{v_1,v_2,\dots,v_{r}\}$, where $r=\overline d_C(u)$. The pivoting strategy treats each vertex in $\{u\}\cup \overline N_C(u)$ as a branching vertex and generates exactly $r+1$ sub-instances. The first sub-instance is given by $(S\cup\{u\}, C\setminus\{u\})$, and for $i\ge 2$, the $i$-th sub-instance is given by $(S\cup\{v_{i-1}\}, C\setminus\{u, v_1,v_2,\dots,v_{i-1}\})$. 
	
	With this strategy, $S$ is expanded only with the vertex $u$ and vertices in $\overline N_C(u)$, avoiding redundant sub-instances that would lead to non-maximum $k$-defective biclique. Next, we consider the selection of the pivot vertex. Let $C_0\subseteq C$ be the set of vertices fully connected to $S$, i.e., $C_0=\{v\in C\mid\overline d_S(v)=0\}$. According to the pivoting strategy, any vertex from $C_0$ can serve as a pivot vertex. To minimize the number of sub-instances, the vertex $u\in C_0$ with the fewest non-neighbors in $C$ is chosen as the pivot vertex. 
	
	In the case where $C_0$ is empty, every vertex in $C$ has at least one non-neighbor in $S$, making the pivoting strategy inapplicable. Fortunately, as discussed in Sec. \ref{sec:mdbb}, continuously branching $k$ times by adding vertices from $C$ to $S$ will result in $S$ containing at least $k$ non-edges. Moreover, $C$ becomes an empty set after this process, which significantly reduces unnecessary computations. Motivated by this observation, we integrate the pivoting strategy with the binary branching strategy to develop a novel pivoting-based branching strategy as outlined below.

	\noindent\textbf{Novel pivoting-based branching strategy.} Given an instance $(S, C)$, let $C_0=\{v\in C\mid\overline d_S(v)=0\}$. The new sub-instances of $(S, C)$ are generated in the following way:
	\begin{itemize}
		\item If $C_0$ is empty or $\lvert C\setminus C_0\rvert>k-\lvert\overline E_S\rvert$, let $u=\mathrm{\arg\max}_{v\in C} \overline{d}_S(v)$ be the branching vertex and generate two sub-instances: $(S\cup\{u\}, C\setminus\{u\})$ and $(S, C\setminus\{u\})$.
		\item Otherwise, let $u=\mathrm{\arg\min}_{v\in C_0}\overline{d}_C(v)$, and then:
		\begin{itemize}
			\item If $\overline d_C(u) > k-\lvert\overline E_S\rvert > 0$, generate two sub-instances with $u$ as the branching vertex: $(S\cup\{u\}, C\setminus\{u\})$ and $(S, C\setminus\{u\})$.
			\item Otherwise, assume $\overline N_C(u)=\{v_1,v_2,\dots,v_{r}\}$, where $r=\overline d_C(v)$, and generate $\overline d_C(u)+1$ sub-instances with $u$ as the pivot vertex, where the first sub-instance is given by $(S\cup\{u\}, C\setminus\{u\})$, and for $i\ge 2$, the $i$-th sub-instance is given by $(S\cup\{v_{i-1}\}, C\setminus\{u, v_1,v_2,\dots,v_{i-1}\})$.
		\end{itemize}  
	\end{itemize}

	\begin{figure}[!t]
		\centering
		\setlength{\belowcaptionskip}{-0.3cm} 
		\includegraphics[width=\columnwidth]{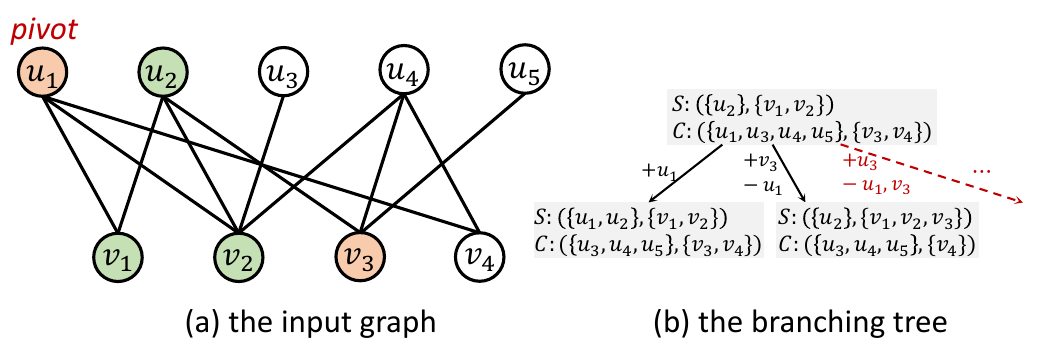}
		\caption{An example of the pivoting-based branching strategy, where $k=2$ and current instance is set as $S=(\{u_2\},\{v_2\})$ (green vertices) and $C=(\{u_1,u_3,u_4,u_5\},\{v_1,v_3,v_4\})$ (orange and uncolored vertices).}
		\label{fig:eg3}
	\end{figure}
	
	\begin{example}
		Fig. \ref{fig:eg3} shows an example of the our pivoting-based branching strategy for the $2$-MDB search. The example graph is given by Fig. \ref{fig:eg3}a, with the current instance set as $S=(\{u_2\},\{v_1,v_2\})$ (green vertices) and $C=(\{u_1,u_3,u_4,u_5\},\{v_3,v_4\})$ (orange and uncolored vertices). According to the pivoting-based branching strategy, $C_0=(\{u_1\},\{v_3\})$, and $u_1$ is selected as the pivot vertex since it has the fewest non-neighbors in $C$ among vertices in $C_0$. Then, only $u_1$ and all non-neighbors of $u_1$ in $C$ are used to generate new sub-instances (orange vertices), while other vertices of $C$ are not considered (uncolored vertices). This results in only two sub-instances generated by our pivot-based branching strategy. The branching tree for this example is illustrated in Fig. \ref{fig:eg3}b, where the black arrows represent the generated branches while the red arrows indicate the unnecessary branches.
	\end{example}

		\begin{detail}
			Algorithm \ref{alg:mdbp} outlines the pseudocode of the proposed pivot-based algorithm. It begins by searching for $k$-MDB by invoking the procedure \brp with an empty partial set $S$ and the entire vertex set of $G$ as the candidate set $C$ (line 2). The algorithm returns the global variable $D^*$ as the final $k$-MDB after \brp is processed (line 3). Within \brp, the algorithm first updates $D^*$ if a larger $k$-defective biclique is found (lines 5-7). Then, it generates new sub-instances based on the pivoting-based branching strategy (lines 8-22). Specifically, if $C_0=\emptyset$ or $\lvert C\setminus C_0\rvert>k-\lvert\overline E_S\rvert$, the vertex $u= \mathrm{\arg\max}_{v\in C}\overline{d}_S(v)$ is selected as the branching vertex (line 11), and two new sub-instances are generated (line 13). Otherwise, the algorithm assigns $u=\mathrm{\arg\min}_{v\in C_0}\overline{d}_C(v)$. If $\overline d_C(u)>k-\lvert \overline E_S\rvert$, two sub-instances are generated with $u$ as the branching vertex (lines 16-18). Otherwise, $\overline d_C(u)+1$ new sub-instances are generated with $u$ and all $u$'s non-neighbors in $C$ as branching vertex (lines 20-22). When a branching vertex is added to $S$, the algorithm adopts the \upd procedure presented in Algorithm \ref{alg:mdbb} to update the current instance (lines 12, 17 and 21). Below, we analyze the time complexity of Algorithm \ref{alg:mdbp}.
		\end{detail}
		
			\begin{algorithm}[!t]
			\caption{The pivoting-based algorithm}
			\scriptsize
			\label{alg:mdbp}
			\KwIn{Bipartite graph $G=(U,V,E)$, integers $k$ and $\theta$}
			\KwOut{A $k$-MDB of $G$}
			
			$D^*\leftarrow$ an empty graph\;
			\brp{$(\emptyset, \emptyset)$, $(U, V)$}\;
			\Return {$D^*$}\;
			
			\Proc{\brp{$S=(U_S,V_S)$, $C=(U_C,V_C)$}} {
				\If{$C=\emptyset$} {
					\If{$\lvert E_{G(S)}\rvert >\lvert E_{D^*}\rvert$ \And $\lvert U_S\rvert\ge \theta$ \And $\lvert V_S\rvert\ge \theta$} {$D^*\leftarrow G(S)$\;}
					\Return{}\;
				}
				
				$C_0\leftarrow \{v\in C\mid\overline d_{S}(v)= 0\}$\;
				

					\If{$C_0= \emptyset$ \Or $\lvert C\setminus C_0\rvert >k-\lvert \overline E_S\rvert$} {
						$u\leftarrow \mathrm{\arg\max}_{v\in C}\overline{d}_S(v)$\;
						$(S', C')\leftarrow$ \upd{$S$, $C$, $u$}\;
						\brp{$S'\cup \{u\}$, $C'$};\ \brp{$S$, $C\setminus \{u\}$}\;
					}
					\Else {
						$u\leftarrow \mathrm{\arg\min}_{v\in C_0}\overline{d}_C(v)$\;
						\If {$\overline d_C(u)>k-\lvert \overline E_S\rvert>0$} {
							$(S', C')\leftarrow$ \upd{$S$, $C$, $u$}\;
							\brp{$S'\cup \{u\}$, $C'$};\ \brp{$S$, $C\setminus \{u\}$}\;
						}
						\Else {
							\For{$v\in \{u\}\cup\overline{N}_C(u)$}{
								$(S', C')\leftarrow$ \upd{$S$, $C$, $v$}\;
								\brp{$S'\cup \{v\}$, $C'$};\ $C\leftarrow C\setminus \{v\}$\;
							}
						}
					}
					
				}
				
			\end{algorithm}
		

		\begin{theorem}
			\label{thm:mdbp}
			Given a bipartite graph $G=(U,V,E)$ and an integer $k$, the time complexity of Algorithm \ref{alg:mdbp} is given by $O(m\beta_k^n)$, where $\beta_k$ is the largest real root of equation $x^{2k+5}-2x^{2k+4}+x^{k+3}-2x+2=0$. Specifically, when $k=1,2$ and $3$, we have $\beta_k=1.717, 1.856$ and $1.931$, respectively.
		\end{theorem}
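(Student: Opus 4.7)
The plan is to refine the recurrence-based analysis used for Theorem~1 by exploiting the improved branching vector delivered by the pivoting rule. Let $T(n)$ denote the maximum number of leaf instances produced by \brp when $|C|=n$; since each recursive call costs $O(m)$, it suffices to prove $T(n) = O(\beta_k^n)$, from which the bound $O(m\beta_k^n)$ is immediate.

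First, I would split the recursion according to which clause of the pivoting-based branching strategy fires. In the binary-branching regime, triggered when $C_0 = \emptyset$, when $|C \setminus C_0| > k - |\overline E_S|$, or when $\overline d_C(u) > k - |\overline E_S|$, the argument from Theorem~1 transfers almost verbatim: two consecutive branching steps in this regime add at least one non-edge to $S$, so after at most $2k+1$ such steps the non-edge budget is exhausted and the residual subproblem is a pure maximum-biclique search governed by $T(n') \le T(n'-1) + T(n'-2)$. In the pivoting regime, the pivot $u \in C_0$ satisfies $r := \overline d_C(u) \le k - |\overline E_S|$, so exactly $r+1$ sub-instances are generated, whose candidate-set sizes are $n-1, n-2, \ldots, n-r-1$; moreover, the first child preserves the non-edge count while the subsequent children add at least one non-edge each, which shapes the subsequent analysis.

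Second, I would identify the worst-case branching vector. Within the pivoting regime the vector is heaviest when $|\overline E_S| = 0$ and $r$ attains its maximum $k$, producing children of sizes $(n-1, n-2, \ldots, n-k-1)$; composing this pivoting step with the binary-branching/biclique-search tail along the deepest surviving path, the worst-case recurrence is expected to take the form
\begin{equation*}
T(n) \le T(n-1) + T(n-2) + \cdots + T(n-(k+1)) + 2\,T(n-(2k+4)).
\end{equation*}
Substituting $T(n) = x^n$ yields the characteristic equation $x^{2k+4} = x^{2k+3} + x^{2k+2} + \cdots + x^{k+3} + 2$; multiplying the associated polynomial by $(x-1)$ telescopes the geometric portion and delivers exactly $x^{2k+5} - 2x^{2k+4} + x^{k+3} - 2x + 2 = 0$. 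Defining $\beta_k$ as its largest real root, numerical root-finding for $k=1,2,3$ produces the asserted constants $1.717, 1.856, 1.931$.

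The main obstacle will be rigorously ruling out alternative interleavings of pivoting and binary-branching steps that could yield a worse branching vector, and justifying the second term $2\,T(n-(2k+4))$. Concretely, one must trace the joint evolution of $|\overline E_S|$ and $|C|$ along the deepest branch of the search tree, verify the invariant $\overline d_{S'}(v_i) \le \overline d_S(v_i) + 1$ after every inclusion so that the non-edge budget decreases as expected, and show that every surviving path is dominated by the branching vector $(1, 2, \ldots, k+1, 2k+4, 2k+4)$. Once this case analysis is complete, the standard machinery of linear recurrences from \cite{RN76} converts the vector into the closed-form bound $T(n) = O(\beta_k^n)$, completing the proof.
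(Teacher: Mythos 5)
Your skeleton matches the paper's: bound the number of leaf instances $T(n)$, split by which clause of the pivoting-based branching rule fires, extract a worst-case branching vector, and convert it to a characteristic polynomial via \cite{RN76} (your algebra recovering $x^{2k+5}-2x^{2k+4}+x^{k+3}-2x+2=0$ from the recurrence $T(n)\le\sum_{i=1}^{k+1}T(n-i)+2T(n-2k-4)$ is correct). However, there is a genuine gap: the recurrence you target is not the one your own case analysis produces. You import Theorem 1's argument for the binary-branching regime verbatim --- ``two consecutive steps add at least one non-edge, so after $2k+1$ steps the budget is exhausted'' --- which yields $\sum_{i=1}^{2k+1}T(n-i)$ and hence $\alpha_k$, not $\beta_k$. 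The paper's key observation, which you are missing, is that whenever Algorithm 2 falls back to binary branching (i.e., $C_0=\emptyset$ or $|C\setminus C_0|>k-|\overline E_S|$, or the pivot has $\overline d_C(u)>k-|\overline E_S|$), the selected branching vertex $u=\arg\max_{v\in C}\overline d_S(v)$ necessarily satisfies $\overline d_S(u)\ge 1$, so \emph{every single} inclusion consumes at least one unit of the non-edge budget; the budget is therefore exhausted after at most $k$ inclusions, which is what gives the $\sum_{i=1}^{k+1}T(n-i)$ part. Without this, the pivoting rule buys you nothing over Theorem 1 in that regime.

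The tail term is likewise unjustified by what you wrote. You claim the residual maximum-biclique search obeys the Fibonacci recurrence $T(n')\le T(n'-1)+T(n'-2)$ (root $1.618$), which would give a tail of $T(n-2k-3)+T(n-2k-4)$ --- again the Theorem 1 bound. The paper instead shows that once $k-|\overline E_S|=0$ the algorithm applies the pivoting strategy with pivot $u$ and $r=\overline d_C(u)$, giving $T(n)\le T(n-r-1)+\sum_{i=1}^{r}T(n-r-i)$, whose worst case over all $r$ is $2T(n-2)$ (root $\sqrt2$, since the case $r\ge 2$ has root at most $\gamma_2=1.395<\sqrt2$); this is exactly what produces the coefficient $2$ in $2T(n-2k-4)$. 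You flag both of these points as ``obstacles,'' but they are not loose ends to be tidied --- they are precisely the two places where the improved branching rule must be exploited, and the analysis you actually state for those cases is the weaker one that only yields $\alpha_k$.
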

		
		\begin{proof}
			In Algorithm \ref{alg:mdbp}, Let $T(n)$ denote the maximum number of leave instances generated by \brp, where $n=\lvert C\rvert =|U_C|+|V_C|$. It is easy to verify that finding the vertex with most (least) non-neighbors in $S$ ($C$) takes at most $O(m)$ time. Then, each recursive call of \brp takes $O(m)$ time, and thus \brp takes $O(m\cdot T(n))$ time.
			We next focus on analyzing the size of $T(n)$. Based on the pivoting-based branching strategy, there are three cases that need to be discussed:
			\begin{enumerate}
				\item If $C_0=\emptyset$ or $\lvert C\setminus C_0\rvert > k-\lvert\overline E(S)\rvert$, \brp continuously applies the binary branching strategy. After moving at most $k$ vertices from $C$ to $S$, $S$ contains $k$ non-edges. At this point, at least one vertex in $C$ will be removed as it has non-neighbors in $S$. Therefore, the recurrence for this case is given by
				$T(n)\le \sum_{i=1}^{k+1} T(n-i)$.
				\item If $k-\lvert\overline E_S\rvert=0$, \brp degenerates into maximum biclique search and continuously applies the pivoting strategy. Let $r=\overline d_C(u)$, the recurrence for this case is given by $T(n)\le T(n-r-1)+\sum_{i=1}^{r}T(n-r-i)$. If $r\le1$, we have $T(n)\le 2T(n-2)$. If $r\ge 2$, let $T(n)=\gamma_r^n$ according to the existing theory \cite{RN76}, where $\gamma_r$ is the largest real root of equation $x^n=x^{n-r-1}+\sum_{i=1}^{r} x^{n-r-i}$. The value of $\gamma_r$ is monotonically decreasing with respect to $r$ when $r\ge 2$. By solving the equation we have $\gamma_2=1.395<\sqrt{2}$. As a result, we can always say $T(n)\le 2T(n-2)$ for this case.
				

				\item Otherwise, let $u=\mathrm{\arg\min}_{v\in C_0}\overline{d}_C(v)$.  
				If $\overline d_C(u)\le k-\lvert\overline E(S)\rvert$, \brp performs the pivoting strategy with $u$ as the pivot vertex. This strategy generates at most $ \overline{d}_C(u)+1 \le k+1$ sub-instances. Therefore, the recurrence for this case is given by $T(n)\le \sum_{i=1}^{k+1} T(n-i)$.
				Otherwise, \brp uses $u$ to perform binary branching strategy. In the next-level branching, we have $|C\setminus C_0| > k- \lvert\overline E(S \cup \{u\})\rvert$, as $\overline d_C(u) > k-\lvert\overline E(S)\rvert$. In this case, the sub-branch (the term $T(n-1)$) will continuously perform the binary branching strategy as discussed in case (1). Consequently, the recurrence for this case is given by $T(n)\le\sum_{i=1}^{k+1} T(n-i)+T(n-2k-2)$. Note that the term $T(n-2k-2)$ degenerates to maximum biclique search, and we can derive that $T(n-2k-2) = 2 T(n-2k-4)$ as discussed in case (2). Therefore, the final recurrence for this case is given by 
					$T(n)\le \sum_{i=1}^{k+1}T(n-1)+2T(n-2k-4)$. 
			\end{enumerate}
			\hspace{1em} By combining the above analysis, the worst-case recurrence is given by
			\begin{align} 
				\setlength\abovedisplayskip{0pt}
				\setlength\belowdisplayskip{0pt}
			T(n)\le \sum_{i=1}^{k+1}T(n-1)+2T(n-2k-4). 
			\end{align}
			
			Based on the existing theorem proposed in \cite{RN76}, we denote by $T(n)=\beta_k^n$, where $\beta_k$ is the largest real root of equation $x^n=x^{n-1}+x^{n-2}+\cdots+x^{n-k-1}+2x^{n-2k-4}$. The equation is equivalent to $x^{2k+5}-2x^{2k+4}+x^{k+3}-2x+2=0$. Consequently, the theorem is proven.
		\end{proof}
		\begin{lemma}
			Given a bipartite graph $G=(U,V,E)$, the time complexity of Algorithm \ref{alg:mdbp} for finding a $0$-MDB of $G$ is $O(m\cdot1.414^n)$. 
		\end{lemma}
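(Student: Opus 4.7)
The plan is to specialize the analysis of Algorithm \ref{alg:mdbp} to $k = 0$ and piggyback on machinery already developed in the proof of Theorem \ref{thm:mdbp}. First I would show that when $k = 0$, the algorithm exclusively follows the pivoting branch. Because any valid $S$ must remain a biclique, $|\overline E_S| = 0$ holds throughout the recursion, which forces $C_0 = C$ and hence $|C \setminus C_0| = 0 = k - |\overline E_S|$; it also makes the inner test $\overline d_C(u) > k - |\overline E_S| > 0$ false. So \brp never invokes binary branching and behaves as a pure pivoting enumeration of maximal bicliques.

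Next I would instantiate the recurrence from case (2) of the proof of Theorem \ref{thm:mdbp} with pivot $u = \arg\min_{v \in C_0} \overline d_C(v)$ and $r = \overline d_C(u)$, namely
\begin{align*}
T(n) \le T(n - r - 1) + \sum_{i=1}^{r} T(n - r - i).
\end{align*}
The $T(n - r - i)$ term relies on the key observation that each $v_i \in \overline N_C(u)$ itself has $\overline d_C(v_i) \ge r$ by the minimality of $u$; consequently, when $v_i$ is added to $S$, the routine \upd must remove at least $r - 1$ additional non-neighbors of $v_i$ beyond the vertices $u, v_1, \ldots, v_i$ that are already excluded explicitly.

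Finally I would solve by casing on $r$. For $r = 0$ the pivoting collapses to a single child and $T(n) \le T(n-1)$. For $r = 1$ the recurrence becomes $T(n) \le 2\,T(n-2)$, which gives $T(n) = \sqrt{2}^{\,n}$. For $r \ge 2$ the characteristic root $\gamma_r$ of the recurrence satisfies $\gamma_r \le \gamma_2 \approx 1.395 < \sqrt{2}$, using the monotonicity of $\gamma_r$ in $r$ that was already invoked in the proof of Theorem \ref{thm:mdbp}. Combining the three cases yields $T(n) \le \sqrt{2}^{\,n}$, and multiplying by the $O(m)$ cost per recursive call gives the stated $O(m \cdot 1.414^n)$ bound.

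The step I expect to require the most care is the first one: confirming rigorously that the branch-selection logic of \brp never triggers the binary case when $k = 0$, so that the more favorable pivoting recurrence applies at every node of the recursion tree. Once that reduction is in hand, the rest is essentially a bookkeeping exercise on top of Theorem \ref{thm:mdbp}.
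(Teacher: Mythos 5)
Your proposal is correct and follows essentially the same route as the paper: the paper's proof sketch likewise observes that for $k=0$ only the pivoting branch (lines 20--22) ever fires and then invokes the case-(2) recurrence $T(n)\le T(n-r-1)+\sum_{i=1}^{r}T(n-r-i)$ from the proof of Theorem \ref{thm:mdbp}, bounding it by $2T(n-2)$. Your write-up merely fills in the details the paper leaves implicit (why $C_0=C$, why the binary tests fail, and why minimality of $u$ guarantees the extra $r-1$ removals), which is a faithful elaboration rather than a different argument.
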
	
		\begin{proofsketch}
			When setting $k=0$, Algorithm \ref{alg:mdbp} will always execute the pivoting-based branching strategy (lines 20-22), corresponding to a recurrence of $T(n) < 2T(n-2)$ according to the proof of Theorem \ref{thm:mdbp}. Therefore, we have $T(n)=1.414^n$.
		\end{proofsketch}

		\noindent\textbf{Discussion.}
		It is worth noting that the branch-and-bound and pivoting techniques has been widely applied on many cohesive subgraph models such as the $k$-defective clique \cite{RN2, RN3, RN4}, the $k$-biplex \cite{RN71, RN72} and the biclique \cite{RN45, RN46, RN44}. However, our binary branching strategy and pivoting-based branching strategy are significantly different from all these existing solutions. 
		
		Compared to the state-of-the-art maximum $k$-defective clique algorithm in \cite{RN2}, which achieves a non-trivial time complexity by utilizes a condition to restrict the scenarios of pivoting, directly applying this approach to our pivoting-based algorithm would degrade the time complexity to $O^*(2^n)$. In contrast, we employ a more refined pivoting strategy that considers both the number of non-edges between $S$ and $C$ (lines 10-13 in Algorithm \ref{alg:mdbp}) and the number of pivot vertex's non-neighbors (lines 15-18 in Algorithm \ref{alg:mdbp}), thereby effectively bounding the time complexity. 
		
		Compared to the state-of-the-art pivoting-based maximum biplex search algorithm in \cite{RN71}, which achieves a time complexity of $O(m\gamma_k^n)$, with $\gamma_k=1.754$ when $k=1$, our algorithm \mdbp reaches a better time complexity. This is mainly because the algorithm in \cite{RN71} only considers prioritizing $k$ non-neighbors of $S$, which is similar to our binary branching strategy in Sec. \ref{sec:mdbb}. By contrast, our pivoting-based branching strategy also bounds the case that fewer than $k$ non-neighbors are between $S$ and $C$, which reduces the size of branching tree and thereby improves the time complexity (see the proof of Theorem \ref{thm:mdbp}).
		
		Compared to the maximal biclique algorithm in \cite{RN45}, our pivot vertex selection approach is much different. During the pivoting process, \cite{RN45} considers the common neighbors, and selects pivot vertex in a heuristic topological order from only one side, which may not be always optimal. In contrast, our pivoting-based branching strategy selects pivot vertex from both side through deterministic condition, which always generates fewest new sub-branches in each recursion.
	
	\section{Optimization Techniques} \label{sec:ot}
	
	In this section, we propose various optimization techniques, including graph reduction techniques to eliminate redundant vertices, upper-bounding techniques to prune unnecessary instances, and a heuristic approach for identifying an initial large $k$-defective biclique. These techniques can be easily integrated into our proposed algorithms in Sec. \ref{sec:alg}, and can significantly enhance their efficiency.
	
	\subsection{Graph Reduction Techniques} \label{sec:red}

	\noindent\textbf{Common-neighbor-based reduction.} For two vertices $u$ and $v$ on the same side, let $cn(u,v)$ (or $cn_S(u,v)$) denote the number of common neighbors of $u$ and $v$ in $G$ (or in $S$). 
	Given a bipartite graph $G=(U,V,E)$ and a $k$-MDB $D=(U_D,V_D,E_D)$ of $G$, every edge $(u, v)\in E_D$ satisfies that (1) $d_{D}(u)\ge\theta-k$, and (2) for all $w\in N_{D}(v)$, $cn_D(u,w)\ge\theta-k$. By combining the two conditions, we can conclude that for an edge $(u,v)\in E$, if $v$ has fewer than $\theta-k$ neighbors $w$ such that $cn(u,w)\ge\theta-k$, the edge $(u,v)$ can be pruned in $G$. 
	
	Based on this idea, we present Algorithm \ref{alg:cnred} to prune such edges from $G$, referred to as \cnred. The algorithm enumerates all edges $(u, v)$ in ascending order of $u$'s degree (lines 1-2). For each edges $(u,v)$, the algorithm counts common neighbors between $u$ and the neighbors $w$ of $v$ (lines 3-8). If $v$ has fewer than $\theta-k$ neighbors $w$ satisfying $cn(u,w)\ge\theta-k$, the edge $(u,v)$ is pruned from $G$ (lines 9-11). Additionally, if the degree of $u$ or $v$ drop below $\theta-k$ due to the deletion of $(u,v)$, the algorithm will also prune $u$ or $v$ from $G$ (lines 12-13). We give the time complexity of Algorithm \ref{alg:cnred} in the following lemma.
	\begin{lemma}
		Given a bipartite graph $G=(U,V,E)$, the complexity of Algorithm \ref{alg:cnred} is $O(d\cdot m)$, where $d=\max_{u\in U\cup V}{d(u)}$.
	\end{lemma}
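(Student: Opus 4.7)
The plan is to bound the total running time by amortizing the common-neighbor computations over all edges sharing the same first endpoint $u$. A naive per-edge analysis---re-marking $N(u)$ and rescanning $N(w)$ for every edge $(u,v)$---would give an $O(d^2 m)$ bound; the correct $O(dm)$ estimate emerges only once one groups the work by $u$, which the enumeration order in lines 1--2 is designed to permit.

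First, I would argue that the ascending-degree enumeration lists all edges incident to each $u$ consecutively, so the counters $cn(u,\cdot)$ can be computed once per $u$ and reused across every such edge. Concretely, for a fixed $u$, iterating $v\in N(u)$ and $w\in N(v)$ and incrementing a counter $cnt[w]$ yields $cn(u,w)=cnt[w]$ for every $w$ in the two-hop neighborhood of $u$, at a cost of $O\!\bigl(\sum_{v\in N(u)} d(v)\bigr)$. Then, for each edge $(u,v)$ incident to $u$, iterating through $N(v)$ and tallying neighbors $w$ with $cnt[w]\ge\theta-k$ costs $O(d(v))$ time (lines 3--11), so the total work attributed to edges incident to $u$ is again $O\!\bigl(\sum_{v\in N(u)} d(v)\bigr)$. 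By keeping an auxiliary list of touched counter entries, the reset of $cnt$ before moving to the next $u$ stays within the same budget.

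Next, I would sum the per-$u$ cost over all vertices:
\begin{align*}
\sum_{u\in U\cup V}\sum_{v\in N(u)} d(v)\;=\;\sum_{v\in U\cup V} d(v)\cdot d(v)\;=\;\sum_{v} d(v)^2\;\le\; d\cdot \sum_{v} d(v)\;=\;2dm,
\end{align*}
which already yields the claimed $O(dm)$ bound. The cascading pruning in lines 12--13, which removes $u$ or $v$ when its degree drops below $\theta-k$, can be charged to the edges being deleted and contributes at most $O(m)$ overall, which is absorbed.

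The principal obstacle is establishing that the implementation indeed realizes this amortization rather than recomputing $cn(u,w)$ independently per edge; the argument hinges on showing the outer enumeration groups edges by endpoint so that the precomputed counter array is valid throughout one group. A secondary subtlety is that the reset must traverse only the entries actually touched while processing $u$ (via the auxiliary list), since a blanket reset over all vertices would add an $O(|U|+|V|)$ term per $u$ and could break the bound on sparse graphs.
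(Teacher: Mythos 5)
Your proof is correct and is exactly the argument the algorithm is designed to admit (the paper defers its own proof to the full version, but the per-vertex amortization $\sum_{u}\sum_{v\in N(u)}d(v)=\sum_{v}d(v)^2\le 2dm$ is the intended one). The ``principal obstacle'' you flag is already resolved by the pseudocode itself: lines 3--8 compute and reset the counters once per outer vertex $u$, touching only the two-hop neighborhood, so no further implementation argument is needed.
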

	In practice, Algorithm \ref{alg:cnred} performs efficiently although its time complexity is $O(d\cdot m)$. This is because in real-world graphs, most vertices have degrees much smaller than $d$. As a result, the actual runtime of the algorithm often does not reach this worst-case time complexity.
	 
	\begin{algorithm}[!t]
		\caption{\protect \cnred{$G$, $k$, $\theta$}}
		\scriptsize
		\KwIn{Bipartite graph $G=(U,V,E)$, integers $k$ and $\theta$}
		\KwOut{A reduced bipartite graph}
		\label{alg:cnred}
		$\mathcal{O}\leftarrow$ the ascending degree order of $U\cup V$\;
		\For{$u\in \mathcal{O}$} {
			\For{$v\in N(u)$} {
				\For{$w\in N(v)$} {
					$cn(w)\leftarrow 0$\;
				}
			}
			\For{$v\in N(u)$} {
				\For{$w\in N(v)$} {
					$cn(w)\leftarrow cn(w)+1$\;
				}
			}
			\For{$v\in N(u)$} {
				$N'\leftarrow \left\{w\in N(v)\mid cn(w)\ge\theta-k\right\}$\;
				
				\lIf{$\lvert N'\rvert < \theta-k$} {
					$E\leftarrow E\setminus (u,v)$
				}
			}
			\lIf{$d(u) < \theta-k$} {$U\leftarrow U\setminus u$}
			\lFor{$v\in N(u)$ s.t. $d(v)<\theta-k$} {
				$V\leftarrow V\setminus v$
			}
		}
		
		\Return{$G$}\;
		
	\end{algorithm}

	
	\noindent\textbf{One-non-neighbor reduction.} 
	For an instance $(S, C)$ and a vertex $u$, let $C_u$ denote the subset of $C$ where all vertices are on the same side as $u$. If a branching vertex has only one non-neighbor in $S\cup C$, removing it from $C$ can lead to the pruning of additional vertices. The following lemma explains this idea in detail.
	
	\begin{theorem}
		\label{thm:onn}
		Given an instance $(S,C)$ and a vertex $u\in C$ with $\overline d_{S\cup C}(u)=1$, if $u$ is removed from $C$, all vertices $v\in C_u$ with $\overline d_S(v)\ge 1$ can be pruned from $C$. Moreover, if $d_C(u)=1$ and $w$ is the unique non-neighbor of $u$, then all vertices in $\overline N_C(w)$ can also be pruned from $C$.
	\end{theorem}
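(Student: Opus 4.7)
The plan is to justify both reductions by a uniform swap argument against the sibling branch that has already committed to including $u$. Whenever the current branch (which has removed $u$ from $C$) contains a $k$-defective biclique $D\supseteq S$ using the vertex we intend to prune, I would form $D'$ by substituting $u$ for that vertex. Because the swap is between two vertices on the same side of the bipartition, the opposite-side vertex set incident to the swapped endpoint is unchanged, which yields the bookkeeping identity
\begin{equation*}
\lvert E_{D'}\rvert-\lvert E_D\rvert \;=\; \overline d_D(v)-\overline d_{D'}(u) \;=\; \lvert\overline E_D\rvert-\lvert\overline E_{D'}\rvert,
\end{equation*}
so both claims reduce to comparing two single-vertex non-degrees.

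For the first part, I would pick any such $D$ lying in $(S\cup C)\setminus\{u\}$ that contains some $v\in C_u$ with $\overline d_S(v)\ge 1$, and set $D'=(D\setminus\{v\})\cup\{u\}$. The hypothesis $\overline d_{S\cup C}(u)=1$ bounds $\overline d_{D'}(u)\le 1$, while the selection condition forces $\overline d_D(v)\ge 1$. The identity then gives $\lvert E_{D'}\rvert\ge \lvert E_D\rvert$ and $\lvert\overline E_{D'}\rvert\le k$; because the swap preserves both side-sizes, $D'$ still satisfies the $\theta$ constraint and, via Lemma \ref{lem:con}, remains connected. Since $u\in D'$, this solution is already enumerated by the sibling $u$-inclusion branch, so no solution in the current branch that uses $v$ can strictly beat what the sibling produces, and $v$ is safe to discard.

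For the second part, I would instantiate the same construction with an arbitrary $x\in\overline N_C(w)$ in place of $v$, and split on whether $w\in D$. If $w\in D$, uniqueness of $w$ as the sole non-neighbor of $u$ in $S\cup C$ yields $\overline d_{D'}(u)=1$, while $(x,w)\notin E$ yields $\overline d_D(x)\ge 1$; if $w\notin D$, then $\overline d_{D'}(u)=0$ and the required inequality is trivial. Either way the identity again gives $\lvert E_{D'}\rvert\ge \lvert E_D\rvert$ with the non-edge budget preserved, and the same branch-domination argument prunes every $x\in\overline N_C(w)$.

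The main obstacle will be the non-edge accounting underpinning the identity: I must verify that every non-edge of $D'$ is either a non-edge of $D$ disjoint from the swapped vertex or a fresh non-edge incident to $u$, and conversely for $D$, so that no cross-terms between $u$ and the untouched part $D\setminus\{v\}$ leak in. Once this is in place, $\theta$-feasibility of $D'$ and its membership in the sibling branch follow at once from the side-preserving nature of the swap, and both pruning rules drop out.
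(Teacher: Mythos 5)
Your proposal is correct and follows essentially the same route as the paper: an exchange argument that replaces the to-be-pruned same-side vertex with $u$ and, for the second claim, splits on whether $w$ belongs to the solution, so that any solution surviving in the $u$-exclusion branch is dominated by one already covered by the $u$-inclusion branch. The only difference is presentational — your explicit non-edge bookkeeping identity and the check that side sizes (hence the $\theta$ constraint and connectivity) are preserved fill in details the paper's proof leaves implicit.
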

	
	\begin{proof}
		Assume $D$ is a $k$-MDB derived from $(S,C)$ excluding $u$. For any $v\in C_u$ such that $v\in D$ and $\overline d_S(v)\ge 1$, we have $D\setminus \{v\}\cup\{u\}$ is also a $k$-MDB. In the case where $\overline d_C(u)=1$ and $w$ is the unique non-neighbor of $u$, if $w\not\in D$, then $D\cup\{u\}$ forms a larger $k$-MDB. Otherwise, for any $v\in \overline N_C(w)$ such that $v\in D$, we have $D\setminus\{v\}\cup\{u\}$ is also a $k$-MDB. 
	\end{proof}
	
	\noindent\textbf{Ordering-based reduction.}
	Given a bipartite graph $G=(U,V,E)$, let $u_1,u_2,\dots,u_{\lvert U\rvert}$ be an arbitrary vertex order of $U$. For each vertex $u_i$, the ordering-based reduction aims to find a $k$-MDB $D_i$ including $u_i$ while excluding $u_1,u_2,\dots,u_{i-1}$. Obviously, the $k$-MDB of $G$ is the one with most edges among $D_1,D_2,\dots,D_{\lvert U\rvert}$. Next, we extend Lemma \ref{lem:con} to constrain the vertex distance in $D_i$, as shown in the following lemma.
	
	\begin{lemma}
		\label{lem:dis}
		Given a $k$-defective biclique $D=(U_D,V_D,E_D)$, if $\lvert U_{D}\rvert>k$ and $\lvert V_{D}\rvert>k$, the maximum distance between any two vertices of $D$ is 3.
	\end{lemma}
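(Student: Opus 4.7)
\begin{proofsketch}
The plan is to argue case-by-case on which sides the two endpoints lie, using the fact that the total number of non-edges in $D$ is at most $k$, which is strictly less than both $|U_D|$ and $|V_D|$. Note first that by Lemma~\ref{lem:con}, $D$ is connected, so all pairwise distances in $D$ are well defined.

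For the \emph{same-side} case, fix $x,y$ both in $U_D$ (the case $x,y\in V_D$ is symmetric). I would count common neighbours in $V_D$: since $\sum_{u\in U_D}\overline{d}_D(u)=|\overline{E}_D|\le k$, we have $\overline{d}_D(x)+\overline{d}_D(y)\le k$, and therefore
\[
|N_D(x)\cap N_D(y)|\ \ge\ |V_D|-\overline{d}_D(x)-\overline{d}_D(y)\ \ge\ |V_D|-k\ >\ 0.
\]
So $x$ and $y$ share at least one neighbour in $V_D$, giving distance at most $2$.

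For the \emph{opposite-side} case, take $x\in U_D$ and $y\in V_D$. If $(x,y)\in E_D$ the distance is $1$, so assume $(x,y)\notin E_D$. Since $\overline{d}_D(x)\le k<|V_D|$, the vertex $x$ has at least one neighbour in $V_D$, and since $y\notin N_D(x)$ there exists $v\in V_D\setminus\{y\}$ with $(x,v)\in E_D$. Now $v$ and $y$ both lie in $V_D$, so by the same-side argument just proved they share a common neighbour $u\in U_D$. Concatenating gives the path $x\text{--}v\text{--}u\text{--}y$ of length $3$, as required.

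The only delicate step is confirming, in the opposite-side case, that we can choose the intermediate vertex $v$ distinct from $y$; I would expect this to be the main obstacle at first glance, but it follows immediately from the strict inequality $|V_D|>k\ge\overline{d}_D(x)$ together with $y\in\overline{N}_D(x)$. Everything else is a direct consequence of bounding $\overline{d}_D(\cdot)+\overline{d}_D(\cdot)$ by $|\overline{E}_D|\le k$, so no additional combinatorial work should be needed.
\end{proofsketch}
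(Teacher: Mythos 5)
Your proof is correct: bounding $\overline{d}_D(x)+\overline{d}_D(y)\le|\overline{E}_D|\le k<|V_D|$ to extract a common neighbour for same-side pairs, and then prepending one edge for opposite-side pairs, is exactly the natural argument, and the choice of $v\ne y$ is handled properly since $y\in\overline{N}_D(x)$ forces every neighbour of $x$ to differ from $y$. The paper defers its own proof of this lemma to the full version, so there is nothing to compare against, but your argument is complete and sound as written.
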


	Lemma \ref{lem:dis} indicates that when searching for $D_i$, vertices with distances more than $3$ from $u_i$ can be pruned. Let $N^2(u)=\{w\in N(v)\mid v\in N(u)\}$, i.e., the set of $u$'s 2-hop neighbors. Let $N_{+}^2(u_i)=N^2(u_i)\cap\{u_{i+1}, u_{i+2}, u_{\lvert U\rvert}\}$. According to Lemma \ref{lem:dis}, we can construct an initial instance $(S_i, C_i)$ to find $D_i$, where $S_i=(\{u_i\},\ \emptyset)$ and $C_i=(N_{+}^2(u_i),\ \bigcup_{u_j\in N_{+}^2(u_i)}N(u_j))$. 
	
	In practical implementation, to identify a larger $k$-defective biclique as early as possible, we traverse the vertices in $U$ in descending order of their degrees. This approach is based on the intuition that vertices with higher degrees are more likely to be present in an MDB. 
	Additionally, the common-neighbor-based reduction techniques can be employed on $(S_i, C_i)$ to prune unnecessary vertices and edges. Specifically, vertices $u\in C_i$ are pruned if $d_{S_i\cup C_i}(u)<\theta-k$, and edges $(u,v)$ are pruned if there are fewer than $\theta-k$ neighbors $w$ of $v$ satisfying $cn(u,w)\ge\theta-k$.
	
	
	\noindent\textbf{Progressive bounding reduction.} 
	The size threshold $\theta$ is provided as an input parameter and can be set to a very small value (e.g., $=k+1$), which may limit the effectiveness of our optimization techniques. To address this issue, we utilize the progressive bounding technique inspired from the maximum biclique problem \cite{RN13} to provide tighter size thresholds for the $k$-MDB search problem. 
	
	Specifically, let $\theta_U^t$ and $\theta_V^t$ denote the size thresholds in $t$-th round. The objective of $t$-th round is to identify a $k$-MDB $D_t=(U_{D_t},V_{D_t},E_{D_t})$ satisfying $\lvert U_{D_t} \rvert\ge \theta_U^t$ and $\lvert V_{D_t} \rvert\ge \theta_V^t$. Let $D^*=\max\{D_1,D_2,\dots,D_{t-1}\}$ be the currently largest $k$-defective biclique. The progressive bounding reduction generates $\theta_U^t$ and $\theta_V^t$ in the following way:
	\begin{align}
		\theta_U^t &= \max\{\theta, \lfloor\theta_U^{t-1}/2 \rfloor\}, \label{eq:6} \\
		\theta_V^t &= \max\{\theta,\lfloor\lvert E_{D^*}\rvert/\theta_U^{t-1}\rfloor\}. \label{eq:7}
	\end{align}
	
	Specially, $\theta_U^0=\max_{u\in U}d_U(u)+k$, and the iteration terminates when $\theta_U^t=0$. The following lemma shows the correctness of the progressive bound reduction.
	\begin{lemma}
		\label{lem:pb}
		Given a bipartite graph $G$ and a $k$-MDB $D=(U_D,V_D,E_D)$ of $G$, there always exists an integer $t$ such that $\theta_U^t\le\lvert U_{D}\rvert$ and $\theta_V^t\le\lvert V_{D}\rvert$.
	\end{lemma}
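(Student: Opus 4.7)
The plan is to exploit two structural features of the iteration. First, $\theta_U^t$ is non-increasing and halves at each step until it stabilizes at $\theta$, so it must eventually drop at or below $|U_D|$. Second, the definition of $\theta_V^t$ through $\lfloor|E_{D^*}|/\theta_U^{t-1}\rfloor$ is precisely tuned so that at the step where $\theta_U^t$ first crosses below $|U_D|$, the value $\theta_V^t$ is automatically bounded by $|V_D|$. Thus I would produce the required index $t$ as the first round at which $\theta_U^t\le|U_D|$, and then show that this same $t$ already witnesses $\theta_V^t\le|V_D|$.

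First, I would verify the base case $\theta_U^0\ge|U_D|$. Using the standard degree bound for $k$-defective bicliques, every $v\in V_D$ satisfies $d_D(v)\ge|U_D|-k$, since at most $k$ non-edges are incident to $v$ in $D$; hence $d(v)\ge|U_D|-k$, which yields $|U_D|\le\max_v d(v)+k$ and matches the intent of the initial value $\theta_U^0$. Because $|U_D|\ge\theta$ by the problem's size threshold and $\theta_U^t=\max\{\theta,\lfloor\theta_U^{t-1}/2\rfloor\}$ is a non-increasing sequence bounded below by $\theta$, a simple minimum-element argument produces a smallest index $t^\star$ satisfying $\theta_U^{t^\star}\le|U_D|\le\theta_U^{t^\star-1}$ (with the convention $t^\star=0$ in the edge case $|U_D|=\theta_U^0$).

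For this $t^\star$, I would then establish $\theta_V^{t^\star}\le|V_D|$ by a short chain of inequalities. Since $D^*$ stores the best solution discovered by the start of round $t^\star$ and $D$ is a maximum $k$-defective biclique, $|E_{D^*}|\le|E_D|\le|U_D|\cdot|V_D|\le\theta_U^{t^\star-1}\cdot|V_D|$, so dividing through yields $\lfloor|E_{D^*}|/\theta_U^{t^\star-1}\rfloor\le|V_D|$. Combined with $|V_D|\ge\theta$, the $\max$ in (\ref{eq:7}) then evaluates to a value at most $|V_D|$, which together with $\theta_U^{t^\star}\le|U_D|$ closes the proof.

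The main delicate point I anticipate is the base case: I must make sure that the definition of $\theta_U^0$ indeed provides an upper bound on $|U_D|$, which requires matching the ``side'' convention used in the initial formula to the one used in the update rule for $\theta_V^t$. I would also need to handle the degenerate early-round case where $D^*$ is still empty; in this situation $|E_{D^*}|=0$, so the second argument of the $\max$ in (\ref{eq:7}) vanishes and $\theta_V^{t^\star}=\theta\le|V_D|$ holds automatically. Finally, I would spell out the off-by-one behavior of the sequence near the stabilization point $\theta_U^{t-1}/2\le\theta$ to confirm the smallest-index argument still selects a valid $t^\star$.
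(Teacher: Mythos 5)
Your argument is correct and is surely the intended one (it mirrors the standard correctness argument for progressive bounding in the maximum-biclique framework the paper adapts): take the first round $t^\star$ with $\theta_U^{t^\star}\le\lvert U_D\rvert$, which exists since the sequence $\theta_U^t$ decreases monotonically to $\theta\le\lvert U_D\rvert$, and then chain $\lfloor\lvert E_{D^*}\rvert/\theta_U^{t^\star-1}\rfloor\le\lvert U_D\rvert\cdot\lvert V_D\rvert/\theta_U^{t^\star-1}\le\lvert V_D\rvert$ using $\lvert E_{D^*}\rvert\le\lvert E_D\rvert$ and the minimality of $t^\star$. You are also right that the only delicate point is the initialization: as literally written, $\theta_U^0=\max_{u\in U}d_U(u)+k$ does not upper-bound $\lvert U_D\rvert$ (the degree bound $d_D(v)\ge\lvert U_D\rvert-k$ applies to vertices $v\in V_D$, so the max must be taken over $V$), and your derivation supplies the correct side convention; with that fix the proof is complete.
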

	
	In many real-world graphs, $\theta_U^t$ and $\theta_V^t$ are often significantly larger than $\theta$, which helps facilitate more effective graph reduction, such as a smaller reduced graph by performing the common-neighbor-based reduction.
	
	\noindent \textbf{Discussion.}
		Several prior studies have proposed similar graph reduction techniques for other subgraph models \cite{RN2, RN71, RN75, RN102, RN105}. The progressive bounding reduction have been widely applied on these models \cite{RN71,RN102,RN105}. In this work, we adapted these techniques to our problem by incorporating the unique properties of $k$-defective bicliques. However, our proposed common-neighbor-based reduction and one-non-neighbor reduction are technically distinct from all prior approaches. 
		
		On the one hand, the common-neighbor-based reduction leverages the constraints on vertex degree and common neighbors in a $k$-MDB. The reduction based on vertex degree has been widely applied in the form of core \cite{RN71, RN105}. However, this approach results in a very limited reduction in graph size. Additionally, by applying a single constraint on common neighbors of two vertices $u$ and $v$ where $cn(u,v)<\theta-k$, we can conclude that $u$ and $v$ cannot coexist in the same $k$-MDB. However, maintaining such coexistence relationships requires $O(n^2)$ space, making it quite challenging for practical implementation. 
		
		On the other hand, the one-non-neighbor reduction focuses on analyzing vertices in $C$ that have only one non-neighbor in the current subgraph $G(S\cup C)$. A relevant prior technique addresses the maximum $k$-defective clique search problem \cite{RN2}, where a vertex in $C$ with exactly one non-neighbor in the subgraph can be directly added to $S$. However, this reduction cannot be directly applied to the $k$-MDB problem, as such a vertex may not belong to any $k$-MDB. For example, consider the following $1$-MDB search instance $(S,C)$, where $\lvert U_S\rvert=2$, $\lvert V_S\rvert=3$, $U_C=\{u\}$ and $V_C=\{v\}$. $G(S)$ has one non-edge, $u$ and $v$ are non-neighbors. Directly adding $v$ to $S$ results in a non-maximum solution, because $G(S\cup\{v\})$ has $7$ edges while adding $u$ to $S$ yields a $k$-defective biclique with $8$ edges.
		In contrast to \cite{RN2}, we focus on the deletion of such vertices rather than their addition, which restricts the pruning scope to vertices on the same side (as discussed in Theorem \ref{thm:onn}), thereby ensuring the correctness of the searching result.

	\subsection{Novel Upper-Bounding Techniques} \label{sec:ub}

	In this subsection, we propose a novel upper-bounding technique that provides effective upper bounds for both vertices and edges. 
	Formally, given an instance $(S,C)$, assume $u_1,u_2,\cdots,u_{|U_C|}$ and $v_1,v_2,\cdots,v_{|V_C|}$ are the vertices of $U_C$ and $V_C$ in ascending order of $\overline d_S(\cdot)$, respectively. 
	Let $\overline d_S^i(U_C)=\sum_{t=1}^{i}\overline d_S(u_i)$ and $\overline d_S^i(V_C)=\sum_{t=1}^{i}\overline d_S(v_i)$, i.e., the total number of non-neighbors in $S$ among the first $i$ vertices of $U_C$ and $V_C$. Based on this, the vertex upper bound is presented in the following lemma.

	\begin{lemma}[Vertex upper bounds]
		\label{lem:ivub}
		For any $k$-MDB $D=(U_{D},V_{D},E_{D})$ derived from $(S,C)$, we have $\lvert U_{D}\rvert\le |U_S|+i$ and $\lvert V_{D}\rvert\le |V_S|+j$, where $i$ and $j$ are the largest values satisfying $\overline d_S^i(U_C)\le k-\lvert\overline E_S\rvert$ and $\overline d_S^j(V_C) \le k-\lvert\overline E_S\rvert$, respectively.
	\end{lemma}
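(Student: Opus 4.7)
The plan is to take a $k$-MDB $D=(U_D,V_D,E_D)$ derivable from the instance $(S,C)$, split its vertex set into the ``old'' part coming from $S$ and the ``new'' part coming from $C$, and then account for the non-edges of $G(D)$ contributed by each piece. Since $G(S)\subseteq G(D)$, write $U_D'=U_D\setminus U_S\subseteq U_C$ and $V_D'=V_D\setminus V_S\subseteq V_C$. The non-edges of $G(D)$ split into four disjoint classes according to whether each endpoint is in $U_S,V_S,U_D',V_D'$.

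The first key step is to exhibit three of these classes whose sizes are easy to read off: the non-edges inside $G(S)$ contribute exactly $\lvert\overline E_S\rvert$; the non-edges between $U_D'$ and $V_S$ contribute $\sum_{u\in U_D'}\overline d_S(u)$ (because for $u\in U\subseteq U_C$, $\overline N_S(u)$ is by definition the set of vertices on the opposite side of $S$, i.e.\ in $V_S$, that are non-adjacent to $u$); and symmetrically the non-edges between $V_D'$ and $U_S$ contribute $\sum_{v\in V_D'}\overline d_S(v)$. Dropping the remaining class (non-edges between $U_D'$ and $V_D'$) only weakens the inequality, so the $k$-defective constraint $\lvert U_D\times V_D\setminus E_D\rvert\le k$ already gives
\begin{equation*}
\sum_{u\in U_D'}\overline d_S(u)+\sum_{v\in V_D'}\overline d_S(v)\;\le\;k-\lvert\overline E_S\rvert.
\end{equation*}

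In particular, $\sum_{u\in U_D'}\overline d_S(u)\le k-\lvert\overline E_S\rvert$. Now invoke the sorting hypothesis: since $u_1,u_2,\dots,u_{|U_C|}$ are ordered in ascending $\overline d_S(\cdot)$, any subset $U_D'\subseteq U_C$ of size $\ell$ satisfies $\sum_{u\in U_D'}\overline d_S(u)\ge \overline d_S^\ell(U_C)$ by a standard minimality-of-prefix-sums argument. Combining this with the previous inequality yields $\overline d_S^{|U_D'|}(U_C)\le k-\lvert\overline E_S\rvert$, which by the maximality defining $i$ forces $|U_D'|\le i$. Hence $|U_D|=|U_S|+|U_D'|\le |U_S|+i$, and the bound for $|V_D|$ follows by the same argument with the roles of the two sides swapped.

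I do not expect a real obstacle here; the only subtlety is making sure that the three non-edge classes I single out are genuinely disjoint and correctly parametrised by $\overline d_S(\cdot)$, rather than by the degrees restricted to $S\cup C$ (which would double-count contributions from $U_D'\times V_D'$). Once that bookkeeping is pinned down, the rest is the greedy sort-and-sum observation.
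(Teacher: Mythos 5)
Your proof is correct and follows essentially the same route as the paper's: both arguments reduce to the observation that the non-edges of $D$ incident to $V_S$ from the new vertices $U_D'\subseteq U_C$ number $\sum_{u\in U_D'}\overline d_S(u)$ and must fit within the budget $k-\lvert\overline E_S\rvert$, after which the ascending ordering makes the prefix sum $\overline d_S^{\lvert U_D'\rvert}(U_C)$ a lower bound and the maximality of $i$ closes the argument. The paper phrases this as ``$(U_S\cup U_{C'},V_S)$ is still a $k$-defective biclique,'' which is exactly your explicit four-class decomposition with the $U_D'\times V_D'$ and $V_D'\times U_S$ classes dropped.
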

	
	\begin{proofsketch}
		For the side of $U$, it is obvious that $i$ is the maximum number of vertices that can be added from $U_C$ to $U_S$ such that $(U_S\cup \{u_1,u_2,\dots,u_i\}, V_S)$ remains a valid $k$-defective biclique. For any $k$-MDB derived from $(S,C)$ consisting of $(U_S\cup U_{C'}, V_S\cup V_{C'})$, $(U_S\cup U_{C'}, V_S)$ is still a $k$-defective biclique. Therefore, we have that $|U_{C'}|\le i$. The proof on the other side is similar.
	\end{proofsketch}
	
	Based on Lemma \ref{lem:ivub}, if both sides reach their vertex upper bounds, an edge upper bound can be calculated as $(\lvert U_S\rvert+i)\times (\lvert V_S\rvert+j)-\lvert\overline E_S\rvert-\overline d_S^i(U_C)-\overline d_S^j(V_C)$. However, this upper bound does not account for the coexistence of vertices in $U_C$ and $V_C$. To solve the issue, we present a tighter edge upper bound in the following lemma.
	
	\begin{lemma}[Edge upper bound]
		\label{lem:ieub}
		For any $k$-MDB $D$ derived from $(S,C)$, we have that 
		$
		\lvert E_{D}\rvert\le \max_{0\le i\le\lvert U_C\rvert}(\lvert U_S\rvert+i)\times(\lvert V_S\rvert+j)-\lvert\overline E_S\rvert-\overline d_S^i(U_C)-\overline d_S^j(V_C) 
		$, where $j$ is the largest value satisfying $\overline d_S^i(U_C)+\overline d_S^j(V_C)\le k-\lvert\overline E_S\rvert$.
	\end{lemma}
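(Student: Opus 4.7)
The plan is to fix an arbitrary $k$-MDB $D$ derived from $(S,C)$ with $U_D = U_S \cup U_{C'}$ and $V_D = V_S \cup V_{C'}$, where $U_{C'} \subseteq U_C$, $V_{C'} \subseteq V_C$, $i' = |U_{C'}|$ and $j' = |V_{C'}|$, and then bound $|E_D|$ by the difference between the total number of vertex pairs $(|U_S|+i')(|V_S|+j')$ and a lower bound on the number of non-edges in $D$. First, I would decompose the non-edges in $G(D)$ into four disjoint contributions: (i) the $|\overline E_S|$ non-edges already inside $S$; (ii) the non-edges between $U_{C'}$ and $V_S$, which equal $\sum_{u \in U_{C'}} \overline d_S(u)$; (iii) the non-edges between $V_{C'}$ and $U_S$, which equal $\sum_{v \in V_{C'}} \overline d_S(v)$; and (iv) the non-edges between $U_{C'}$ and $V_{C'}$, which are nonnegative. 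Since the vertices of $U_C$ and $V_C$ are listed in ascending order of $\overline d_S(\cdot)$, the first two sums are minimized when $U_{C'}$ and $V_{C'}$ consist of the smallest $i'$ and $j'$ vertices, giving the tight lower bound $\overline d_S^{i'}(U_C) + \overline d_S^{j'}(V_C)$ on contributions (ii)+(iii).

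Combining these observations yields
\[
|E_D| \le (|U_S|+i')(|V_S|+j') - |\overline E_S| - \overline d_S^{i'}(U_C) - \overline d_S^{j'}(V_C).
\]
Next, I would use the $k$-defective constraint $|\overline{E_D}| \le k$: since the total non-edges are at least $|\overline E_S| + \overline d_S^{i'}(U_C) + \overline d_S^{j'}(V_C)$, we obtain $\overline d_S^{i'}(U_C) + \overline d_S^{j'}(V_C) \le k - |\overline E_S|$. This means that $j'$ is at most $j(i')$, the largest index $j$ defined in the lemma's statement for the chosen $i'$.

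The final step, which I expect to be the main subtlety, is to show that replacing $j'$ by the maximal $j(i')$ only weakens (i.e.\ enlarges) the right-hand side, so that it suffices to take the max over pairs $(i, j(i))$. Incrementing $j$ by $1$ changes the right-hand side by $(|U_S|+i) - \overline d_S(v_{j+1})$; because $\overline d_S(v_{j+1}) \le |U_S|$, this increment is nonnegative, so the bound is monotonically non-decreasing in $j$ for fixed $i$. Thus the bound evaluated at $(i', j')$ is at most the one at $(i', j(i'))$, and taking the maximum over $i$ in $\{0,1,\dots,|U_C|\}$ gives the desired expression. Putting everything together establishes the stated edge upper bound for every $k$-MDB $D$ derived from $(S,C)$.
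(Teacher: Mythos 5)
Your proof is correct. It reaches the bound by a different route than the paper: the paper constructs a relaxed instance $(S,C')$ in which $U_C$ and $V_C$ are fully interconnected, argues via an exchange argument that the optimum of the relaxed instance is attained by a pair of prefixes, and notes that this optimum dominates every $k$-MDB of the original instance; you instead take an arbitrary $k$-MDB $D$ with $i'=\lvert U_{C'}\rvert$, $j'=\lvert V_{C'}\rvert$, lower-bound $\lvert\overline E_D\rvert$ directly by $\lvert\overline E_S\rvert+\overline d_S^{i'}(U_C)+\overline d_S^{j'}(V_C)$ (dropping the $U_{C'}\times V_{C'}$ non-edges and using the ascending order so that prefix sums minorize arbitrary sums), and then extract $j'\le j(i')$ from the budget $\lvert\overline E_D\rvert\le k$. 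The two arguments share the same two key facts -- non-edges internal to $C$ may be ignored, and sorting by $\overline d_S(\cdot)$ makes prefixes extremal -- but your version buys two things: it avoids the slightly delicate exchange step (which, as written in the paper, does not verify that the replacement preserves feasibility), and your explicit monotonicity computation, that incrementing $j$ changes the right-hand side by $(\lvert U_S\rvert+i)-\overline d_S(v_{j+1})\ge 0$, cleanly justifies why the maximum need only range over the pairs $(i,j(i))$ with $j(i)$ maximal, a point the paper's sketch leaves implicit. The one cosmetic caveat is that for indices $i$ with $\overline d_S^i(U_C)>k-\lvert\overline E_S\rvert$ no feasible $j$ exists, but your argument sidesteps this since the $i'$ realized by an actual $k$-MDB always admits $j(i')\ge j'\ge 0$.
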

	
	\begin{proofsketch}
		Consider a new instance $(S,C')$ reconstructed from $(S,C)$ by fully connecting vertices between $U_{C'}$ and $V_{C'}$, i.e., $\overline E_C=\emptyset$. The ordering of vertices in $C'$ with respected to $\overline d_S(\cdot)$ remains unchanged. There exists a $k$-MDB derived from $(S,C')$ composed of $S$, a prefix of $\{u_1,u_2,\dots,u_{\lvert U_C\rvert}\}$, and a prefix of $\{v_1,v_2,\dots,v_{\lvert V_C\rvert}\}$ (otherwise, for any vertex not in a prefix, a preceding vertex can always be found to replace it). Moreover, the $k$-MDB derived from $(S,C')$ have more edges than any $k$-MDB derived from $(S, C)$. Consequently, Lemma \ref{lem:ieub} provide a correct edge upper bound.
	\end{proofsketch}

	\begin{algorithm}[!t]

		\caption{\protect \ub {$(S,C)$, $k$}}
		\scriptsize
		
		\label{alg:ub}
		\KwIn{Instance $(S=(U_S,V_S),C=(U_C,V_C))$ and integer $k$}
		\KwOut{two vertex upper bounds and an edge upper bound}
		
		$k'\leftarrow k-\lvert\overline E_S\rvert$\;
		Let $u_1,u_2,\cdots,u_{\lvert U_C\rvert}$ and $v_1,v_2,\cdots,v_{\lvert V_C\rvert}$ be the vertices of $U_C$ and $V_C$ in ascending order of $\overline d_S(\cdot)$\;
		$\overline d_U\leftarrow 0$; $\overline d_V\leftarrow 0$; $c_U\leftarrow 0$; $c_V\leftarrow 0$\;
		\For{$i\leftarrow 1\dots\lvert U_C\rvert$ s.t. $\overline d_U+\overline d_S(u_i)\le k'$}{
			$\overline d_U\leftarrow \overline d_U+\overline d_S(u_i)$\;
			$c_U\leftarrow c_U+1$\;
		}
		$e\leftarrow (\lvert U_S\rvert+c_U)\cdot \lvert V_S\rvert - \lvert\overline E_S\rvert - \overline d_U$\; $i\leftarrow c_U$\; 
		\For{$j\leftarrow 1\dots\lvert V_C\rvert$ s.t. $\overline d_V+\overline d_S(v_i)\le k'$} {
			\While{$\overline d_U+\overline d_V+j>k'$}{
				$\overline d_U\leftarrow \overline d_U-\overline d_S(u_i)$\;
				$i\leftarrow i-1$\;
			}
			$\overline d_V\leftarrow \overline d_V+\overline d_S(u_i)$\;
			$c_V\leftarrow c_V+1$\;
			$e\leftarrow\max\{e, (\lvert U_S\rvert+i)\cdot(\lvert U_S\rvert+j)-\lvert E_S\rvert-\overline d_U-\overline d_V\}$\;
		}
		
		\Return{$\lvert U_S\rvert+c_U$, $\lvert V_S\rvert+c_V$, $e$}\;

	\end{algorithm}
	
	
	\begin{detail}
		Algorithm \ref{alg:ub} presents a pseudocode of our upper-bounding techniques. It starts by setting $k'$ as the remaining number of missing edges (line 1), and sorting the vertices of $U_C$ and $V_C$ in ascending order of $\overline d_S(\cdot)$ (line 2), which can be efficiently done using a bucket sort in linear time and space. Then, the algorithm calculates the vertex upper bound in $U$ by finding the longest prefix of $U_C$ that can be added to $U_S$ (lines 4-6), where $c_U$ records the prefix length. After that, the algorithm derives the upper bound $c_V$ for vertices in $V$ using the same way, and simultaneously calculates the upper bound $e$ for edges by removing vertices from the end of the prefix of $U_C$ while computing the longest prefix of $V_C$ (lines 9-15). Finally, it returns two upper bounds of both sides as well as an edge upper bound. We present the time complexity of the algorithm in the following lemma.
	\end{detail}
	
	\begin{lemma}
		The time complexity of Algorithm \ref{alg:ub} is $O(n)$.
	\end{lemma}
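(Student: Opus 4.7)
The plan is to analyze Algorithm~\ref{alg:ub} stage by stage and show that each stage runs in linear time, with the only non-obvious part being an amortized analysis of the nested while loop.

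First I would handle the sorting step on line~2. Since $\overline d_S(v) \le |S| \le n$ for every $v \in C$, the keys lie in the range $\{0,1,\dots,n\}$, so the ascending orderings of $U_C$ and $V_C$ can be produced by bucket sort in $O(|U_C| + |V_C| + n) = O(n)$ time and space. This avoids any $\log$-factor that a comparison-based sort would introduce.

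Next I would account for the two prefix-scanning loops. The loop on lines~4--6 iterates at most $|U_C|$ times, performing $O(1)$ work per iteration, so it contributes $O(|U_C|)$ in total. The main subtlety lies in the loop on lines~9--15, whose body contains an inner while loop that decrements $i$. I would argue that the total cost of this doubly-nested construct is linear by an amortized (potential) argument: the index $i$ starts at $c_U \le |U_C|$ before the outer loop begins and is never incremented anywhere inside the outer loop, so the total number of iterations executed by the inner while loop across the entire run is bounded by the initial value of $i$, i.e.\ at most $|U_C|$. The outer loop itself performs at most $|V_C|$ iterations with $O(1)$ work outside the while, giving $O(|U_C| + |V_C|)$ for the combined pair.

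The main obstacle, and really the only non-routine point, is recognizing that a naive worst-case-per-iteration bound on the inner while loop would yield $O(|U_C|\cdot|V_C|)$, and that a tighter amortized argument is needed; once the monotonicity of $i$ is observed the bound is immediate. Summing the contributions of all stages gives $O(n) + O(|U_C|) + O(|U_C| + |V_C|) = O(|U_C| + |V_C| + n) = O(n)$, which is the claimed complexity.
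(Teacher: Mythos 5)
Your proof is correct and matches the argument the paper intends: the implementation details accompanying Algorithm~\ref{alg:ub} explicitly invoke a linear-time bucket sort for line~2 and describe the second loop as ``removing vertices from the end of the prefix of $U_C$ while computing the longest prefix of $V_C$,'' which is exactly the two-pointer structure your amortized analysis of the monotonically decreasing index $i$ formalizes. Nothing is missing.
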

	
	
	\subsection{A Heuristic Approach} \label{sec:heu}

	We propose a heuristic approach to efficiently calculate a large $k$-defective biclique. We observe that in many real-world graphs, one side of the $k$-MDB with size threshold $\theta$ contains exactly $\theta$ vertices, while the other side contains much more than $\theta$ vertices. Motivated by this, our heuristic approach 
	aims to find a large $k$-defective biclique with one side has $\theta$ vertices.
	Specifically, given a graph $G=(U,V,E)$ and an integer $k$, we initially set two vertex subset $U'=\empty$ and $V'=V$. let $u\in U$ be the vertex with largest $d_{V'}(u)$. If $G(U'\cup\{u\},V')$ has more than $k$ non-edges, we iteratively delete vertices $v\in V$ with largest $\overline d_{U'\cup\{u\}}(v)$ until $G(U'\cup\{u\},V')$ has no more than $k$ non-edges, and then we add $u$ to $U'$. We repeat this process until $\lvert U\rvert=\theta$, and yield a large $k$-MDB $G(U',V')$ if $\lvert V\rvert=\theta$. Otherwise, we yield an empty answer. It is easy to verify that this approach can be finished in $O(\theta n+m)$, which is linear as $\theta$ is always a small value.

	\subsection{Parallelization}
	
		To further enhance the scalability of our algorithms for handling larger-scale graphs, we implement parallelization strategy based on our proposed algorithms and optimization techniques. 
		The parallelization of our branch-and-bound and pivoting-based algorithms is based on the search tree of the algorithm (as shown in Fig. \ref{fig:eg2}b and Fig. \ref{fig:eg3}b). Specifically, for a given branching strategy, once an instance $(S, C)$ corresponding to a branch is determined, the sub-branches generated by this branch are uniquely defined and mutually independent. Therefore, we parallelize the processing of sub-branches at lower levels of the search tree. Additionally, the parallelization process is controlled by a global counter that tracks the total number of parallel tasks, and when the counter reaches a predefined threshold, the sub-branches generated by each branch are processed sequentially.
		Additionally, we also perform parallelization on the ordering-based reduction through the traversal of ordered vertices, and on the common-neighbor-based reduction through the loop in lines 2-13 of Algorithm \ref{alg:cnred}. 
		
		\begin{algorithm}[!t]
			\caption{\protect \mdbp}
			\scriptsize
			\label{alg:mdbpplus}
			\KwIn{A bipartite graph $G=(U,V,E)$, integers $k$, $\theta$}
			\KwOut{A $k$-MDB of $G$}
			$D^*\leftarrow$ \heu{$G$, $k$, $\theta$}\;
			$\theta_U\leftarrow \max_{u\in U}d(u)$\;
			\While{$\theta_U>\theta$}{
				$\theta_V\leftarrow\max\{\theta,\, \lfloor\lvert D^*\rvert/\theta_U\rfloor\}$;\ 
				$\theta_U\leftarrow\max\{\theta, \lfloor\theta_U/2\rfloor\}$\;
				$G'\leftarrow$ \textbf{parallel} \cnred{$G'$, $k$, $\min\{\theta_U$, $\theta_V\}$}\;
				$\{u_1,u_2,\dots,u_{\lvert U\rvert}\}\leftarrow$ the descending degree order of $U$\;
				\textbf{parallel} \For{$i\in 1\dots\lvert U\rvert$} { 
					$S\leftarrow (\{u_i\}, \emptyset)$;\ 
					$C\leftarrow (N_{+}^2(u_i),\ \bigcup_{v\in N_{+}^2(u_i)}N(v))$\;
					$C\leftarrow$ \redi{$C$, $u_i$, $\theta_U$, $\theta_V$}\;
					\brpplus{$S$, $C$}; \tcp{\brp equipped with graph reduction techniques, upper-bounding techniques and parallelization }
				}
			}
			\Return{$D^*$}\;

			\Fn{\redi{$C=(U_C,V_C)$, $u$, $\theta_U$, $\theta_V$}} {
				$U_{C'}\leftarrow \left\{v\in U_C\mid cn(u,v)\ge\theta_V-k\right\}$\;
				$V_{C'}\leftarrow \left\{v\in V_C\mid d(v)\ge\theta_U-k\right\}$\;
				\Return $(U_{C'}, V_{C'})$
			}

		\end{algorithm}
	
	\subsection{Optimized Algorithms} \label{sec:oa}

	All the proposed optimization techniques can be applied to our branch-and-bound algorithm and pivoting-based algorithm. We refer to the branch-and-bound algorithm and pivoting-based algorithm equipped with all the optimization techniques as \mdbb and \mdbp, respectively. An implementation of \mdbp is presented in Algorithm \ref{alg:mdbpplus}. The algorithm begins by obtaining an initial $k$-defective biclique (line 1) by performing the heuristic approach. Then it starts to find $k$-MDB by performing the progressive bounding reduction (lines 2-4). In each iteration, two size thresholds $\theta_U$ and $\theta_V$ are generated based on Eq. (\ref{eq:6}) and Eq. (\ref{eq:7}). Then, the parallelized common-neighbor-based reduction are employed on the input graph $G$ (line 5) by invoking \cnred (Algorithm \ref{alg:cnred}). the algorithm continues by performing the parallelized ordering-based reduction on the reduced graph (lines 6-10). In detail, it traverses each vertex $u$ of $U$ in descending degree order (lines 7-8), and constructs an initial instance $(S, C)$. After that, the algorithm further reduce the size of $C$ by invoking \redi (line 9), which utilizes the common-neighbor-based reduction (lines 13-14). Finally, the algorithm runs \brpplus to find $k$-MDB in $(S, C)$, which is adapted from \brp in Algorithm \ref{alg:mdbp} by additionally implementing the graph reduction techniques and the upper-bounding techniques (Algorithm \ref{alg:ub}). By taking the maximum returned solution from all invocations of \brpplus, the algorithm derives a $k$-MDB $D^*$ at the end and returns $D^*$ as the final answer (line 11).
	The implementation of \mdbb is quite similar to that of \mdbp, which differs only in that replacing \brpplus in Algorithm \ref{alg:mdbpplus} with \brb from Algorithm \ref{alg:mdbb} incorporating the optimization techniques used in \brpplus.

	
	\section{Experiments}
	
	\subsection{Experimental Setup}
	\textbf{Datasets.}
	We collect 11 real-world bipartite graphs from diverse categories to evaluate the efficiency of our algorithms. These graphs have been widely used as benchmark datasets for evaluating the performance of other bipartite subgraph search problem \cite{RN13,RN71,RN66}. 
	The detailed information of these graphs are shown in Table \ref{tab:dat}, where $\Delta$ and $\overline d$ represent the maximum degree and average degree of vertices on each side, respectively. All the datasets are available at \url{http://konect.cc/networks}.

	\noindent\textbf{Algorithms.}
	We implement two proposed algorithm \mdbb and \mdbp, which are respectively the branch-and-bound algorithm in Algorithm \ref{alg:mdbb} and the pivoting-based algorithm in Algorithm \ref{alg:mdbp} equipped with all our proposed optimization techniques in Sec. \ref{sec:ot}.
	We also implement two baseline algorithms, denoted as \mdbase and \mdcadp for comparative analysis. Specifically, \mdbase is a variant of \mdbb with a time complexity of $\mathcal O^*(2^n)$, which is implemented by substituting the binary branching strategy at line 9-10 in Algorithm \ref{alg:mdbb} with the straightforward strategy that randomly selecting a branching vertex $u$ from $C$. On the other hand, \mdcadp is adapted from the state-of-the-art maximal defective clique algorithm \pivotplus \cite{RN64}. In detail, \mdcadp converts the input bipartite graph into a traditional graph by treating all vertices on the same side as fully connected and runs \pivotplus on the converted graph, subsequently extracting a $k$-MDB from all the maximal defective cliques of the converted graph. Furthermore, both \mdbase and \mdcadp incorporate the optimization techniques outlined in Sec. \ref{sec:red} and \ref{sec:heu}. All algorithms are implemented in C++ and executed on a PC equipped with a 2.2GHz CPU and 128GB RAM running Linux.
	
	\noindent\textbf{Parameter settings.}
	During the evaluations of all algorithms, we vary the value of $k$ from $1$ to $6$. As outlined in Sec. \ref{sec:pre}, we focus on finding a connected $k$-MDB, achieved by setting $\theta> k$. Based on this, we vary the value of $\theta$ from $k+1$ to $10$ for a specified $k$. 
	
	
	\begin{table}[!t]
		\smaller[1.5]
		\caption{Datasets used in experiments}
		\renewcommand{\arraystretch}{.95}
		\setlength{\tabcolsep}{4pt}
		\label{tab:dat}
		\begin{tabular}{lccccc}
			\toprule
			Dataset&$\lvert U \rvert$&$\lvert V\rvert$&$\lvert E\rvert$&{$\Delta$}&mdbk{$\overline d$}\\
			\midrule
			LKML		&42,045		&337,509	&599,858	&(31,719, 6,627)	&(14, 1)\\
			Team		&901,130	&34,461		&1,366,466	&(17, 2,671)		&(1, 39)\\
			Mummun		&175,214	&530,418	&1,890,661	&(968, 19805)		&(10, 3)\\
			Citeu		&153,277	&731,769	&2,338,554	&(189,292, 1,264)	&(15, 3)\\
			IMDB		&303,617	&896,302	&3,782,463	&(1,334, 1,590)		&(12, 4)\\
			Enwiki		&18,038		&2,190,091	&4,129,231	&(324,254, 1,127)	&(228, 1)\\
			Amazon		&2,146,057	&1,230,915	&5,743,258	&(12,180, 3,096)	&(2, 4)\\
			Twitter		&244,537	&9,129,669	&10,214,177	&(640, 18,874)		&(41, 1)\\
			Aol			&4,811,647	&1,632,788	&10,741,953	&(100,629, 84,530)	&(2, 6)\\
			Google		&5,998,790	&4,443,631	&20,592,962	&(423, 95,165)		&(3, 4)\\
			LiveJournal	&3,201,203	&7,489,073	&112,307,385	&(300, 1,053,676)	&(35, 14)\\

			\bottomrule
			\vspace{-10pt}
		\end{tabular}
	\end{table}

	\begin{table*}[!t]
		\caption{Results of runtime among different algorithms on 10 real-world graphs (in seconds)}
		\smaller[3]
		\centering
		\label{tab:rt}
		\renewcommand{\arraystretch}{0.95}
		\begin{tabular}{c|c|c|c|cccc|c|c|c|cccc|c|c|c|cccc}
			\hline
			
			Dataset&$k$&$\theta$&$\lvert E^*\rvert$&\mdbp&\mdbb&\mdbase&\mdcadp&$k$&$\theta$&$\lvert E^*\rvert$&\mdbp&\mdbb&\mdbase&\mdcadp&$k$&$\theta$&$\lvert E^*\rvert$&\mdbp&\mdbb&\mdbase&\mdcadp\\
			\hline
			\multirow{4}{*}{LKML} 
			& \multirow{2}{*}{1} & 2 & 1,781 & \textbf{13.03} & 17.08 & 111.82 &- & \multirow{2}{*}{2} & 3 & 460 & \textbf{43.72} & - & - & - & \multirow{2}{*}{3} & 4 & 149 & \textbf{267.01} & - & - & - \\ 
			&& 5 & 84 & \textbf{1.61} & 2.15 & 2.36 & 28.19 && 6 & 70 & \textbf{7.60} & 30.14 & 357.33 & - && 7 & 60 & \textbf{9.53} & 37.96 & 1,082.91 & - \\
			\cline{2-22}
			& \multirow{2}{*}{4} & 7 & 66 & \textbf{103.03} & 1,216.70 & - & - & \multirow{2}{*}{5} & 8 & 67 & \textbf{107.31} & 1,224.86 & - & - & \multirow{2}{*}{6} & 9 & 0 & \textbf{248.95} & 1,732.26 & - & - \\ 
			&& 8 & 60 & \textbf{9.36} & 46.38 & 2,360.74 & - && 9 & 0 & \textbf{13.42} & 51.55 & 3,441.79 & - && 10 & 0 & \textbf{12.29} & 46.38 & 4,162.94 & - \\ 
			\hline 
			\multirow{4}{*}{Team} 
			& \multirow{2}{*}{1} & 2 & 701 & \textbf{19.26} & 39.98 & 43.67 & - & \multirow{2}{*}{2} & 3 & 265 & \textbf{45.51} & - & - & - & \multirow{2}{*}{3} & 4 & 149 & \textbf{154.56} & - & - & - \\ 
			&& 5 & 99 & \textbf{7.14} & 7.19 & 7.73 & 66.18 && 6 & 64 & \textbf{9.53} & 14.09 & 31.43 & - && 7 & 0 & \textbf{10.07} & 15.41 & 242.20 & - \\
			\cline{2-22}
			& \multirow{2}{*}{4} & 5 & 111 & \textbf{591.71} & - & - & - & \multirow{2}{*}{5} & 6 & 79 & \textbf{2,684.57} & - & - & - & \multirow{2}{*}{6} & 8 & 0 & \textbf{927.43} & - & - & - \\ 
			&& 8 & 0 & \textbf{9.92} & 15.89 & 1,734.29 & - && 9 & 0 & \textbf{10.04} & 16.35 & 9,791.32 & - && 10 & 0 & \textbf{7.65} & \textbf{11.21} & - & - \\ 
			\hline 
			\multirow{4}{*}{Mummun}
			& \multirow{2}{*}{1} & 2 & 10,315 & \textbf{18.49} & 18.62 & 79.28 & - & \multirow{2}{*}{2} & 3 & 8,803 & \textbf{22.88} & 8,661.39 & - & - & \multirow{2}{*}{3} & 4 & 4,185 & \textbf{39.95} & - & - & - \\ 
			&& 5 & 2,469 & \textbf{13.57} & 13.59 & 16.49 & - && 6 & 1,768 & \textbf{21.08} & 37.83 & 84.27 & - && 7 & 1,145 & \textbf{33.58} & 916.16 & 3,677.02 & -  \\
			\cline{2-22}
			& \multirow{2}{*}{4} & 5 & 2,481 & \textbf{45.73} & - & - & - & \multirow{2}{*}{5} & 6 & 1,783 & \textbf{69.39} & - & - & - & \multirow{2}{*}{6} & 7 & 1,163 & \textbf{160.89} & - & - & - \\ 
			&& 8 & 878 & \textbf{38.18} & 5,538.54 & 10,440.17 & - && 9 & 886 & \textbf{40.64} & 6,737.14 & - & - && 10 & 804 & \textbf{53.09} & - & - & -  \\
			\hline 
			\multirow{4}{*}{Citeu}
			& \multirow{2}{*}{1} & 2 & 12,425 & \textbf{141.21} & 146.81 & -& - & \multirow{2}{*}{2} & 3 & 12,427 & \textbf{148.38} & 187.69 & - & - & \multirow{2}{*}{3} & 4 & 9,989 & \textbf{137.15} & 4,332.55 & - & - \\ 
			&& 5 & 7,394 & \textbf{5.48} & 5.69 & - & - && 6 & 4,960 & \textbf{7.88} & 77.20 & - & - && 7 & 3,063 & \textbf{17.92} & 3,500.77 & - & -  \\
			\cline{2-22}
			& \multirow{2}{*}{4} & 5 & 7,406 & \textbf{141.78} & - & - & - & \multirow{2}{*}{5} & 6 & 4,975 & \textbf{158.43} & - & - & - & \multirow{2}{*}{6} & 7 & 3,081 & \textbf{253.15} & - & - & - \\ 
			&& 8 & 2,600 & \textbf{25.92} & - & - & - && 9 & 2,600 & \textbf{36.03} & - & - & - && 10 & 2,614 & \textbf{134.21} & - & - & -  \\
			\hline 
			\multirow{4}{*}{IMDB}
			& \multirow{2}{*}{1} & 2 & 775 & 5.99 & \textbf{5.39} & 4,460.09 & - & \multirow{2}{*}{2} & 3 & 562 & \textbf{8.86} & 16.96 & - & - & \multirow{2}{*}{3} & 4 & 522 & \textbf{15.85} & 436.35 & - & - \\ 
			&& 5 & 514 & 6.12 & \textbf{5.35} & 145.10 & 23.43 && 6 & 514 & \textbf{6.76} & 6.91 & - & 1,123.22 && 7 & 482 & \textbf{14.73} & 25.23 & - & -  \\
			\cline{2-22}
			& \multirow{2}{*}{4} & 5 & 526 & \textbf{19.99} & 593.86 & - & - & \multirow{2}{*}{5} & 6 & 529 & \textbf{31.05} & 683.75 & - & - & \multirow{2}{*}{6} & 7 & 498 & \textbf{57.45} & 7,936.24 & - & - \\ 
			&& 8 & 436 & \textbf{15.13} & 67.00 & - & - && 9 & 436 & \textbf{19.34} & 170.90 & - & - && 10 & 426 & \textbf{22.77} & 402.50 & - & -  \\
			\hline
			\multirow{4}{*}{Enwiki}
			& \multirow{2}{*}{1} & 2 & 203,569 & \textbf{3,630.07} & - & - & - & \multirow{2}{*}{2} & 3 & 133,759 & \textbf{4,977.99} & - & - & - & \multirow{2}{*}{3} & 4 & 80,461 & \textbf{3,259.88} & - & - & - \\ 
			&& 5 & 21,939 & \textbf{156.70} & 158.87 & 354.17 & 387.15 && 6 & 4,240 & \textbf{199.76} & 582.63 & - & - && 7 & 3,224 & \textbf{264.30} & - & - & -  \\
			\cline{2-22}
			& \multirow{2}{*}{4} & 5 & 21,951 & \textbf{3,826.16} & - & - & - & \multirow{2}{*}{5} & 6 & 4,255 & \textbf{5,744.46} & - & - & - & \multirow{2}{*}{6} & 8 & 2,330 & \textbf{6,557.63} & - & - & - \\ 
			&& 8 & 2,316 & \textbf{339.60} & - & - & - && 9 & 1,696 & \textbf{394.81} & - & - & - && 10 & 1,274 & \textbf{639.63} & - & - & -  \\
			\hline
			\multirow{4}{*}{Amazon}
			& \multirow{2}{*}{1} & 2 & 1,319 & \textbf{42.68} & 45.42 & - & - & \multirow{2}{*}{2} & 3 & 427 & \textbf{56.54} & 646.44 & - & - & \multirow{2}{*}{3} & 4 & 423 & \textbf{88.75} & 1,195.66 & - & - \\ 
			&& 5 & 413 & 11.71 & \textbf{11.12} & 12.21 & 49.70 && 6 & 418 & 12.91 & \textbf{12.00} & 167.48 & - && 7 & 357 & \textbf{13.15} & 14.91 & 183.57 & - \\
			\cline{2-22}
			& \multirow{2}{*}{4} & 5 & 428 & \textbf{99.80} & 850.49 & - & - & \multirow{2}{*}{5} & 6 & 433 & \textbf{249.97} & - & - & - & \multirow{2}{*}{6} & 7 & 390 & \textbf{268.21} & - & - & - \\ 
			&& 8 & 368 & \textbf{13.68} & 15.96 & 171.44 & - && 9 & 379 & \textbf{16.09} & 20.38 & 319.93 & - && 10 & 390 & \textbf{18.63} & 25.61 & - & -  \\
			\hline
			\multirow{4}{*}{Twitter}
			& \multirow{2}{*}{1} & 2 & 5,369 & \textbf{11.13} & 13.24 & 46.13 & - & \multirow{2}{*}{2} & 3 & 5,383 & \textbf{10.88} & 51.42 & 2,143.04 & - & \multirow{2}{*}{3} & 4 & 5,397 & \textbf{15.47} & 320.65 & - & - \\ 
			&& 5 & 5,369 & \textbf{6.27} & 7.60 & 36.58 & 8,976.29 && 6 & 5,383 & \textbf{6.63} & 19.51 & 1,432.31 & - && 7 & 5,397 & \textbf{9.76} & 84.34 & - & -  \\
			\cline{2-22}
			& \multirow{2}{*}{4} & 5 & 5,411 & \textbf{20.75} & 3,549.73 & - & - & \multirow{2}{*}{5} & 6 & 5,425 & \textbf{31.20} & - & - & - & \multirow{2}{*}{6} & 7 & 5,439 & \textbf{57.67} & - & - & - \\ 
			&& 8 & 5,411 & \textbf{13.29} & 491.74 & - & - && 9 & 5,425 & \textbf{23.69} & 3,268.33 & - & - && 10 & 5,439 & \textbf{43.93} & 9,894.96 & - & -  \\
			\hline
			\multirow{4}{*}{Aol}
			& \multirow{2}{*}{1} & 2 & 10,869 & 1,032.95 & \textbf{970.79} & 10,736.08 & - & \multirow{2}{*}{2} & 3 & 3,208 & \textbf{1,635.79} & - & - & - & \multirow{2}{*}{3} & 4 & 853 & \textbf{3,122.56} & - & - & - \\ 
			&& 5 & 334 & 49.77 & \textbf{48.69} & 49.84 & 5,703.42 && 6 & 238 & \textbf{94.15} & 181.85 & 581.87 & - && 7 & 207 & \textbf{123.04} & 408.97 & 8,207.80 & - \\
			\cline{2-22}
			& \multirow{2}{*}{4} & 5 & 346 & \textbf{9,271.83} & - & - & - & \multirow{2}{*}{5} & 7 & 219 & \textbf{2,388.21} & - & - & - & \multirow{2}{*}{6} & 9 & 174 & - & - & - & - \\ 
			&& 8 & 188 & \textbf{470.19} & 1,011.16 & - & - && 9 & 166 & \textbf{4,496.57} & 5,115.84 & - & - && 10 & 164 & \textbf{5,585.25} & 8,636.22 & - & -  \\
			\hline
			\multirow{4}{*}{Google}
			& \multirow{2}{*}{1} & 2 & 10,829 & \textbf{2,163.49} & 2,525.77 & 4,814.91 & - & \multirow{2}{*}{2} & 3 & 3,370 & \textbf{4,121.31} & - & - & - & \multirow{2}{*}{3} & 4 & 1,201 & \textbf{6,250.09} & - & - & - \\ 
			&& 5 & 439 & \textbf{552.42} & 573.02 & 629.33 & - && 6 & 184 & \textbf{898.04} & 2,771.57 & 4,077.91 & - && 7 & 151 & \textbf{1,351.14} & - & - & -  \\
			\cline{2-22}
			& \multirow{2}{*}{4} & 7 & 157 & \textbf{5,331.22} & - & - & - & \multirow{2}{*}{5} & 8 & 155 & \textbf{6,750.40} & - & - & - & \multirow{2}{*}{6} & 9 & 156 & \textbf{8,756.03} & - & - & - \\ 
			&& 9 & 140 & \textbf {613.17} & 9,387.13 & - & - && 10 & 145 & \textbf{683.84} & - & - & - && 10 & 145 & \textbf{1,978.33} & - & - & -  \\
			\hline

			\hline
			
		\end{tabular}
		\vspace{-10pt}
	\end{table*}

	
	\subsection{Performance Studies}

	\noindent\textbf{Exp1: Efficiency comparison among different algorithms.}
	We evaluate the efficiency of two proposed algorithms \mdbb and \mdbp, along with two baseline algorithms \mdbase and \mdcadp. The results are shown in Table \ref{tab:rt}, where the column $\lvert E^*\rvert$ represents the number of edges in the $k$-MDB, and the notion ``-'' represents a timeout exceeding 3 hours.	
	As observed, \mdbp shows superior performance across most input parameters, achieving two order of magnitude faster than \mdcadp on most datasets, and two order of magnitude faster than \mdbase on most datasets when $k>3$. Notably, \mdbp essentially solves all tests on most of datasets when setting $k\ge 3$, while \mdbase and \mdcadp can hardly finish any tests within the time limit. For example, on the \textit{Mummun} dataset with $k=3$ and $\theta=7$, \mdbase takes 3,677 seconds and \mdcadp times out at 108,000 seconds, whereas \mdbp completes in just 34 seconds, yielding speedups of $100\times$ over \mdbase and $1000\times$ over \mdcadp. On the other hand, when $k\le 2$ and $\theta\ge 5$, the efficiency of \mdbp is comparable to \mdbb. However, for $k\ge 3$ or $\theta\le 4$, \mdbp demonstrates a significant advantage. For example, on the \textit{Twitter} dataset with $k=1$ and $\theta=5$, both \mdbp and \mdbb runs for about 7 seconds, but with $k=4$ and $\theta=5$, \mdbb takes about 1 hour while \mdbp completes in just 20 seconds, achieving a speedup of more than $150\times$. These results underscore the effectiveness and high efficiency of the pivoting-based branching strategy mentioned in Sec. \ref{sec:mdbp}. Additionally, \mdbb retains advantages over \mdbase and \mdcadp, highlighting the effectiveness of the new binary branching strategy in Sec. \ref{sec:mdbb}.

	\noindent\textbf{Exp2: Runtime with varying $\theta$ and $k$.}
	We evaluate the performance of 4 algorithms \mdbb, \mdbp, \mdbase and \mdcadp with varying $\theta$ and $k$ on two representative datasets, \textit{IMDB} and \textit{Google}. The results are shown in Fig. \ref{fig:var}, where the notion ``INF'' represents the runtime exceeds 3 hours. As observed, both \mdbp and \mdbb consistently outperform baseline algorithms \mdbase and \mdcadp with varying $\theta$ and $k$, with \mdbp achieving speedups greater than $1000\times$. For example, in Fig. \ref{fig:var}a, \mdbp complete in 10 seconds at $\theta=3$, while \mdbase and \mdcadp cannot finish within 3 hours. When $\theta\ge 8$ or $k\le 2$, \mdbp and \mdbb have similar runtimes, demonstrating the efficacy of our optimization techniques in reducing graph size. 
	Furthermore, as $k$ increases or $\theta$ decreases, the time curve of \mdbp shows a gentler slope compared to other algorithms, indicating its lower sensitivity to input parameters and superior scalability. 
	Additionally, we evaluate the performance of \mdbb and \mdbp in searching for the maximum biclique by setting $k=0$. We compare them with a state-of-the-art maximum biclique search algorithm \mbc \cite{RN13}. The results are shown in Fig. \ref{fig:var}e and Fig. \ref{fig:var}f. As observed, \mdbp and \mdbb exhibit an advantage over \mbc due to two main factors: (1) both of our algorithms achieves better time complexities than \mbc, which has a time complexity of $O^*(2^n)$; and (2) our algorithms incorporate more advanced heuristic approach and graph reduction techniques, further enhancing their efficiency. 
	
	\begin{figure}[!t]
		\centering
		\setlength{\belowcaptionskip}{-0.6cm} 
		\includegraphics[width=0.9\columnwidth,height=8.25cm]{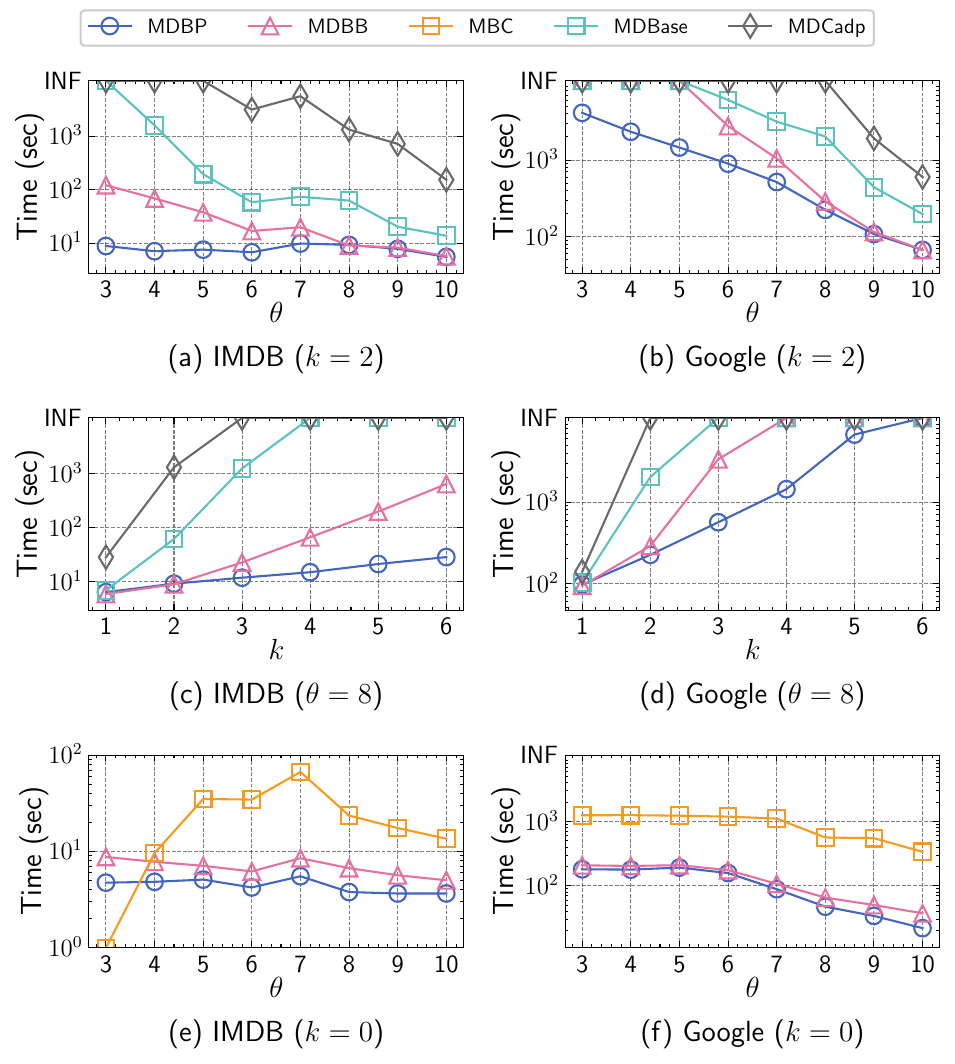}
		\caption{Runtime with varying $\theta$ and $k$}
		\label{fig:var}
	\end{figure}

	\noindent\textbf{Exp3: Efficacy of the heuristic approach.}
	To evaluate the effectiveness of our proposed heuristic approach, we compare \mdbb and \mdbp with their versions without the heuristic approach, referred to as \mdbbnh and \mdbpnh, respectively. The results are shown in Fig. \ref{fig:heu}. As observed, \mdbb and \mdbp always outperform \mdbbnh and \mdbpnh, respectively. This is because the heuristic approach provides a relatively large $k$-defective biclique at the beginning, which enhances the effectiveness of our upper-bounding techniques. Additionally, we observe that \mdbpnh still always performs better than \mdbb when $\theta\le 6$ or $k\ge 3$, which further highlights the high efficiency of our pivoting techniques.

	\begin{figure}[!t]
		\centering
		\setlength{\belowcaptionskip}{-0.4cm} 
		\includegraphics[width=0.9\columnwidth,height=5.5cm]{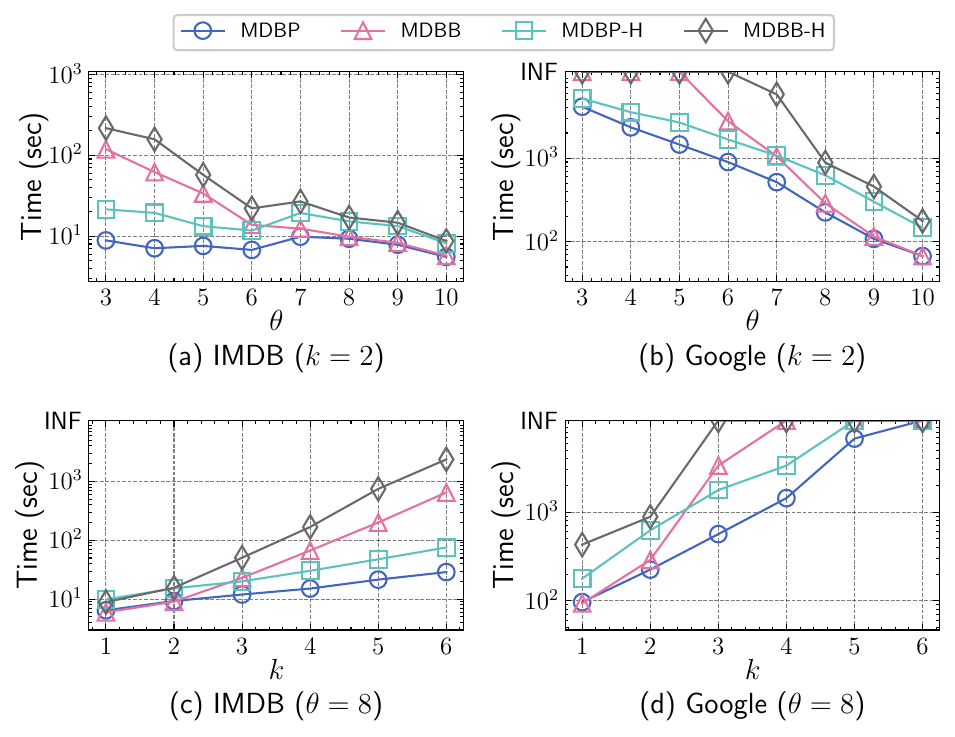}
		\caption{Runtime without using the heuristic approach}
		\label{fig:heu}
	\end{figure}
	
	\noindent\textbf{Exp4: Efficacy of graph reduction techniques.}
	To evaluate the effectiveness of our graph reduction techniques, we implement 4 ablation versions of \mdbp and \mdbb by successively removing the following reductions: one-non-neighbor reduction, common-neighbor-based reduction, progressive bounding reduction and ordering-based reduction. The results are presented in Fig \ref{fig:gr}, which label the original algorithm as ``Original'', and label the 4 ablation versions as ``-1'', ``-1C'', ``-1CP'', ``-1CPO''. As observed, all graph reduction techniques contribute to efficiency improvements. Specifically, the one-non-neighbor and common-neighbor-based reductions yield stable performance improvements with varying $\theta$ and $k$, while the progressive bounding and ordering-based reduction demonstrate more pronounced improvements when $\theta$ decreases or $k$ increases. Notably, when $k=1$ or $\theta=10$, versions of \mdbb and \mdbp without progressive bounding reduction still outperform those with it. This suggests that our algorithms are well-designed and robust, achieving strong performance for small $k$ or large $\theta$ even without progressive bounding reduction.
	
	\begin{figure}[!t]
		\centering
		\setlength{\belowcaptionskip}{-0.4cm} 
		\includegraphics[width=0.9\columnwidth,height=5.5cm]{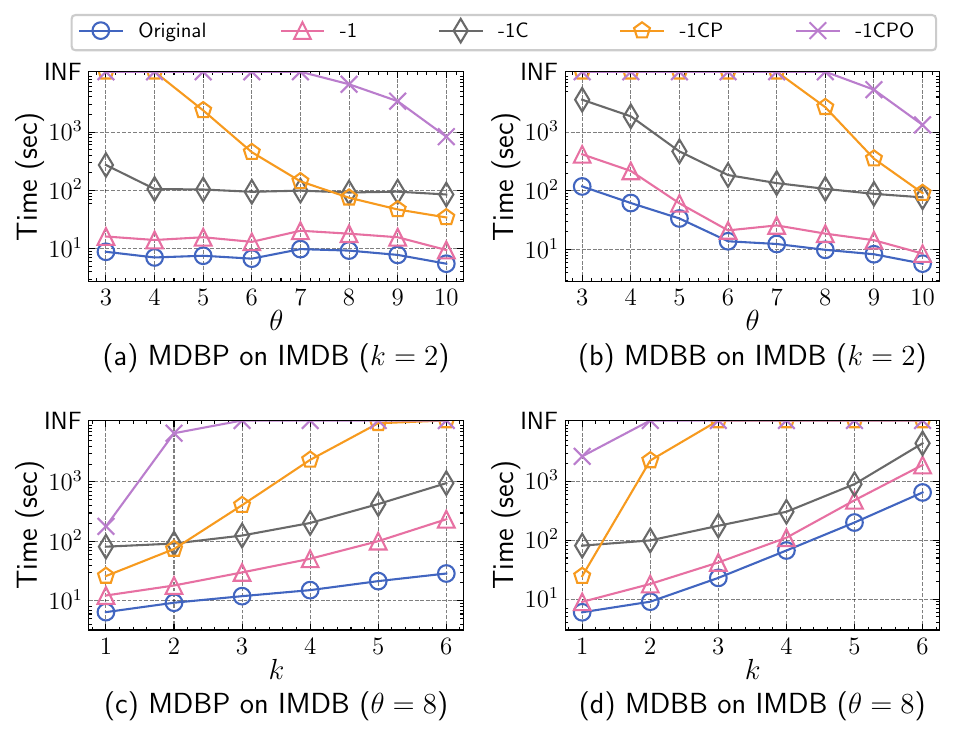}
		\caption{Runtime without graph reduction techniques}
		\label{fig:gr}
	\end{figure}

	\begin{figure}[!t]
		\centering
		\setlength{\belowcaptionskip}{-0.4cm} 
		\includegraphics[width=0.9\columnwidth,height=5.5cm]{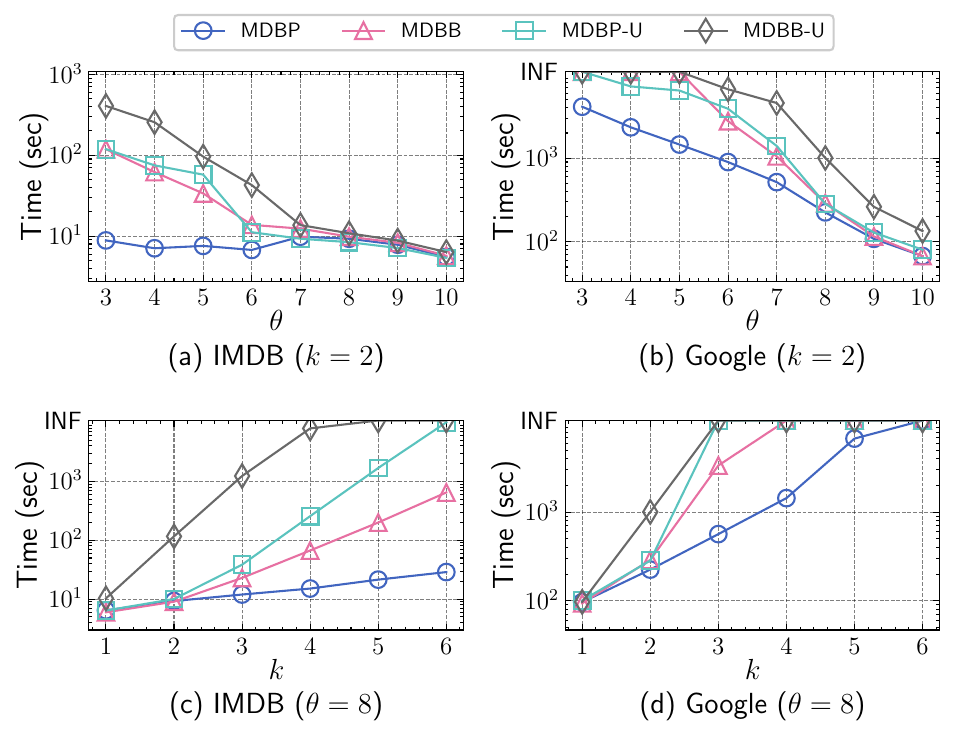}
		\caption{Runtime without upper-bounding techniques}
		\label{fig:ub}
	\end{figure}
	
	\noindent\textbf{Exp5: Efficacy of upper-bounding techniques.}
	To evaluate the effectiveness of our proposed upper-bounding techniques, we compare \mdbb and \mdbp with their versions excluding the upper-bounding techniques (Algorithm \ref{alg:ub}), denoted as \mdbbnh and \mdbpnh, respectively. The results are shown in Fig. \ref{fig:ub}. As observed, incorporating the upper-bounding techniques significantly improves the efficiency of both \mdbb and \mdbp. Moreover, the improvement becomes more pronounced as $k$ increases or $\theta$ decreases, indicating the strong effectiveness and efficiency of our proposed upper-bounding techniques. Additionally, the improvement of \mdbp by performing upper-bounding techniques is more substantial compared to \mdbb. This is likely because the pivoting-based branching strategy introduces more non-edges to the partial set $S$ earlier, resulting in tighter vertex and edge upper bounds derived from the upper-bounding techniques, which helps prune more unnecessary instances.
	
	\begin{table}
		\caption{\centering{Runtime on synthetic graphs (in seconds)}}
		\smaller[2.5]
		\centering
		\label{tab:syn}
		\setlength{\tabcolsep}{0.8pt}
		\begin{tabular}{c|c|c|c|ccc|ccc|ccc}
				\hline
				\multirow{2}{*}{Distribution}&\multirow{2}{*}{$\rho$}&\multirow{2}{*}{$\Delta$}&\multirow{2}{*}{$\overline d$}&\multicolumn{3}{c|}{$k=1,\theta=2$} &\multicolumn{3}{c|}{$k=3,\theta=4$}&\multicolumn{3}{c}{$k=5,\theta=6$}\\
				\cline{5-13}
				&&&&$\lvert E^*\rvert$&\mdbp&\mdbb&$\lvert E^*\rvert$&\mdbp&\mdbb&$\lvert E^*\rvert$&\mdbp&\mdbb\\
				\hline
				\multirow{4}{*}{Power-law}&0.2&(97, 82)&(19, 19)&197&\textbf{0.10}&0.18&181&\textbf{0.29}&0.96&175&\textbf{8.03}&11.82\\
				
				&0.3&(77, 99)&(29, 29)&269&\textbf{0.11}&0.33&277&\textbf{0.56}&9.52&292&\textbf{16.02}&133.29\\
				
				&0.4&(95, 100)&(36, 36)&349&\textbf{0.27}&\textbf{0.51}&357&\textbf{1.76}&26.01&375&\textbf{33.84}&432.58\\
				&0.5&(100, 99)&(45, 45)&482&\textbf{0.92}&\textbf{1.94}&494&\textbf{5.83}&105.49&506&\textbf{189.92}&2,030.16\\
				\hline
				\multirow{4}{*}{Normal}&0.2&(29, 29)&(17, 17)&27&\textbf{0.26}&0.47&29&\textbf{31.74}&75.84&32&\textbf{59.74}&153.87\\
				
				&0.3&(43, 43)&(29, 29)&44&\textbf{1.24}&3.04&45&\textbf{179.49}&397.23&49&\textbf{312.55}&1,225.26\\
				
				&0.4&(48, 51)&(39, 39)&56&\textbf{3.78}&10.73&67&\textbf{489.87}&1,015.43&65&\textbf{1,300.01}&-\\
				&0.5&(62, 61)&(49, 49)&65&\textbf{15.03}&43.81&65&\textbf{2,095.58}&-&-&-&-\\
				\hline
				
		\end{tabular}
		\vspace{-10pt}
	\end{table}
	
	\begin{figure}[!t]
		\centering
		\setlength{\belowcaptionskip}{-0.4cm} 
		\includegraphics[width=0.9\columnwidth, height=5.5cm]{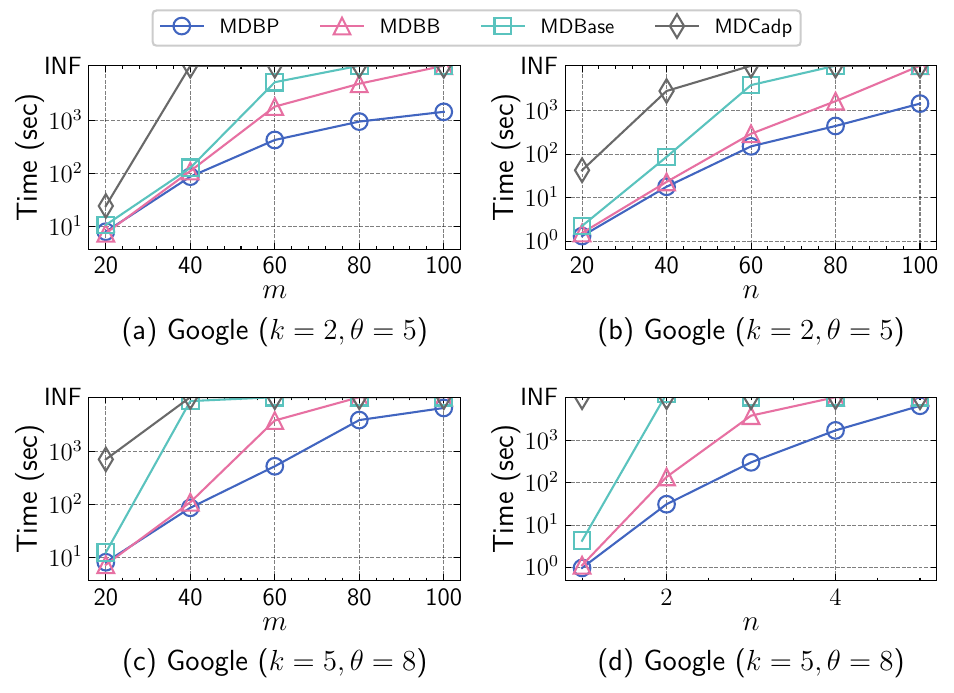}
		\caption{Scalability of different algorithms}
		\label{fig:sc}
	\end{figure}
	
	\noindent\textbf{Exp6: Performance on synthetic graphs.} 
	To discover the relationship between the efficiency of our approaches and graph characteristics, we evaluate the runtime of \mdbb and \mdbp on synthetic graphs with varying vertex degree distributions and graph densities. Specifically, we generate 8 synthetic graphs with 100 vertices on each side. Their vertex degrees follows a power-law or normal distribution, and the density $\rho=\frac{\lvert E\rvert}{\lvert U\rvert\cdot\lvert V\rvert}$ varies from 0.2 to 0.5. The results are presented in Table \ref{tab:syn}. As observed, our algorithms consistently exhibits high efficiency on power-law distribution graphs, whereas their performance on normal distribution graphs is more significantly influenced by $\rho$ and $k$. This phenomenon can be attributed to two key factors. First, Power-law graphs exhibit a larger gap between $\Delta$ and $\overline d$ than normal graphs, which facilitates our branching strategies to eliminate more redundant branches and enables our graph reduction techniques to exclude more vertices. This explains why our approaches are highly efficient on large-scale real-world graphs with a significantly gap between $\Delta$ and $\overline d$, as illustrated in Table \ref{tab:dat} and \ref{tab:rt}. Second, the higher number of $k$-MDB edges in power-law graphs benefits our upper bound techniques for effectively pruning suboptimal branches, which can also reflected in the real-world graphs in Table \ref{tab:rt}. Additionally, We observe that \mdbp outperforms \mdbb on these synthetic graphs, indicating the effectiveness advancement of our pivoting-based branching strategy.

	\noindent\textbf{Exp7: Scalability testing.}
	To evaluate the scalability of algorithms, we test the runtime of \mdbb, \mdbp, \mdbase and \mdcadp on the  \textit{Google} dataset using random sampling of vertices or edges. Results on the other datasets are consistent. We prepare 8 proportionally sampled subgraphs of \textit{Google} by sampling 20\%, 40\%, 60\% and 80\% vertices or edges from the original graph. The results are shown in Fig. \ref{fig:sc}. As observed, \mdbp shows the slowest runtime growth as the graph size increases, while \mdcadp exhibits the steepest rise; \mdbb and \mdbase fall in between, where \mdbb has better performances. This result indicates that \mdbp achieves the best scalability among the algorithms. Furthermore, regardless of graph size, both \mdbp and \mdbb significantly outperform \mdbase and \mdcadp, underscoring the effectiveness of the proposed branching strategies and optimization techniques.

	\begin{table}
		\caption{\centering{Runtime of parallelized algorithms (in seconds)}}
		\smaller[1.5]
		\setlength{\tabcolsep}{3pt}
		\centering
		\label{tab:par}
		\begin{tabular}{c|c|c|ccc|ccc}
				\hline
				\multirow{2}{*}{Dataset}&\multirow{2}{*}{$k$}&\multirow{2}{*}{$\theta$}&\multicolumn{3}{c|}{\mdbp with threads} &\multicolumn{3}{c}{\mdbb with threads}\\
				\cline{4-9}
				&&&1&10&20&1&10&20\\
				\hline
				\multirow{3}{*}{Google}&1&2&2,163.49&425.76&34.69&3,052.03&870.92&305.05\\
				
				&3&4&6,250.09&1,324.98&802.67&-&-&10,649.47\\
				
				&5&6&-&-&8,947.28&-&-&-\\
				\hline
				\multirow{3}{*}{LiveJournal}&1&2&119.61&89.14&47.11&-&-&9,314.33\\
				
				&3&4&3,996.77&997.90&850.52&-&-&-\\
				
				&5&6&-&-&10,016.72&-&-&-\\
				\hline
				
		\end{tabular}
		\vspace{-5pt}
	\end{table}
	
	\begin{figure}[!t]
		\centering
		\setlength{\belowcaptionskip}{-0.4cm} 
		\includegraphics[width=0.9\columnwidth, height=2.5cm]{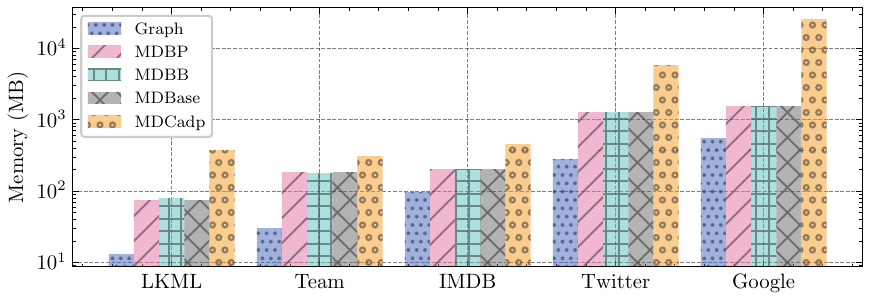}
		\caption{Memory usage of various algorithms}
		\label{fig:mem}
	\end{figure}

	\noindent\textbf{Exp8: Efficiency of parallelized algorithms.} 
		To evaluate the efficiency of the parallel version of \mdbb and \mdbp, we test their runtime on two large-scale graphs, \textit{Google} and \textit{LiveJournal}, which are two hard cases for our sequential algorithms as shown in Table \ref{tab:dat}. The results are shown in Table \ref{tab:par}. As observed, the runtime of \mdbp and \mdbb consistently decreases as the number of threads increases, which illustrates the effectiveness of our parallelization technique. Notably, our parallelized algorithms successfully address two challenging inputs where $k=5$ and $\theta=6$ are set on \textit{Google} and \textit{LiveJournal}, which the sequential version of our algorithms failed to solve within 3 hours. This further demonstrates the scalability of our parallelized algorithms.

	\noindent\textbf{Exp9: Memory usage.}
	In this experiment, we evaluate the memory usage of algorithms \mdbb, \mdbp, \mdbase and \mdcadp on five representative graphs. We measure the maximum physical memory (in MB) used by these algorithms with the parameters $k=2$ and $\theta=3$. Similar results can also be observed with other parameters. The results are shown in Fig. \ref{fig:mem}. As observed, all algorithms except \mdcadp exhibit comparable space consumption, which is approximately proportional to the size of input graph. Notably, the space consumed by \mdcadp is higher than other algorithms. This is mainly because \mdcadp requires to fully connect vertices on both sides of a bipartite graph, resulting in increased graph density and a larger search space. These results indicate that our proposed algorithms are space efficient.

	\subsection{Case Study}

	We conduct a case study to evaluate the effectiveness of $k$-MDB in a fraud detection task, following the experimental setup described in \cite{RN63}. We simulates a camouflage attack scenario on the dataset \textit{Appliance} (\url{https://amazon-reviews-2023.github.io/}), which contains 602,777 reviews on 30,459 products by 515,650 users. We inject a fraud block consisting of 500 fake users, 500 fake products, along with 10,000 fake reviews by randomly connect fake users and fake products, and 10,000 camouflage reviews by randomly connect fake users and real products. 
	We adapt \mdbp using the approach outlined in \cite{RN71} to identify the top 2000 $k$-MDBs on the attacked graph, treating all users and products in these $k$-MDBs as fake. For comparison, we implement four baseline methods: maximum biclique \cite{RN13}, maximum $k$-biplex \cite{RN71} and $(\alpha, \beta)$-core\cite{RN83}, where the $(\alpha,\beta)$-core is the largest subgraph in which the vertex degrees in left and right side are no less than $\alpha$ and $\beta$, respectively.  To ensure fairness, we identify the top 2000 largest maximum biclique and maximum $k$-biplex as fake items. We also incorporate a widely used fraud detection method, \fraudar \cite{RN63}, which is a densest-subgraph-based algorithm that identifies suspicious blocks of nodes by iteratively removing edges with the lowest likelihood of being part of a fraudulent structure, thereby resulting in several suspicious subgraphs.
	
	Let $\theta_U$ and $\theta_V$ denote the size thresholds of users and products, respectively. We use $\theta_U$ and $\theta_V$ to constrain the side size of $k$-MDB, maximum $k$-biplex and maximum biclique. Let $n$ denote the number of blocks that \fraudar needs to detect, which is a key parameter that affects the detection accuracy of \fraudar. We test the F1-score ($\frac{2\cdot \mathrm{precision}\cdot \mathrm{recall}}{\mathrm{precision}+\mathrm{recall}}$) of each methods with varying $\theta_V$, $\alpha$ and $n$ from 1 to 7, where we fix $\theta_U=\beta=3$. The results are shown in Fig. \ref{fig:fd}a. As illustrated, when $\theta_V\ge 5$, 2-MDB consistently outperforms other methods. Additionally, the F1-score of 2-MDB rises over 1-MDB while 2-biplex drops below 1-biplex, as $k$-MDB offers finer-grained density relaxation than $k$-biplex with increasing $k$. For $\theta_V\le 4$, biclique and \fraudar outperform $k$-MDB and $k$-biplex, as the latter two are too sparse and skewed to provide meaningful structure. However, the F1-score of biclique declines when $\theta_V\ge 5$, as the strict definition of biclique limits its ability to detect fake items, resulting in lower recall. In contrast, $k$-MDB and $k$-biplex, with their more relaxed definitions, still maintain high F1-scores.

    We also evaluate the running time and the quality of subgraphs generated by each method, utilizing the widely adopted modularity ($\frac{1}{2m}\sum_{u\in U_S,v\in V_S}(1-\frac{d(u)d(v)}{2m})$) metric, which quantifies the strength of internal connectivity relative to a random distribution and serves as a key indicator for assessing the community structure within the subgraph. Fig. \ref{fig:fd}b illustrates the optimal performance of each method across all parameter settings. As observed, 2-MDB achieves the highest modularity value, indicating the high quality of the subgraphs it produces, which consequently results in a higher F1 score for fraud detection. Additionally, although 1-biplex has a F1-score close to 2-MDB, its running time is significantly higher than that of 2-MDB. These results demonstrate the high effectiveness and efficiency of our solutions for fraud detection applications.
	
	\begin{figure}[!t]
		\centering
		
		\begin{minipage}{0.54\linewidth}
			\centering
			\includegraphics[width=1\linewidth,height=2.6cm]{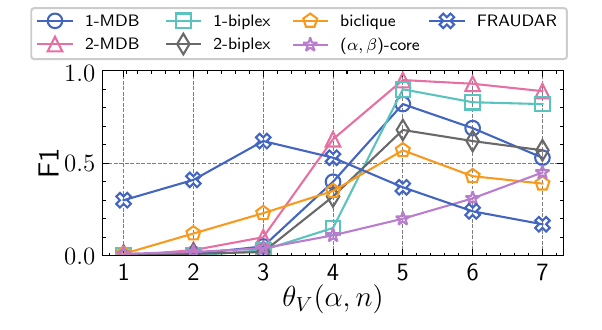}
			
		\end{minipage}
		\hfill
		\begin{minipage}{0.45\linewidth}
			\centering
			\smaller[2]
			\setlength{\tabcolsep}{2.5pt}
			\renewcommand{\arraystretch}{1.2}
			
			\begin{tabular}{lrrr}
				\hline
				Method	&	F1	&	Modularity &	Time(s) \\
				\hline
				1-MDB	&	0.82&	0.33&	0.30\\
				2-MDB	&	\textbf{0.95}&	\textbf{0.45}&	15.44\\
				1-biplex	&	0.92&	0.37&	35.15\\
				2-biplex	&	0.78&	0.24&	1187.82\\
				biclique	&	0.57&	0.15&	0.23\\
				($\alpha,\beta$)-core	&	0.45&	0.04&	0.14\\
				FRAUDAR	&	0.72&	0.27&	17.60\\
				\hline
			\end{tabular}
		\end{minipage}
		\vfill
		\begin{minipage}{0.54\linewidth}
			\small
			\centering \textbf{(a) F1-score with varying parameters}
		\end{minipage}
		\hfill
		\begin{minipage}{0.45\linewidth}
			\small
			\centering \textbf{(b) Optimal performance}
		\end{minipage}
		
		\setlength{\belowcaptionskip}{-0cm}
		\caption{Performance on fraud detection task}
		\label{fig:fd}
		\vspace{-20pt}
	\end{figure}

	\section{Related Work}
	
	\noindent\textbf{Maximum $k$-defective clique search.} As a problem related to our work, maximum $k$-defective clique search has received considerable attention in recent years, resulting in the development of many solutions and algorithms \cite{RN12,RN11,RN10,RN9,RN7,RN6,RN5,RN4,RN3,RN2}. Specifically, Trukhanov et al. \cite{RN10} proposed the first exact algorithm for detecting a maximum $k$-defective clique with a Russian Doll search scheme. Subsequently, Gschwinda et al. \cite{RN11} improved the algorithm with a new preprocessing technique and a better verification procedure. These algorithms are derived from more general clique relaxation search frameworks. There are many studies focusing on the maximum $k$-defective clique. Particularly, Chen et al. \cite{RN6} proposed a branch-and-bound algorithm with a time complexity below $O^*(2^n)$, and introduced a coloring-based upper bound to further improve the efficiency. Gao et al. \cite{RN5} proposed another branch-and-bound algorithm with new vertex and edge reduction rules, achieving better practical performance. Chang \cite{RN3} proposed an algorithm using a non-fully-adjacent-first branching rule, achieving tighter time complexity than the solutions in \cite{RN6}. Subsequently, Chang \cite{RN4} further enhanced the algorithm with an ordering technique, resulting in better time complexity and experimental results. Dai et al. \cite{RN2} proposed a new branch-and-bound algorithm equipped with a pivot-based branching technique and a series of optimization techniques, attaining the best time complexity and experimental results to date. Nevertheless, it is difficult to adapt these algorithms for the $k$-MDB search problem, which is primarily due to the different definitions of maximum.
	
	\noindent\textbf{Maximum biclique search.} The maximum biclique search problem has been extensively studied in recent years \cite{RN22,RN19,RN17,RN16,RN14,RN13,RN23,RN28,RN35}. These studies can be classified into three categories. The first category is maximum edge biclique search, which aims to find a biclique with maximum number of edges. Peeters et al. \cite{RN17} proved the NP-hardness of the problem, and Amb\"{u}hl et al. \cite{RN79} showed the inapproximability of the problem. Shaham et al. \cite{RN19} proposed a probabilistic algorithm for maximum edge biclique search utilizing a Monte-Carlo subspace clustering method. Shahinpour et al. \cite{RN20} provided an integer programming method with a scale reduction technique. Lyu et al. \cite{RN13} proposed a branch-and-bound algorithm with a progressive bounding framework and several graph reduction methods, achieving maximum edge biclique search on billion-scale graphs. The second category is maximum vertex biclique search, which aims to identify a biclique with maximum number of vertices. Garey et al. \cite{RN36} shown that a maximum vertex biclique can be found in polynomial time using a minimum-cut algorithm. The third category is maximum balanced biclique search, which intends to find a maximum biclique with an equal number of vertices on both sides. Yuan et al. \cite{RN28} proposed an evolutionary algorithm for the problem with structure mutation. Wang et al. \cite{RN24} proposed a local search framework with four optimization heuristics. Zhou et al. \cite{RN32} proposed an upper bound propagation served for two algorithms, which are respectively based on branch-and-bound and integer linear programming. Chen et al. \cite{RN23} introduced two efficient branch-and-bound algorithms utilizing a bipartite sparsity measurement, targeting small dense and large sparse graphs, achieving improved time complexities and better practical performance. All these mentioned techniques are tailored for maximum biclique search and cannot be directly used to solve our $k$-MDB search problem. 
	
	
	\vspace{-3pt}
	
	\section{Conclusion}
	
	In this paper, we investigate the problem of finding a maximum $k$-defective biclique in a bipartite graph. We propose two novel algorithms to tackle the problem. The first algorithm employs a newly-designed binary branching strategy that reduces the time complexity to below $O^*(2^n)$, while the second algorithm utilizes a novel pivoting-based branching strategy, achieving a notably lower time complexity. Additionally, we propose a series of nontrivial optimization techniques to further improve the efficiency of our algorithms. Extensive experiments conducted on 10 real-word graphs demonstrate the notable effectiveness and efficiency of our algorithms and optimization techniques.

	\bibliographystyle{ACM-Reference-Format}
	\balance\bibliography{refs}


\begin{thebibliography}{54}


\ifx \showCODEN    \undefined \def \showCODEN     #1{\unskip}     \fi
\ifx \showISBNx    \undefined \def \showISBNx     #1{\unskip}     \fi
\ifx \showISBNxiii \undefined \def \showISBNxiii  #1{\unskip}     \fi
\ifx \showISSN     \undefined \def \showISSN      #1{\unskip}     \fi
\ifx \showLCCN     \undefined \def \showLCCN      #1{\unskip}     \fi
\ifx \shownote     \undefined \def \shownote      #1{#1}          \fi
\ifx \showarticletitle \undefined \def \showarticletitle #1{#1}   \fi
\ifx \showURL      \undefined \def \showURL       {\relax}        \fi
\providecommand\bibfield[2]{#2}
\providecommand\bibinfo[2]{#2}
\providecommand\natexlab[1]{#1}
\providecommand\showeprint[2][]{arXiv:#2}

\bibitem[RN0(line)]%
        {RN0}
 \bibinfo{year}{Online}\natexlab{}.
\newblock \bibinfo{booktitle}{\emph{Maximum Defective Biclique Search in Large
  Bipartite Graphs (full version)}}.
\newblock
\urldef\tempurl%
\url{https://github.com/dawhc/MaximumDefectiveBiclique/blob/main/docs/defbiclique-full.pdf}
\showURL{%
\tempurl}


\bibitem[Abidi et~al\mbox{.}(2020)]%
        {RN45}
\bibfield{author}{\bibinfo{person}{Aman Abidi}, \bibinfo{person}{Rui Zhou},
  \bibinfo{person}{Lu Chen}, {and} \bibinfo{person}{Chengfei Liu}.}
  \bibinfo{year}{2020}\natexlab{}.
\newblock \showarticletitle{Pivot-based Maximal Biclique Enumeration}. In
  \bibinfo{booktitle}{\emph{IJCAI}}. \bibinfo{pages}{3558--3564}.
\newblock


\bibitem[Alexe et~al\mbox{.}(2004)]%
        {RN46}
\bibfield{author}{\bibinfo{person}{Gabriela Alexe}, \bibinfo{person}{Sorin
  Alexe}, \bibinfo{person}{Yves Crama}, \bibinfo{person}{Stephan Foldes},
  \bibinfo{person}{Peter~L Hammer}, {and} \bibinfo{person}{Bruno Simeone}.}
  \bibinfo{year}{2004}\natexlab{}.
\newblock \showarticletitle{Consensus algorithms for the generation of all
  maximal bicliques}.
\newblock \bibinfo{journal}{\emph{Discrete Applied Mathematics}}
  \bibinfo{volume}{145}, \bibinfo{number}{1} (\bibinfo{year}{2004}),
  \bibinfo{pages}{11--21}.
\newblock
\showISSN{0166-218X}


\bibitem[Alzahrani and Horadam(2019)]%
        {RN58}
\bibfield{author}{\bibinfo{person}{Taher Alzahrani} {and}
  \bibinfo{person}{Kathy Horadam}.} \bibinfo{year}{2019}\natexlab{}.
\newblock \showarticletitle{Finding maximal bicliques in bipartite networks
  using node similarity}.
\newblock \bibinfo{journal}{\emph{Applied Network Science}}
  \bibinfo{volume}{4}, \bibinfo{number}{1} (\bibinfo{year}{2019}),
  \bibinfo{pages}{21}.
\newblock
\showISSN{2364-8228}


\bibitem[Amb\"{u}hl et~al\mbox{.}(2011)]%
        {RN79}
\bibfield{author}{\bibinfo{person}{Christoph Amb\"{u}hl},
  \bibinfo{person}{Monaldo Mastrolilli}, {and} \bibinfo{person}{Ola Svensson}.}
  \bibinfo{year}{2011}\natexlab{}.
\newblock \showarticletitle{Inapproximability results for maximum edge
  biclique, minimum linear arrangement, and sparsest cut}.
\newblock \bibinfo{journal}{\emph{SIAM J. Comput.}} \bibinfo{volume}{40},
  \bibinfo{number}{2} (\bibinfo{year}{2011}), \bibinfo{pages}{567--596}.
\newblock
\showISSN{0097-5397}


\bibitem[Braun et~al\mbox{.}(2017)]%
        {RN68}
\bibfield{author}{\bibinfo{person}{Fabian Braun}, \bibinfo{person}{Olivier
  Caelen}, \bibinfo{person}{Evgueni~N. Smirnov}, \bibinfo{person}{Steven Kelk},
  {and} \bibinfo{person}{Bertrand Lebichot}.} \bibinfo{year}{2017}\natexlab{}.
\newblock \showarticletitle{Improving Card Fraud Detection Through Suspicious
  Pattern Discovery}. In \bibinfo{booktitle}{\emph{Advances in Artificial
  Intelligence: From Theory to Practice}},
  \bibfield{editor}{\bibinfo{person}{Salem Benferhat}, \bibinfo{person}{Karim
  Tabia}, {and} \bibinfo{person}{Moonis Ali}} (Eds.).
  \bibinfo{publisher}{Springer International Publishing},
  \bibinfo{pages}{181--190}.
\newblock
\showISBNx{978-3-319-60045-1}


\bibitem[Cerinšek and Batagelj(2015)]%
        {RN83}
\bibfield{author}{\bibinfo{person}{Monika Cerinšek} {and}
  \bibinfo{person}{Vladimir Batagelj}.} \bibinfo{year}{2015}\natexlab{}.
\newblock \showarticletitle{Generalized two-mode cores}.
\newblock \bibinfo{journal}{\emph{Social Networks}}  \bibinfo{volume}{42}
  (\bibinfo{year}{2015}), \bibinfo{pages}{80--87}.
\newblock
\showISSN{0378-8733}


\bibitem[Chang(2023)]%
        {RN3}
\bibfield{author}{\bibinfo{person}{Lijun Chang}.}
  \bibinfo{year}{2023}\natexlab{}.
\newblock \showarticletitle{Efficient Maximum k-Defective Clique Computation
  with Improved Time Complexity}.
\newblock \bibinfo{journal}{\emph{Proceedings of the ACM on Management of
  Data}} \bibinfo{volume}{1}, \bibinfo{number}{3} (\bibinfo{year}{2023}),
  \bibinfo{pages}{1--26}.
\newblock
\showISSN{2836-6573}


\bibitem[Chang(2024)]%
        {RN4}
\bibfield{author}{\bibinfo{person}{Lijun Chang}.}
  \bibinfo{year}{2024}\natexlab{}.
\newblock \showarticletitle{Maximum Defective Clique Computation: Improved Time
  Complexities and Practical Performance}.
\newblock \bibinfo{journal}{\emph{arXiv preprint arXiv:2403.07561}}
  (\bibinfo{year}{2024}).
\newblock


\bibitem[Chen et~al\mbox{.}(2021a)]%
        {RN23}
\bibfield{author}{\bibinfo{person}{Lu Chen}, \bibinfo{person}{Chengfei Liu},
  \bibinfo{person}{Rui Zhou}, \bibinfo{person}{Jiajie Xu}, {and}
  \bibinfo{person}{Jianxin Li}.} \bibinfo{year}{2021}\natexlab{a}.
\newblock \showarticletitle{Efficient exact algorithms for maximum balanced
  biclique search in bipartite graphs}. In
  \bibinfo{booktitle}{\emph{Proceedings of the 2021 International Conference on
  Management of Data}}. \bibinfo{pages}{248--260}.
\newblock


\bibitem[Chen et~al\mbox{.}(2022)]%
        {RN44}
\bibfield{author}{\bibinfo{person}{Lu Chen}, \bibinfo{person}{Chengfei Liu},
  \bibinfo{person}{Rui Zhou}, \bibinfo{person}{Jiajie Xu}, {and}
  \bibinfo{person}{Jianxin Li}.} \bibinfo{year}{2022}\natexlab{}.
\newblock \showarticletitle{Efficient maximal biclique enumeration for large
  sparse bipartite graphs}.
\newblock \bibinfo{journal}{\emph{Proceedings of the VLDB Endowment}}
  \bibinfo{volume}{15}, \bibinfo{number}{8} (\bibinfo{year}{2022}),
  \bibinfo{pages}{1559--1571}.
\newblock
\showISSN{2150-8097}


\bibitem[Chen et~al\mbox{.}(2021b)]%
        {RN6}
\bibfield{author}{\bibinfo{person}{Xiaoyu Chen}, \bibinfo{person}{Yi Zhou},
  \bibinfo{person}{Jin-Kao Hao}, {and} \bibinfo{person}{Mingyu Xiao}.}
  \bibinfo{year}{2021}\natexlab{b}.
\newblock \showarticletitle{Computing maximum k-defective cliques in massive
  graphs}.
\newblock \bibinfo{journal}{\emph{Computers \& Operations Research}}
  \bibinfo{volume}{127} (\bibinfo{year}{2021}), \bibinfo{pages}{105131}.
\newblock
\showISSN{0305-0548}


\bibitem[Dai et~al\mbox{.}(2024a)]%
        {RN2}
\bibfield{author}{\bibinfo{person}{Qiangqiang Dai}, \bibinfo{person}{Ronghua
  Li}, \bibinfo{person}{Donghang Cui}, {and} \bibinfo{person}{Guoren Wang}.}
  \bibinfo{year}{2024}\natexlab{a}.
\newblock \showarticletitle{Theoretically and Practically Efficient Maximum
  Defective Clique Search}.
\newblock \bibinfo{journal}{\emph{Proceedings of the ACM on Management of
  Data}} \bibinfo{volume}{2}, \bibinfo{number}{4} (\bibinfo{year}{2024}),
  \bibinfo{pages}{1--27}.
\newblock
\showISSN{2836-6573}


\bibitem[Dai et~al\mbox{.}(2024b)]%
        {RN75}
\bibfield{author}{\bibinfo{person}{Qiangqiang Dai}, \bibinfo{person}{Rong-Hua
  Li}, \bibinfo{person}{Donghang Cui}, \bibinfo{person}{Meihao Liao},
  \bibinfo{person}{Yu-Xuan Qiu}, {and} \bibinfo{person}{Guoren Wang}.}
  \bibinfo{year}{2024}\natexlab{b}.
\newblock \showarticletitle{Efficient Maximal Biplex Enumerations with Improved
  Worst-Case Time Guarantee}.
\newblock \bibinfo{journal}{\emph{Proceedings of the ACM on Management of
  Data}} \bibinfo{volume}{2}, \bibinfo{number}{3} (\bibinfo{year}{2024}),
  \bibinfo{pages}{1--26}.
\newblock
\showISSN{2836-6573}


\bibitem[Dai et~al\mbox{.}(2023a)]%
        {RN64}
\bibfield{author}{\bibinfo{person}{Qiangqiang Dai}, \bibinfo{person}{Rong-Hua
  Li}, \bibinfo{person}{Meihao Liao}, {and} \bibinfo{person}{Guoren Wang}.}
  \bibinfo{year}{2023}\natexlab{a}.
\newblock \showarticletitle{Maximal defective clique enumeration}.
\newblock \bibinfo{journal}{\emph{Proceedings of the ACM on Management of
  Data}} \bibinfo{volume}{1}, \bibinfo{number}{1} (\bibinfo{year}{2023}),
  \bibinfo{pages}{1--26}.
\newblock
\showISSN{2836-6573}


\bibitem[Dai et~al\mbox{.}(2023b)]%
        {RN66}
\bibfield{author}{\bibinfo{person}{Qiangqiang Dai}, \bibinfo{person}{Rong-Hua
  Li}, \bibinfo{person}{Xiaowei Ye}, \bibinfo{person}{Meihao Liao},
  \bibinfo{person}{Weipeng Zhang}, {and} \bibinfo{person}{Guoren Wang}.}
  \bibinfo{year}{2023}\natexlab{b}.
\newblock \showarticletitle{Hereditary Cohesive Subgraphs Enumeration on
  Bipartite Graphs: The Power of Pivot-based Approaches}.
\newblock \bibinfo{journal}{\emph{Proceedings of the ACM on Management of
  Data}} \bibinfo{volume}{1}, \bibinfo{number}{2} (\bibinfo{year}{2023}),
  \bibinfo{pages}{1--26}.
\newblock
\showISSN{2836-6573}


\bibitem[Dawande et~al\mbox{.}(2001)]%
        {RN35}
\bibfield{author}{\bibinfo{person}{Milind Dawande}, \bibinfo{person}{Pinar
  Keskinocak}, \bibinfo{person}{Jayashankar~M. Swaminathan}, {and}
  \bibinfo{person}{Sridhar~R. Tayur}.} \bibinfo{year}{2001}\natexlab{}.
\newblock \showarticletitle{On Bipartite and Multipartite Clique Problems}.
\newblock \bibinfo{journal}{\emph{J. Algorithms}} \bibinfo{volume}{41},
  \bibinfo{number}{2} (\bibinfo{year}{2001}), \bibinfo{pages}{388--403}.
\newblock


\bibitem[Eren et~al\mbox{.}(2013)]%
        {RN38}
\bibfield{author}{\bibinfo{person}{Kemal Eren}, \bibinfo{person}{Mehmet
  Deveci}, \bibinfo{person}{Onur Küçüktunç}, {and} \bibinfo{person}{Ümit~V
  Çatalyürek}.} \bibinfo{year}{2013}\natexlab{}.
\newblock \showarticletitle{A comparative analysis of biclustering algorithms
  for gene expression data}.
\newblock \bibinfo{journal}{\emph{Briefings in bioinformatics}}
  \bibinfo{volume}{14}, \bibinfo{number}{3} (\bibinfo{year}{2013}),
  \bibinfo{pages}{279--292}.
\newblock
\showISSN{1477-4054}


\bibitem[Feng et~al\mbox{.}(2018)]%
        {RN22}
\bibfield{author}{\bibinfo{person}{Qilong Feng}, \bibinfo{person}{Shaohua Li},
  \bibinfo{person}{Zeyang Zhou}, {and} \bibinfo{person}{Jianxin Wang}.}
  \bibinfo{year}{2018}\natexlab{}.
\newblock \showarticletitle{Parameterized algorithms for edge biclique and
  related problems}.
\newblock \bibinfo{journal}{\emph{Theoretical Computer Science}}
  \bibinfo{volume}{734} (\bibinfo{year}{2018}), \bibinfo{pages}{105--118}.
\newblock
\showISSN{0304-3975}


\bibitem[Fomin and Kaski(2013)]%
        {RN76}
\bibfield{author}{\bibinfo{person}{Fedor~V Fomin} {and}
  \bibinfo{person}{Petteri Kaski}.} \bibinfo{year}{2013}\natexlab{}.
\newblock \showarticletitle{Exact exponential algorithms}.
\newblock \bibinfo{journal}{\emph{Commun. ACM}} \bibinfo{volume}{56},
  \bibinfo{number}{3} (\bibinfo{year}{2013}), \bibinfo{pages}{80--88}.
\newblock
\showISSN{0001-0782}


\bibitem[Gao et~al\mbox{.}(2022)]%
        {RN5}
\bibfield{author}{\bibinfo{person}{Jian Gao}, \bibinfo{person}{Zhenghang Xu},
  \bibinfo{person}{Ruizhi Li}, {and} \bibinfo{person}{Minghao Yin}.}
  \bibinfo{year}{2022}\natexlab{}.
\newblock \showarticletitle{An exact algorithm with new upper bounds for the
  maximum k-defective clique problem in massive sparse graphs}. In
  \bibinfo{booktitle}{\emph{Proceedings of the AAAI Conference on Artificial
  Intelligence}}, Vol.~\bibinfo{volume}{36}. \bibinfo{pages}{10174--10183}.
\newblock
\showISBNx{2374-3468}


\bibitem[Garey and Johnson(1979)]%
        {RN36}
\bibfield{author}{\bibinfo{person}{Michael~R Garey} {and}
  \bibinfo{person}{David~S Johnson}.} \bibinfo{year}{1979}\natexlab{}.
\newblock \bibinfo{booktitle}{\emph{Computers and intractability}}.
  Vol.~\bibinfo{volume}{174}.
\newblock \bibinfo{publisher}{freeman San Francisco}.
\newblock


\bibitem[Gillis and Glineur(2014)]%
        {RN14}
\bibfield{author}{\bibinfo{person}{Nicolas Gillis} {and}
  \bibinfo{person}{François Glineur}.} \bibinfo{year}{2014}\natexlab{}.
\newblock \showarticletitle{A continuous characterization of the maximum-edge
  biclique problem}.
\newblock \bibinfo{journal}{\emph{Journal of Global Optimization}}
  \bibinfo{volume}{58}, \bibinfo{number}{3} (\bibinfo{year}{2014}),
  \bibinfo{pages}{439--464}.
\newblock
\showISSN{0925-5001}


\bibitem[Gschwind et~al\mbox{.}(2018)]%
        {RN11}
\bibfield{author}{\bibinfo{person}{Timo Gschwind}, \bibinfo{person}{Stefan
  Irnich}, {and} \bibinfo{person}{Isabel Podlinski}.}
  \bibinfo{year}{2018}\natexlab{}.
\newblock \showarticletitle{Maximum weight relaxed cliques and Russian Doll
  Search revisited}.
\newblock \bibinfo{journal}{\emph{Discrete Applied Mathematics}}
  \bibinfo{volume}{234} (\bibinfo{year}{2018}), \bibinfo{pages}{131--138}.
\newblock
\showISSN{0166-218X}


\bibitem[Hooi et~al\mbox{.}(2017)]%
        {RN65}
\bibfield{author}{\bibinfo{person}{Bryan Hooi}, \bibinfo{person}{Kijung Shin},
  \bibinfo{person}{Hyun~Ah Song}, \bibinfo{person}{Alex Beutel},
  \bibinfo{person}{Neil Shah}, {and} \bibinfo{person}{Christos Faloutsos}.}
  \bibinfo{year}{2017}\natexlab{}.
\newblock \showarticletitle{Graph-Based Fraud Detection in the Face of
  Camouflage}.
\newblock \bibinfo{journal}{\emph{ACM Trans. Knowl. Discov. Data}}
  \bibinfo{volume}{11}, \bibinfo{number}{4} (\bibinfo{year}{2017}),
  \bibinfo{pages}{Article 44}.
\newblock
\showISSN{1556-4681}


\bibitem[Hooi et~al\mbox{.}(2016)]%
        {RN63}
\bibfield{author}{\bibinfo{person}{Bryan Hooi}, \bibinfo{person}{Hyun~Ah Song},
  \bibinfo{person}{Alex Beutel}, \bibinfo{person}{Neil Shah},
  \bibinfo{person}{Kijung Shin}, {and} \bibinfo{person}{Christos Faloutsos}.}
  \bibinfo{year}{2016}\natexlab{}.
\newblock \showarticletitle{Fraudar: Bounding graph fraud in the face of
  camouflage}. In \bibinfo{booktitle}{\emph{Proceedings of the 22nd ACM SIGKDD
  international conference on knowledge discovery and data mining}}.
  \bibinfo{pages}{895--904}.
\newblock


\bibitem[Jin et~al\mbox{.}(2024)]%
        {RN7}
\bibfield{author}{\bibinfo{person}{Mingming Jin}, \bibinfo{person}{Jiongzhi
  Zheng}, {and} \bibinfo{person}{Kun He}.} \bibinfo{year}{2024}\natexlab{}.
\newblock \showarticletitle{KD-Club: An Efficient Exact Algorithm with New
  Coloring-based Upper Bound for the Maximum k-Defective Clique Problem}. In
  \bibinfo{booktitle}{\emph{Proceedings of the AAAI Conference on Artificial
  Intelligence}}, Vol.~\bibinfo{volume}{38}. \bibinfo{pages}{20735--20742}.
\newblock
\showISBNx{2374-3468}


\bibitem[Langston et~al\mbox{.}(2008)]%
        {RN69}
\bibfield{author}{\bibinfo{person}{MA Langston}, \bibinfo{person}{Elissa~J
  Chesler}, {and} \bibinfo{person}{Y Zhang}.} \bibinfo{year}{2008}\natexlab{}.
\newblock \showarticletitle{On finding bicliques in bipartite graphs: a novel
  algorithm with application to the integration of diverse biological data
  types}. In \bibinfo{booktitle}{\emph{Proceedings of the 41st Annual Hawaii
  International Conference on System Sciences (HICSS 2008)(HICSS)}},
  Vol.~\bibinfo{volume}{1}. \bibinfo{pages}{473}.
\newblock


\bibitem[Lehmann et~al\mbox{.}(2008)]%
        {RN57}
\bibfield{author}{\bibinfo{person}{Sune Lehmann}, \bibinfo{person}{Martin
  Schwartz}, {and} \bibinfo{person}{Lars~Kai Hansen}.}
  \bibinfo{year}{2008}\natexlab{}.
\newblock \showarticletitle{Biclique communities}.
\newblock \bibinfo{journal}{\emph{Physical Review E—Statistical, Nonlinear,
  and Soft Matter Physics}} \bibinfo{volume}{78}, \bibinfo{number}{1}
  (\bibinfo{year}{2008}), \bibinfo{pages}{016108}.
\newblock
\showISSN{1550-2376}


\bibitem[Ley(2002)]%
        {RN34}
\bibfield{author}{\bibinfo{person}{Michael Ley}.}
  \bibinfo{year}{2002}\natexlab{}.
\newblock \showarticletitle{The DBLP Computer Science Bibliography: Evolution,
  Research Issues, Perspectives} \emph{(\bibinfo{series}{String Processing and
  Information Retrieval})}. \bibinfo{publisher}{Springer Berlin Heidelberg},
  \bibinfo{pages}{1--10}.
\newblock
\showISBNx{978-3-540-45735-0}


\bibitem[Li et~al\mbox{.}(2023)]%
        {RN50}
\bibfield{author}{\bibinfo{person}{Jingdong Li}, \bibinfo{person}{Zhao Li},
  \bibinfo{person}{Xiaoling Wang}, \bibinfo{person}{Xingjian Lu},
  \bibinfo{person}{Ji Zhang}, {and} \bibinfo{person}{Hongyang Chen}.}
  \bibinfo{year}{2023}\natexlab{}.
\newblock \bibinfo{title}{GPU-Accelerated Maximal Bicliques Mining Framework
  for Large E-commerce Networks}.
\newblock \bibinfo{numpages}{539-544}~pages.
\newblock


\bibitem[Li et~al\mbox{.}(2007)]%
        {RN81}
\bibfield{author}{\bibinfo{person}{Jinyan Li}, \bibinfo{person}{Guimei Liu},
  \bibinfo{person}{Haiquan Li}, {and} \bibinfo{person}{Limsoon Wong}.}
  \bibinfo{year}{2007}\natexlab{}.
\newblock \showarticletitle{Maximal biclique subgraphs and closed pattern pairs
  of the adjacency matrix: A one-to-one correspondence and mining algorithms}.
\newblock \bibinfo{journal}{\emph{IEEE Transactions on Knowledge and Data
  Engineering}} \bibinfo{volume}{19}, \bibinfo{number}{12}
  (\bibinfo{year}{2007}), \bibinfo{pages}{1625--1637}.
\newblock
\showISSN{1041-4347}


\bibitem[Li et~al\mbox{.}(2008)]%
        {RN70}
\bibfield{author}{\bibinfo{person}{Jinyan Li}, \bibinfo{person}{Kelvin Sim},
  \bibinfo{person}{Guimei Liu}, {and} \bibinfo{person}{Limsoon Wong}.}
  \bibinfo{year}{2008}\natexlab{}.
\newblock \showarticletitle{Maximal quasi-bicliques with balanced noise
  tolerance: Concepts and co-clustering applications}. In
  \bibinfo{booktitle}{\emph{Proceedings of the 2008 SIAM International
  Conference on Data Mining}}. \bibinfo{publisher}{SIAM},
  \bibinfo{pages}{72--83}.
\newblock


\bibitem[Liu et~al\mbox{.}(2006)]%
        {RN47}
\bibfield{author}{\bibinfo{person}{Guimei Liu}, \bibinfo{person}{Kelvin Sim},
  {and} \bibinfo{person}{Jinyan Li}.} \bibinfo{year}{2006}\natexlab{}.
\newblock \showarticletitle{Efficient mining of large maximal bicliques}. In
  \bibinfo{booktitle}{\emph{Data Warehousing and Knowledge Discovery: 8th
  International Conference, DaWaK 2006, Krakow, Poland, September 4-8, 2006.
  Proceedings 8}}. \bibinfo{publisher}{Springer}, \bibinfo{pages}{437--448}.
\newblock
\showISBNx{3540377360}


\bibitem[Luo et~al\mbox{.}(2024)]%
        {RN9}
\bibfield{author}{\bibinfo{person}{Chunyu Luo},
  \bibinfo{person}{Yi~Zhou~Zhengren Wang}, {and} \bibinfo{person}{Mingyu
  Xiao}.} \bibinfo{year}{2024}\natexlab{}.
\newblock \showarticletitle{A Faster Branching Algorithm for the Maximum $ k
  $-Defective Clique Problem}.
\newblock \bibinfo{journal}{\emph{arXiv preprint arXiv:2407.16588}}
  (\bibinfo{year}{2024}).
\newblock


\bibitem[Lyu et~al\mbox{.}(2020)]%
        {RN13}
\bibfield{author}{\bibinfo{person}{Bingqing Lyu}, \bibinfo{person}{Lu Qin},
  \bibinfo{person}{Xuemin Lin}, \bibinfo{person}{Ying Zhang},
  \bibinfo{person}{Zhengping Qian}, {and} \bibinfo{person}{Jingren Zhou}.}
  \bibinfo{year}{2020}\natexlab{}.
\newblock \showarticletitle{Maximum biclique search at billion scale}.
\newblock \bibinfo{journal}{\emph{Proceedings of the VLDB Endowment}}
  (\bibinfo{year}{2020}).
\newblock
\showISSN{2150-8097}


\bibitem[Madeira and Oliveira(2004)]%
        {RN80}
\bibfield{author}{\bibinfo{person}{Sara~C Madeira} {and}
  \bibinfo{person}{Arlindo~L Oliveira}.} \bibinfo{year}{2004}\natexlab{}.
\newblock \showarticletitle{Biclustering algorithms for biological data
  analysis: a survey}.
\newblock \bibinfo{journal}{\emph{IEEE/ACM transactions on computational
  biology and bioinformatics}} \bibinfo{volume}{1}, \bibinfo{number}{1}
  (\bibinfo{year}{2004}), \bibinfo{pages}{24--45}.
\newblock
\showISSN{1545-5963}


\bibitem[Peeters(2003)]%
        {RN17}
\bibfield{author}{\bibinfo{person}{René Peeters}.}
  \bibinfo{year}{2003}\natexlab{}.
\newblock \showarticletitle{The maximum edge biclique problem is NP-complete}.
\newblock \bibinfo{journal}{\emph{Discrete Applied Mathematics}}
  \bibinfo{volume}{131}, \bibinfo{number}{3} (\bibinfo{year}{2003}),
  \bibinfo{pages}{651--654}.
\newblock
\showISSN{0166-218X}


\bibitem[Prelić et~al\mbox{.}(2006)]%
        {RN39}
\bibfield{author}{\bibinfo{person}{Amela Prelić}, \bibinfo{person}{Stefan
  Bleuler}, \bibinfo{person}{Philip Zimmermann}, \bibinfo{person}{Anja Wille},
  \bibinfo{person}{Peter Bühlmann}, \bibinfo{person}{Wilhelm Gruissem},
  \bibinfo{person}{Lars Hennig}, \bibinfo{person}{Lothar Thiele}, {and}
  \bibinfo{person}{Eckart Zitzler}.} \bibinfo{year}{2006}\natexlab{}.
\newblock \showarticletitle{A systematic comparison and evaluation of
  biclustering methods for gene expression data}.
\newblock \bibinfo{journal}{\emph{Bioinformatics}} \bibinfo{volume}{22},
  \bibinfo{number}{9} (\bibinfo{year}{2006}), \bibinfo{pages}{1122--1129}.
\newblock
\showISSN{1367-4811}


\bibitem[Shaham et~al\mbox{.}(2016)]%
        {RN19}
\bibfield{author}{\bibinfo{person}{Eran Shaham}, \bibinfo{person}{Honghai Yu},
  {and} \bibinfo{person}{Xiao-Li Li}.} \bibinfo{year}{2016}\natexlab{}.
\newblock \showarticletitle{On finding the maximum edge biclique in a bipartite
  graph: a subspace clustering approach}. In
  \bibinfo{booktitle}{\emph{Proceedings of the 2016 SIAM International
  Conference on Data Mining}}. \bibinfo{publisher}{SIAM},
  \bibinfo{pages}{315--323}.
\newblock


\bibitem[Shahinpour et~al\mbox{.}(2017)]%
        {RN20}
\bibfield{author}{\bibinfo{person}{Shahram Shahinpour}, \bibinfo{person}{Shirin
  Shirvani}, \bibinfo{person}{Zeynep Ertem}, {and} \bibinfo{person}{Sergiy
  Butenko}.} \bibinfo{year}{2017}\natexlab{}.
\newblock \showarticletitle{Scale reduction techniques for computing maximum
  induced bicliques}.
\newblock \bibinfo{journal}{\emph{Algorithms}} \bibinfo{volume}{10},
  \bibinfo{number}{4} (\bibinfo{year}{2017}), \bibinfo{pages}{113}.
\newblock
\showISSN{1999-4893}


\bibitem[Stozhkov et~al\mbox{.}(2022)]%
        {RN12}
\bibfield{author}{\bibinfo{person}{Vladimir Stozhkov}, \bibinfo{person}{Austin
  Buchanan}, \bibinfo{person}{Sergiy Butenko}, {and} \bibinfo{person}{Vladimir
  Boginski}.} \bibinfo{year}{2022}\natexlab{}.
\newblock \showarticletitle{Continuous cubic formulations for cluster detection
  problems in networks}.
\newblock \bibinfo{journal}{\emph{Mathematical Programming}}
  \bibinfo{volume}{196}, \bibinfo{number}{1} (\bibinfo{year}{2022}),
  \bibinfo{pages}{279--307}.
\newblock
\showISSN{0025-5610}


\bibitem[Sözdinler and Özturan(2018)]%
        {RN16}
\bibfield{author}{\bibinfo{person}{Melih Sözdinler} {and} \bibinfo{person}{Can
  Özturan}.} \bibinfo{year}{2018}\natexlab{}.
\newblock \showarticletitle{Finding maximum edge biclique in bipartite networks
  by integer programming}. In \bibinfo{booktitle}{\emph{2018 IEEE International
  Conference on Computational Science and Engineering (CSE)}}.
  \bibinfo{publisher}{IEEE}, \bibinfo{pages}{132--137}.
\newblock
\showISBNx{1538676494}


\bibitem[Trukhanov et~al\mbox{.}(2013)]%
        {RN10}
\bibfield{author}{\bibinfo{person}{Svyatoslav Trukhanov},
  \bibinfo{person}{Chitra Balasubramaniam}, \bibinfo{person}{Balabhaskar
  Balasundaram}, {and} \bibinfo{person}{Sergiy Butenko}.}
  \bibinfo{year}{2013}\natexlab{}.
\newblock \showarticletitle{Algorithms for detecting optimal hereditary
  structures in graphs, with application to clique relaxations}.
\newblock \bibinfo{journal}{\emph{Computational Optimization and Applications}}
   \bibinfo{volume}{56} (\bibinfo{year}{2013}), \bibinfo{pages}{113--130}.
\newblock
\showISSN{0926-6003}


\bibitem[Wang et~al\mbox{.}(2023)]%
        {RN102}
\bibfield{author}{\bibinfo{person}{Jianhua Wang}, \bibinfo{person}{Jianye
  Yang}, \bibinfo{person}{Ziyi Ma}, \bibinfo{person}{Chengyuan Zhang},
  \bibinfo{person}{Shiyu Yang}, {and} \bibinfo{person}{Wenjie Zhang}.}
  \bibinfo{year}{2023}\natexlab{}.
\newblock \showarticletitle{Efficient Maximal Biclique Enumeration on Large
  Uncertain Bipartite Graphs}.
\newblock \bibinfo{journal}{\emph{IEEE Trans. on Knowl. and Data Eng.}}
  \bibinfo{volume}{35}, \bibinfo{number}{12} (\bibinfo{year}{2023}),
  \bibinfo{pages}{12634–12648}.
\newblock
\showISSN{1041-4347}
\href{https://doi.org/10.1109/tkde.2023.3272110}{doi:\nolinkurl{10.1109/tkde.2023.3272110}}


\bibitem[Wang et~al\mbox{.}(2021)]%
        {RN67}
\bibfield{author}{\bibinfo{person}{Kai Wang}, \bibinfo{person}{Yiheng Hu},
  \bibinfo{person}{Xuemin Lin}, \bibinfo{person}{Wenjie Zhang},
  \bibinfo{person}{Lu Qin}, {and} \bibinfo{person}{Ying Zhang}.}
  \bibinfo{year}{2021}\natexlab{}.
\newblock \bibinfo{title}{A Cohesive Structure Based Bipartite Graph Analytics
  System}.
\newblock \bibinfo{numpages}{4799–4803}~pages.
\newblock


\bibitem[Wang et~al\mbox{.}(2022)]%
        {RN15}
\bibfield{author}{\bibinfo{person}{Kai Wang}, \bibinfo{person}{Wenjie Zhang},
  \bibinfo{person}{Xuemin Lin}, \bibinfo{person}{Lu Qin}, {and}
  \bibinfo{person}{Alexander Zhou}.} \bibinfo{year}{2022}\natexlab{}.
\newblock \showarticletitle{Efficient personalized maximum biclique search}. In
  \bibinfo{booktitle}{\emph{2022 IEEE 38th International Conference on Data
  Engineering (ICDE)}}. \bibinfo{publisher}{IEEE}, \bibinfo{pages}{498--511}.
\newblock
\showISBNx{1665408839}


\bibitem[Wang et~al\mbox{.}(2018)]%
        {RN24}
\bibfield{author}{\bibinfo{person}{Yiyuan Wang}, \bibinfo{person}{Shaowei Cai},
  {and} \bibinfo{person}{Minghao Yin}.} \bibinfo{year}{2018}\natexlab{}.
\newblock \showarticletitle{New heuristic approaches for maximum balanced
  biclique problem}.
\newblock \bibinfo{journal}{\emph{Information Sciences}}  \bibinfo{volume}{432}
  (\bibinfo{year}{2018}), \bibinfo{pages}{362--375}.
\newblock
\showISSN{0020-0255}


\bibitem[Yu et~al\mbox{.}(2006)]%
        {RN8}
\bibfield{author}{\bibinfo{person}{Haiyuan Yu}, \bibinfo{person}{Alberto
  Paccanaro}, \bibinfo{person}{Valery Trifonov}, {and} \bibinfo{person}{Mark
  Gerstein}.} \bibinfo{year}{2006}\natexlab{}.
\newblock \showarticletitle{Predicting interactions in protein networks by
  completing defective cliques}.
\newblock \bibinfo{journal}{\emph{Bioinformatics}} \bibinfo{volume}{22},
  \bibinfo{number}{7} (\bibinfo{year}{2006}), \bibinfo{pages}{823--829}.
\newblock
\showISSN{1367-4811}


\bibitem[Yu and Long(2023)]%
        {RN71}
\bibfield{author}{\bibinfo{person}{Kaiqiang Yu} {and} \bibinfo{person}{Cheng
  Long}.} \bibinfo{year}{2023}\natexlab{}.
\newblock \showarticletitle{Maximum k-Biplex Search on Bipartite Graphs: A
  Symmetric-BK Branching Approach}.
\newblock \bibinfo{journal}{\emph{Proc. ACM Manag. Data}} \bibinfo{volume}{1},
  \bibinfo{number}{1} (\bibinfo{year}{2023}), \bibinfo{pages}{Article 49}.
\newblock


\bibitem[Yu et~al\mbox{.}(2022)]%
        {RN72}
\bibfield{author}{\bibinfo{person}{Kaiqiang Yu}, \bibinfo{person}{Cheng Long},
  \bibinfo{person}{Shengxin Liu}, {and} \bibinfo{person}{Da Yan}.}
  \bibinfo{year}{2022}\natexlab{}.
\newblock \showarticletitle{Efficient Algorithms for Maximal k-Biplex
  Enumeration}. In \bibinfo{booktitle}{\emph{Proceedings of the 2022
  International Conference on Management of Data}}. \bibinfo{pages}{860–873}.
\newblock


\bibitem[Yuan et~al\mbox{.}(2015)]%
        {RN28}
\bibfield{author}{\bibinfo{person}{Bo Yuan}, \bibinfo{person}{Bin Li},
  \bibinfo{person}{Huanhuan Chen}, {and} \bibinfo{person}{Xin Yao}.}
  \bibinfo{year}{2015}\natexlab{}.
\newblock \showarticletitle{A New Evolutionary Algorithm with Structure
  Mutation for the Maximum Balanced Biclique Problem}.
\newblock \bibinfo{journal}{\emph{IEEE Transactions on Cybernetics}}
  \bibinfo{volume}{45}, \bibinfo{number}{5} (\bibinfo{year}{2015}),
  \bibinfo{pages}{1054--1067}.
\newblock


\bibitem[Zhou et~al\mbox{.}(2025)]%
        {RN105}
\bibfield{author}{\bibinfo{person}{Hongru Zhou}, \bibinfo{person}{Shengxin
  Liu}, {and} \bibinfo{person}{Ruidi Cao}.} \bibinfo{year}{2025}\natexlab{}.
\newblock \showarticletitle{Efficient Maximum Vertex (k,l)-Biplex Computation
  on Bipartite Graphs}.
\newblock \bibinfo{journal}{\emph{Tsinghua Science and Technology}}
  \bibinfo{volume}{30}, \bibinfo{number}{2} (\bibinfo{year}{2025}),
  \bibinfo{pages}{569--584}.
\newblock
\showISSN{1007-0214}
\href{https://doi.org/10.26599/TST.2024.9010009}{doi:\nolinkurl{10.26599/TST.2024.9010009}}


\bibitem[Zhou and Hao(2019)]%
        {RN32}
\bibfield{author}{\bibinfo{person}{Yi Zhou} {and} \bibinfo{person}{Jin-Kao
  Hao}.} \bibinfo{year}{2019}\natexlab{}.
\newblock \showarticletitle{Tabu search with graph reduction for finding
  maximum balanced bicliques in bipartite graphs}.
\newblock \bibinfo{journal}{\emph{Eng. Appl. Artif. Intell.}}
  \bibinfo{volume}{77} (\bibinfo{year}{2019}), \bibinfo{pages}{86--97}.
\newblock


\end{thebibliography}
	
	\let\baselinestretch\savedbaselinestretch
	
\end{document}